\documentclass[USenglish,11pt]{scrartcl}

\usepackage[utf8]{inputenc}
\usepackage[T1]{fontenc}
\usepackage{lmodern}
\usepackage{a4wide}

\usepackage{amsmath,amsthm,amsfonts,amssymb}
\usepackage{thmtools}
\usepackage{graphicx,caption,subcaption,wrapfig}
\usepackage[]{hyperref}
\usepackage{paralist}
\usepackage{enumitem}
\usepackage{csquotes}
\usepackage[ruled,linesnumbered]{algorithm2e}
\usepackage{booktabs}
\usepackage{xkeyval,xspace}
\usepackage{authblk}

\graphicspath{{./figures/}}

% Bibliography
%\usepackage[backend=bibtex,subentry]{biblatex}
%\bibliography{literature.bib}
\bibliographystyle{plainurl}% the recommended bibstyle

% Theorem environments
\newtheorem{theorem}{Theorem}[section]
\newtheorem{lemma}[theorem]{Lemma}
\newtheorem{corollary}[theorem]{Corollary}
\theoremstyle{definition}
\newtheorem{definition}[theorem]{Definition}

\numberwithin{equation}{section}

% Basic Notation
\newcommand{\NN}{\mathbb{N}}

\newcommand{\ZZ}{\mathbb{Z}}

\DeclareMathOperator{\OPT}{OPT}
\DeclareMathOperator{\SIZE}{SIZE}

\newcommand{\card}[1]{\left\vert{#1}\right\vert }
\newcommand{\bigO}[1]{\ensuremath{\mathcal{O}\left(#1\right)}}

\newcommand{\floor}[1]{\left\lfloor{#1}\right\rfloor}
\newcommand{\ceil}[1]{\left\lceil{#1}\right\rceil}
\newcommand{\floorS}[1]{\lfloor{#1}\rfloor}
\newcommand{\ceilS}[1]{\lceil{#1}\rceil}

\newcommand{\norm}[1]{\left\lVert#1\right\rVert_1}
\newcommand{\nnz}[1]{\mathrm{nnz}(#1)}

% Specific Notation
\DeclareMathOperator{\LIN}{LIN}
\DeclareMathOperator{\LOAD}{LOAD}

\newcommand{\con}{\mathit{con}}	% container assignment
\newcommand{\C}[1]{C_{#1}}	% container instance
\newcommand{\CR}[2]{C_{#1}^{#2}}	% rounded container instance

\newcommand{\wmax}{w_{max}}
\newcommand{\wmin}{w_{min}}

\newcommand{\fract}[1]{\mathit{frac}\left({#1}\right)}

\newcommand{\kGeneric}{\floor{\frac{\epsilon}{4 \omega h_B} \SIZE(I_L)}}
\newcommand{\wGeneric}{\floor{\log \frac{1}{\epsilon}} +1}
\newcommand{\ILg}{{I_L}^g}	% Set of large items of group g

\newcommand{\Repack}{\mathrm{Repack}}

\newcommand{\invA}{\ref{inv:a:categories}\xspace}
\newcommand{\invB}{\ref{inv:b:sorting}\xspace}
\newcommand{\invC}{\ref{inv:c:nrContainersBlockA}\xspace}
\newcommand{\invD}{\ref{inv:d:nrContainersBlockB}\xspace}
\newcommand{\invE}{\ref{inv:e:heightInterval}\xspace}

% Abbreviations
\newcommand*{\abbrdot}{.\@\xspace}

\newcommand*{\etal}{et~al\abbrdot}
\newcommand*{\ie}{i.\@\,e.\@\xspace}
\newcommand*{\eg}{e.\@\,g.\@\xspace}

\newcommand*{\sut}{s.t.\@\xspace} % such that

\author[1]{Klaus Jansen}
\author[1]{Kim-Manuel Klein}
\author[1]{Maria Kosche}
\author[2]{Leon Ladewig}
\affil[1]{Department of Computer Science, Kiel University, Germany\\
	\texttt{\{kj,kmk,mkos\}@informatik.uni-kiel.de}}
\affil[2]{Department of Computer Science, Technical University of Munich, Germany\\
	\texttt{ladewig@in.tum.de}}

\title{\vspace{-2.0cm} Online Strip Packing with Polynomial Migration 
	\footnote{An extended abstract of the paper has been published in APPROX 2017.
		This work was partially supported by DFG Project, "Robuste Online-Algorithmen f\"ur Scheduling- und Packungsprobleme", JA 612 /19-1, and by GIF-Project "Polynomial Migration for Online Scheduling"}}

\date{}

\begin{document}

\maketitle
\begin{abstract}
	\vspace{-10ex}
We consider the relaxed online strip packing problem: Rectangular items arrive online and have to be packed without rotations into a strip of fixed width such that the packing height is minimized. Thereby, repacking of previously packed items is allowed. The amount of repacking is measured by the migration factor, defined as the total size of repacked items divided by the size of the arriving item.
First, we show that no algorithm with constant migration factor can produce solutions with 
asymptotic ratio better than 4/3.
Against this background, we allow amortized migration, \ie to save migration for a later time step. As a main result, we present an AFPTAS with asymptotic ratio $1 + \bigO{\epsilon}$ for any $\epsilon > 0$ 
and amortized migration factor polynomial in $1 / \epsilon$. 
To our best knowledge, this is the first algorithm for online strip packing 
considered in a repacking model.
\end{abstract}

\section{Introduction}
%auto-ignore
% !TeX root = ../main.tex

In the classical \textit{strip packing} problem we are given a set of  
two-dimensional items with heights and widths bounded by~1 and a strip of 
infinite height and width~1. The goal is to find a packing
of all items into the strip without rotations such that no items overlap 
and the height of the packing is minimal.
In many practical scenarios the entire input is not known in advance.
Therefore, an interesting field of study is 
the online variant of the problem.  Here, items arrive over time and have 
to be packed immediately without knowing future items.
Following the terminology of \cite{gambosi2000algorithms} for the online
bin packing problem, in the \textit{relaxed online strip packing} problem
previous items may be repacked when a new item arrives.

There are different ways to measure the amount of repacking in a relaxed
online setting. 
Sanders, Sivadasan, and Skutella introduced the \textit{migration model} in \cite{sanders}
for online job scheduling on identical parallel machines as follows: 
When a new job of size $p_j$ arrives, jobs of total size $\mu p_j$ can be reassigned,
where $\mu$ is called the \textit{migration factor}.
In the context of online strip packing the migration factor $\mu$ ensures that
the total area of repacked items is at most $\mu$ times the area of the
arrived item. We call this the \textit{strict migration model}.

By a well known relation between strip packing and parallel job scheduling
\cite{hurink2007online}, any (online) strip packing algorithm applies to (online) scheduling
of parallel jobs. The latter problem is highly relevant \eg in computer systems
\cite{hurink2007online,steiger2003online,pruhs2004online}.

\paragraph{Preliminaries}
Since strip packing is NP-hard \cite{baker1980orthogonal},
research focuses on efficient approximation algorithms.
Let $A(I)$ denote the packing height of algorithm $A$ on input $I$
and $\OPT(I)$ the minimum packing height.
The \textit{absolute (approximation) ratio} is defined as
$\sup_{I} A(I) / \OPT(I)$  while the
\textit{asymptotic (approximation) ratio} as
$\lim \sup_{\OPT(I) \rightarrow \infty} \allowbreak A(I) / \OPT(I)$.
A family of algorithms $\{A_\epsilon\}_{\epsilon > 0}$ is called \textit{polynomial-time approximation
scheme (PTAS)}, when $A_\epsilon$ runs in polynomial-time in the input length and has absolute ratio $1+\epsilon$.
If the running time is also polynomial in $1 / \epsilon$, we call $\{A_\epsilon\}_{\epsilon > 0}$
\textit{fully polynomial-time approximation scheme (FPTAS)}. Similarly, the terms \textit{APTAS} and \textit{AFPTAS}
are defined using the asymptotic ratio.

All ratios of online algorithms in the following are \textit{competitive}, \ie online algorithms are compared with an 
optimal offline algorithm.

\subsection{Related Work}

\paragraph*{Offline}
Strip packing is one of the classical packing problems and receives ongoing 
research interest in the field of combinatorial optimization.
Since Baker, Coffman and Rivest \cite{baker1980orthogonal} gave the first 
algorithm with
asymptotic ratio 3, strip packing was investigated in many studies,
considering both asymptotic and absolute approximation ratios.
We refer the reader to \cite{christensen2017approximation} for a survey.
A well-known result is the AFPTAS by Kenyon and R{\'e}mila \cite{kenyon2000near}.
Concerning the absolute ratio, currently the best known algorithm of
ratio $5/3 + \epsilon$ for any $\epsilon > 0$ is by Harren \etal \cite{harren20145}.

An interesting result was given by Han \etal in 2007.
In \cite{han2007strip}, they studied the relation between bin packing
and strip packing and developed a framework between both problems.
For the offline case it is shown that any bin packing algorithm can be applied to
strip packing while maintaining the same asymptotic ratio.

\paragraph*{Online}
The first algorithm for online strip packing was given by Baker and Schwarz 
\cite{baker1983shelf} in 1983. Using the concept of shelf algorithms 
\cite{baker1980orthogonal}, they derived the algorithm \textit{First-Fit-Shelf}
with asymptotic ratio arbitrary close to $1.7$ and 
absolute ratio 6.99 (where all rectangles have height at most $h_{\max}~=~1$).
Later, Csirik and Woeginger \cite{csirik1997shelf}
showed a lower bound of $h_\infty \approx 1.69$ on the asymptotic ratio 
of shelf algorithms and gave an improved shelf algorithm with 
asymptotic ratio $h_\infty + \epsilon$ for any $\epsilon > 0$.

The framework of Han \etal \cite{han2007strip} is applicable in the online setting
if the online bin packing algorithm belongs to the class \textit{Super Harmonic}.
Thus, Seiden's online bin packing algorithm \textit{Harmonic++} \cite{seiden2002online}
implies an algorithm for online strip packing with asymptotic ratio 1.58889.
In 2007 and 2009, the concept of \textit{First-Fit-Shelf} by Baker and Schwarz 
was improved independently by two research groups, 
Hurink and Paulus \cite{hurink2007online} and Ye, Han, and Zhang \cite{ye2009note}. 
Both improve the absolute competitive ratio of from 6.99 to 6.6623 without a 
restriction on $h_{\max}$.
Further results on special variants of online strip packing were given by
Imeh \cite{imreh2001online} and Ye, Han, and Zhang~\cite{ye2011multi}.

On the negative side, there is no algorithm for online strip packing (without repacking)
with an asymptotic ratio better then 1.5404 since the lower bound in \cite{baloghBoundsJournal}
for online bin packing is also valid for online strip packing.
Regarding the absolute ratio, the first lower bound of 2
from \cite{brown1982lower}
was improved in several studies \cite{johannes2006scheduling,hurink2008online,kern2009note}.
Currently, the best known lower bound by Yu, Mao, and Xiao \cite{yu2016new}
is $(3+\sqrt{2})/2 \approx 2.618$.

\paragraph{Related results on the migration model}
Since its introduction by Sanders, Sivadasan, and Skutella \cite{sanders}, 
the migration model became increasingly popular.
In the context of online scheduling on identical machines,
Sanders, Sivadasan, and Skutella \cite{sanders} gave a PTAS
with migration factor $2^{\bigO{(1/\epsilon) \log^2 (1 / \epsilon)}}$ for the 
objective of minimizing the makespan.
Thereby, the migration factor in \cite{sanders} depends only on the
approximation ratio $\epsilon$ and not on the input size. Such algorithms
are called \textit{robust}.

Skutella and Verschae \cite{skutellaVerschaeJournal} studied scheduling
on identical machines while maximizing the minimum machine load, called \textit{machine covering}. 
They considered the \textit{fully dynamic} setting in which jobs are also allowed
to depart.
Skutella and Verschae showed that there is no PTAS for this problem in the migration model,
which is due to the presence of very small jobs.
Instead, they introduced
the \textit{reassignment cost model}, in which the migration factor is defined amortized. 
Using the reassignment cost model, they gave
a robust PTAS for the problem with amortized migration factor
$2^{\bigO{(1/\epsilon) \log^2 (1 / \epsilon)}}$.

Also online bin packing has been investigated in the migration model 
in a sequence of papers, inspired by the work of Sanders, Sivadasan, and Skutella \cite{sanders}:
The first robust APTAS for relaxed online bin
packing was given in 2009 by Epstein and Levin \cite{epstein2009robust}.
They obtained an exponential migration factor 
$2^{\bigO{(1/\epsilon^2) \log 1 / \epsilon}}$.
In 2013, Jansen and Klein \cite{jansen2013robust} improved this result
and gave an AFPTAS with polynomial migration factor
$\bigO{\frac{1}{\epsilon^3} \log \frac{1}{\epsilon^4}}$. The development of advanced
LP/ILP-techniques made this notable improvement possible.
Furthermore, in \cite{fullyDynamicBP} Berndt, Jansen, and Klein built upon the 
techniques developed in \cite{jansen2013robust} to give an AFPTAS
for fully dynamic bin packing with a similar migration factor.

Recently, Gupta \etal \cite{DBLP:journals/corr/abs-1711-02078} studied fully dynamic bin packing with several repacking measures.
In the amortized migration model, they presented an APTAS with amortized migration factor 
$\bigO{1/\epsilon}$. Further, this is shown to be optimal by providing a lower bound of 
of $\Omega ( 1 / \epsilon)$ for any such algorithm in the amortized migration model.

\paragraph*{Our contribution}
To the authors knowledge, there exists currently no algorithm for online 
strip packing in the migration or any other repacking model. 
Therefore, we present novel ideas to obtain the following results:
First, a relatively simple argument shows that in the strict migration model
it is not possible to maintain solutions that are close to optimal.
We prove the following theorem in Section~\ref{abs:sec:lowerBound}:
\begin{theorem}
\label{abs:theo:lowerBoundMarten}
	In the strict migration model, there is no approximation algorithm
	for relaxed online strip packing with asymptotic competitive ratio better
	than $4/3$.
\end{theorem}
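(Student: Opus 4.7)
The plan is to exhibit, for every constant migration factor $\mu$, a family of inputs on which every online algorithm with that migration factor has asymptotic ratio at least $4/3 - o(1)$.

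I would construct the instances in two phases using two item types: \emph{narrow} items of width slightly larger than $1/3$ (so that at most two fit in any row) and \emph{wide} items of width slightly less than $2/3$ (so that exactly one narrow and one wide tile a row of width $1$, but two wides never share a row). In Phase~1 the adversary releases $n$ narrow items; in Phase~2 it releases $n$ wide items. The heights of the items are scaled as a function of $\mu$: narrows have height $H$ and wides have height $H/K$ for a sufficiently large $K = K(\mu)$, chosen so that the strict migration budget $\mu \cdot (2/3)(H/K)$ released by each incoming wide is strictly smaller than the area $H/3$ of any single narrow. Under this scaling no narrow item can be moved during Phase~2, so the algorithm's Phase~1 packing is effectively frozen.

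With this setup I would then analyze the optimum and the algorithm's unavoidable waste. Offline, one either pairs each narrow with a small column of wides in a single row, or pairs narrows with each other and handles wides separately; a direct computation shows $\OPT(I_n) = \Theta(n H)$ with the constant dictated by $K$. For the online algorithm, let $a$ and $c$ denote respectively the numbers of pair-rows (two narrows side by side) and singleton-rows (one narrow alone) in its Phase~1 packing, so that $2a + c = n$. The adversary now plays adaptively: (i) if the Phase~1 packing is pair-heavy, the adversary continues with Phase~2, forcing every wide either into a singleton's gap or into a fresh row; (ii) if the Phase~1 packing is singleton-heavy, the adversary halts, and the algorithm's Phase~1 height $(n-a)H$ is compared to the Phase~1 optimum $nH/2$. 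A standard min--max balancing over $a$ against these two futures yields, at the balance point, a common ratio of $4/3$, giving the lower bound.

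The main obstacle is to make the construction robust uniformly in $\mu$. The height ratio $K$ between narrows and wides must be chosen large enough so that (a) the per-arrival migration budget is below the area of a single narrow (forbidding any narrow movement), yet (b) wides still interact with the Phase~1 packing as required for the adaptive-adversary argument — in particular, wides do not fit in the gaps of pair-rows, and a singleton's gap absorbs only a controlled number of wides consistent with the min--max computation. Verifying that, once item movement is forbidden, no mixed Phase~1 strategy achieves a ratio strictly below $4/3$ against the two adversary choices is the heart of the proof; the remaining checks on the structure of $\OPT$ and the geometric validity of all placements are routine.
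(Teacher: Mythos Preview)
Your overall strategy matches the paper's: a two-phase adversary, Phase~1 items that can be paired and Phase~2 items that cannot, heights scaled so that the migration budget of each Phase~2 arrival is too small to move any Phase~1 item, and an adaptive adversary that either halts after Phase~1 or continues, forcing a min--max over the pairing parameter. The paper uses widths $1/2\pm\epsilon$ rather than your $1/3+\epsilon$ and $2/3-\epsilon$, but structurally these are equivalent.

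However, there is a genuine error in your parameter choice that makes the argument fail as stated. You release only $n$ wide items, each of height $H/K$, so the total height of wides is $nH/K$. With this, the offline optimum for the full instance is \emph{not} obtained by placing narrows as singletons: it is better to pair the narrows (height $nH/2$) and stack all wides on top (height $nH/K$), giving $\OPT = nH/2 + nH/K$. But the online algorithm can do exactly the same thing---pair every narrow in Phase~1 and put every wide in a fresh row in Phase~2---achieving height $nH/2 + nH/K$ as well. Against the ``halt'' branch it also matches $\OPT_1 = nH/2$. So with $n$ wides your construction yields competitive ratio~$1$, not $4/3$; the min--max balancing you describe never produces a nontrivial lower bound.

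The fix is to send $\Theta(nK)$ wide items so that their total height equals the total height of the narrows. Then $\OPT$ for the full instance is forced to place narrows as singletons and fill each gap with a column of $K$ wides, giving $\OPT = nH$; the algorithm with $a$ pair-rows then has height $(n+a)H$ after Phase~2, and balancing $2 - 2a/n$ against $1 + a/n$ gives $a = n/3$ and ratio $4/3$. This is precisely what the paper does (with widths $1/2\pm\epsilon$, $2K$ big items of height~$1$, and $4\lceil\mu\rceil K$ flat items of height $1/(2\lceil\mu\rceil)$, so that both stacks have height $2K$).
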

Therefore, it is natural to relax the strict migration model such that
amortization is allowed in order to obtain an asymptotic approximation scheme.
We say that an algorithm has an \textit{amortized} migration factor of 
$\mu$ if for every time step $t$ the total migration (\ie the total area of repacked items)
up to time $t$ is bounded by $\mu \sum_{j=1}^t \SIZE(i_j)$, where
$\SIZE(i_j)$ is the area of item $i_j$ arrived at time $j$. 
Adapted to scheduling problems this corresponds with the reassignment cost model
introduced by Skutella and Verschae in \cite{skutellaVerschaeJournal}.
We adapt several offline and online techniques and combine them with our novel approaches
to obtain the following main result:
\begin{theorem}
\label{abs:theo:mainResult}
	There is a robust AFPTAS for relaxed online strip packing with an amortized
	migration factor polynomial in $1/\epsilon$.
\end{theorem}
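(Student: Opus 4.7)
The plan is to lift the robust AFPTAS of Jansen and Klein for online bin packing to the strip packing setting, combining it with the offline structural insights of Kenyon and R{\'e}mila. Since the Han \etal bin-to-strip framework only applies to Super-Harmonic-type bin packing algorithms, no black-box reduction is available and the LP-based machinery has to be redeveloped natively for the strip, with an extra amortization layer that compensates for items whose area is too small to cover the migration they can induce; this is precisely what Theorem~\ref{abs:theo:lowerBoundMarten} tells us must be handled in any strict-migration scheme.

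First, I would partition incoming items by size. Items of width at least $\epsilon$ (``large'') are rounded geometrically in width into $\bigO{\log(1/\epsilon)}$ types and linearly grouped in height. Items of small width are split into tall (height $\geq \epsilon$) and short: tall-narrow items are placed in dedicated vertical shelves, and short-narrow items are packed greedily into small reserved columns contributing only $\bigO{\epsilon}$ to the height. The heart of the algorithm is a configuration LP whose columns are horizontal configurations of rounded large items fitting across the strip width, with a length variable per configuration; its optimum is a lower bound on $\OPT$, and it is solved by Gilmore-Gomory column generation, as in Kenyon-R{\'e}mila.

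On arrival of a new item the LP is updated by incrementing one right-hand-side entry (or by introducing a new type when a rounding boundary is crossed). Adapting the sensitivity-style argument of Jansen and Klein, a near-optimal basic solution can be reached by modifying only $\mathrm{poly}(1/\epsilon)$ variables by small amounts; the corresponding geometric move reassigns items between containers whose widths realize the active configurations, and its total area is $\mathrm{poly}(1/\epsilon)$ times the arrived item's area. Small items are packed online into their reserved columns without perturbing the LP, but each one contributes its area to an amortized migration budget. Whenever the accumulated rounding drift, or the pool of small items, crosses a fixed threshold, a global re-optimization and repack is triggered whose cost is charged against the budget, yielding the amortized migration bound.

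The main obstacle is maintaining the correspondence between the fractional LP solution and an actual non-overlapping geometric packing without global repacking at every step. In bin packing each bin independently realizes one configuration, so local LP edits yield local bin edits; in strip packing, configurations are stacked along the height axis, and shortening one configuration naively shifts everything above it. I would resolve this by organizing the strip into a constant number of blocks indexed by rounded height class, each block divided into containers whose widths realize the currently active configurations, together with invariants bounding the number of containers per block and their occupancy. Proving that every online arrival preserves these invariants with only \emph{local} repacks, and that the amortized migration including the periodic global repack remains polynomial in $1/\epsilon$, is the most delicate piece of the analysis.
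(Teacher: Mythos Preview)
Your high-level plan is sound and matches the paper's strategy in broad strokes: partition by item type, maintain an LP over width configurations, use the Jansen--Klein \textsc{Improve} machinery for bounded-change LP updates, and amortize the migration of small items via a budget. You also correctly isolate the central obstacle, namely that in strip packing configurations are stacked vertically, so a local LP edit that changes a configuration's height propagates globally.

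Where your proposal has a genuine gap is in the resolution of that obstacle. You suggest organizing the strip into blocks ``indexed by rounded height class'' with containers realizing configurations, but this is exactly the Kenyon--R{\'e}mila structure the paper explicitly rejects in Section~\ref{abs:sec:technicalContribution}: as long as segments can have different heights, online edits leave gaps or cause overflow, and closing those gaps costs unbounded migration. The paper's key structural idea, which you are missing, is to batch all large-width items into containers of a \emph{single fixed height} $h_B = \Theta(1/\epsilon^2)$, regardless of the items' own heights. Once every container has the same height, the strip decomposes into identical horizontal levels and placing containers into levels is literally a bin packing instance; now each level realizes one configuration independently, and the Jansen--Klein LP sensitivity argument applies without any vertical coupling. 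The dynamic rounding, the $A$/$B$ block structure, and the \textsc{Shift} operation all live at the level of \emph{which items go into which container}, not at the level of height classes. Your proposed height-class blocks do not obviously achieve this decoupling, and without it the ``local repacks preserve invariants'' step you flag as delicate does not go through.
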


\subsection{Technical Contribution}
\label{abs:sec:technicalContribution}
A general approach in the design of robust online algorithms is to rethink
existing algorithmic strategies that work for the corresponding offline problem in a way that the algorithm can adapt to a changing problem instance.
The experiences that were made so far in the design of robust algorithms 
(see \cite{jansen2013robust,fullyDynamicBP,skutellaVerschaeJournal}) are to design the algorithm in a way such that the generated solutions fulfill very tight structural properties. Such solutions can then be adapted more easily as new items arrive.

A first approach would certainly be do adapt the well known algorithm for 
(offline) strip packing by Kenyon and R{\'e}mila \cite{kenyon2000near} to the online setting.
However, we can argue that the solutions generated by this algorithm do not fulfill sufficient structural properties. 
In the algorithm by Kenyon and R{\'e}mila, the strip is divided vertically into
segments, where each segment is configured with a set of items. 
Thereby, each segment can have a different height.
Now consider the online setting, where we are
asked for a packing for the enhanced instance that maintains most parts of the existing
packing. Obviously, it is not enough to place new items on top of the packing as this would exceed the approximation guarantee.
Instead, existing configurations of the segments need to be changed to guarantee a good competitive ratio. 
However, this seems to be hard to do as the height of a configuration can change. 
Gaps can occur in the packing as a segment might decrease in height or vice versa a segment might increase in height and therefore does not fit anymore in its current position. Over time this can lead to a very fragmented packing.
On the other hand, closing gaps in a fragmented packing can cause a huge amount of repacking.

\begin{figure}
	\captionsetup[subfigure]{justification=centering}
	\centering
	\begin{subfigure}[t]{0.47\textwidth}
		\centering
		\includegraphics[height=3cm]{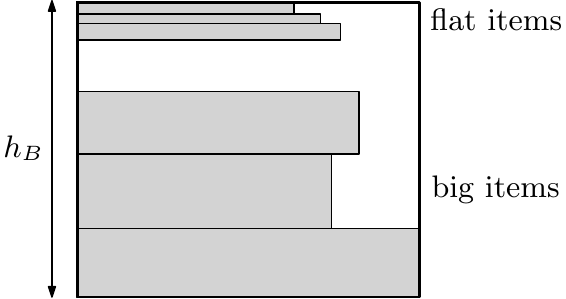}
		\caption{Packing of items in a container}
		\label{abs:fig:containerContent}
	\end{subfigure}
	\begin{subfigure}[t]{0.47\textwidth}
		\centering
		\includegraphics[height=3cm]{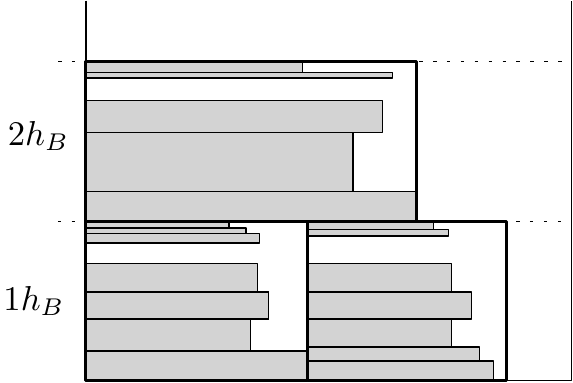} 
		\caption{Packing of containers in the strip}
		\label{abs:fig:containerAreBPItems}
	\end{subfigure}	
	\caption{
		Packing structure of our approach. Items are packed into containers of fixed height~$h_B$,
		thus the packing of containers results in a bin packing problem.}
\end{figure}

\paragraph{Containers}
In order to maintain packings with a more modular structure,
items are batched to larger rectangles of fixed height, called \textit{containers}
(see Figure \ref{abs:fig:containerContent}).
The width of a container equals the width of the widest item inside.
As each container has the same height $h_B$, the strip is divided
into \textit{levels} of equal height $h_B$ (see Figure \ref{abs:fig:containerAreBPItems})
and the goal is to fill each level with containers best possible.
Thus, finding a container packing is in fact a bin packing problem,
where levels correspond with bins and the sizes of the bin packing items are given 
by the container widths.
This approach was studied in the offline setting by Han \etal in \cite{han2007strip},
while an analysis in a migration setting is more sophisticated.

Thus, the packing of items into the strip is given by two assignments:
By the \textit{container assignment} each item is assigned to a container. 
Moreover, the \textit{level assignment} describes which container
is placed in which level (corresponds with the bin packing solution).
To guarantee solutions with good approximation ratio, both functions have to
satisfy certain properties.

\paragraph{Dynamic rounding / Invariant properties}
For the container assignment, a natural choice would be to assign the widest
items to the first container, the second widest to the second container, and so on
\cite{han2007strip}. 
However, in the online setting we can not maintain this strict order
while bounding the repacking size. Therefore, we use a relaxed ordering 
by introducing groups for containers of similar width and
requiring the sorting over the groups, rather than over
containers.
For this purpose, we adapt the dynamic rounding technique 
developed by Berndt, Jansen, and Klein in \cite{fullyDynamicBP} to
state important \textit{invariant properties}.

\paragraph{Shift}
In order to insert new items, we develop an operation called \textsc{Shift}. 
The idea is to move items between containers of different groups such that
the invariant properties stay fulfilled. 
When inserting an item $i_t$ via \textsc{Shift} into group\footnote{In the following, 
by \enquote{group of an item} we mean the group of the container
	in which the item is placed.} $g$, items are moved
from $g$ to the group $\mathit{left}(g)$, where again items are shifted to the
next group, and so on (see Figure~\ref{abs:fig:shiftTopLevelIntro}).
Thereby, the total height of the shifted items can increase in each step. 
However, it
is limited such that items that can not be shifted further (at group $g_0$
in Figure~\ref{abs:fig:shiftTopLevelIntro}) can be packed into one additional container.
This way, we get a new container
assignment for the enhanced instance which maintains the approximation guarantee
and all desired structural properties.

\paragraph{LP/ILP-techniques}
As a consequence of the shift operation, a new container may has to be inserted into the packing. 
%Obviously, placing new containers always
%into new levels may lead to a level assignment which does not satisfy the
%approximation guarantee. Therefore, the existing level assignment has to be changed,
%which causes further repacking. 
We apply the LP/ILP-techniques developed in 
\cite{jansen2013robust} to maintain a good level assignment while
keeping the migration factor polynomial in $1 / \epsilon$.

\paragraph{Packing of small items}
Another challenging part regards the handling of items with small area.
Without maintaining an advanced structure, small items can fractionate the
packing in a difficult way. 
Such difficulties also arise in related optimization problems,
see e.g. \cite{skutellaVerschaeJournal,fullyDynamicBP}.
For the case of flat items (small height) we overcome these difficulties by the packing structure
shown in Figure~\ref{abs:fig:containerContent}: Flat items are separated from big items
in the containers and are sorted by width such that the least wide item is at the top.
Narrow items (small width) can be used to fill gaps in the packing when they are grouped to
similar height.

\begin{figure}
	\begin{center}
		\includegraphics[scale=1,page=4]{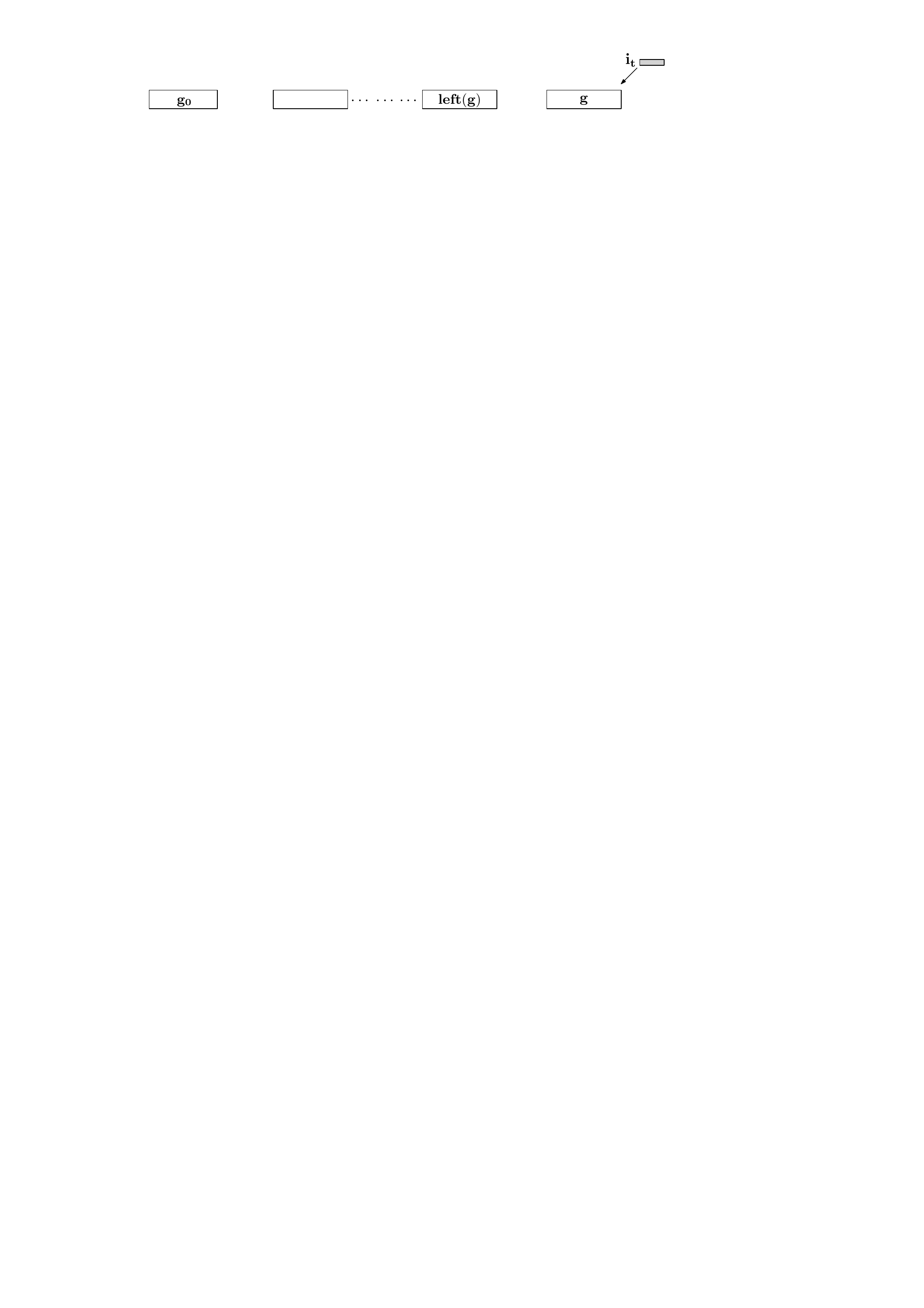}
		\caption{
			\textsc{Shift} operation moves widest items between groups to insert new item $i_t$.}
		\label{abs:fig:shiftTopLevelIntro}
	\end{center}
\end{figure}

\subsection{Lower Bound}
\label{abs:sec:lowerBound}

In this section we prove Theorem \ref{abs:theo:lowerBoundMarten}.
We use an adversary to construct an instance $I$ with arbitrary optimal
packing height, but $A(I) \geq \frac{4}{3} \OPT(I)$ for any such algorithm $A$.\\

\noindent\textit{Proof.}
Let $A$ be an algorithm for relaxed online strip packing with migration
factor $\mu$. We show that for any height $h$ there is an instance $I$
with $\OPT(I) \geq h$ and $A(I) \geq \frac{4}{3} \OPT(I)$.
The instance consists of two item types: 

A \textit{big} item has width $\frac{1}{2} - \epsilon$ and height $1$, 
while a \textit{flat} item has width $\frac{1}{2} + \epsilon$
and height $\frac{1}{2 \ceil{\mu}}$ for $\epsilon < 1/6$.
Note that $A$ can not repack a big item $b$ when a flat item $f$ arrives,
as $\SIZE(b) > \mu \SIZE(f)$.

\begin{wrapfigure}[12]{R}{0.3\textwidth}
	\centering
	\includegraphics[scale=0.7]{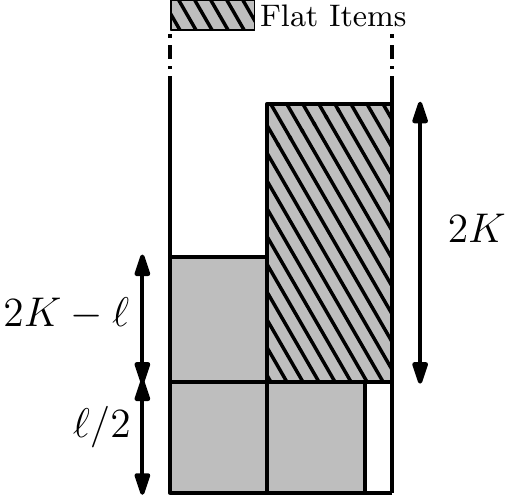}
	\caption{Optimal online packing}
	\label{abs:fig:lowerBoundOptOnline}
\end{wrapfigure}

First, the adversary sends $2K$ big items, where $K = 3 \ceil{h}$. 
Let $\ell$ be the number of big items that are packed by $A$ next to
another big item. The packing has a height of at least 
$\frac{\ell}{2} + 2K - \ell = 2K - \frac{\ell}{2}$
(see Figure \ref{abs:fig:lowerBoundOptOnline}). 
Since the optimum packing height for $2K$ big items
is $K$, algorithm $A$ has an absolute ratio
of at least $2 - \frac{\ell}{2K}$. If $\ell \leq \frac{4K}{3}$, the absolute
ratio is at least $\frac{4}{3}$ and nothing else is to show.

Now assume $\ell > \frac{4K}{3}$. In this case, the adversary sends
$k = 4 \ceil{\mu} K$ flat items of total height $2K$. In the optimal packing 
of height $2K$ big items and flat items form separate stacks placed in parallel.
Note that no two flat items can be packed in parallel.
Since $A$ can not repack any big item when a flat item arrives,
in the best possible packing achievable by $A$ flat items of total
height $2K - \ell$ are packed next to big items
(see Figure \ref{abs:fig:lowerBoundOptOnline},
flat items are packed in the dashed area).
Therefore, the packing height is at least
$2K + \frac{\ell}{2}$ and hence the absolute ratio is at least 
$1 + \frac{\ell}{4K} \geq \frac{4}{3}$.

In either case, it follows that the asymptotic ratio is at least $4/3$ by considering
$h \to \infty$.

\subsection{Remainder of the paper}
The following section introduces the approach of packing items into 
containers. We formulate central invariant properties and prove
important consequences of this approach.
Section~\ref{sec:OperationsBigFlat} presents operations that change the packing in the online
setting while maintaining the invariant properties.
Up to that section, items of small width were omitted.
In Section \ref{sec:narrowItems} we show how to integrate them into the container
packing. Finally, in Section~\ref{sec:migrationAnalysis} the amount of migration is analyzed.

Throughout the following sections, let $\epsilon \in (0,1/4]$ be
a constant such that $1 / \epsilon$ is integer.
We denote the height and width of an item $i$ by $h(i)$ and $w(i)$ (both at most 1)
and define $\SIZE(i) = w(i) h(i)$.
An item~$i$ is called \textit{big} if $h(i) \geq \epsilon$ and $w(i) \geq \epsilon$,
\textit{flat} if $w(i) \geq \epsilon$ and $h(i) < \epsilon$, and
\textit{narrow} if $w(i) < \epsilon$.
Accordingly, we partition the instance $I$ into $I_L$, the set of big and flat items
(having minimum width $\epsilon$), and $I_N$, the set of narrow items.
If $R$ is a set of items, let $\SIZE(R) = \sum_{i \in R} \SIZE(i)$
and $h(R) = \sum_{i \in R} h(i)$.

\section{Container Packing}
\label{sec:containerPacking}
%auto-ignore
% !TeX root = ../main.tex

%Recall that we follow a two-level-approach to obtain the actual packing:
%Items are packed into containers of equal height $h_B$,
%whereby the widest item inside a container defines its width
%(see Figure \ref{abs:fig:containerContent}).
%The strip is divided into levels of height $h_B$, where the containers are packed
%(see Figure \ref{abs:fig:containerAreBPItems}).
%In this section we state important \textit{invariant properties} concerning
%the relation between items and containers. Further, we show that if these invariant properties hold, the container packing yields a good approximation to the
%strip packing problem.

Let $h_B \geq 1$ be the height of a container
and let $\mathcal{C}$ denote the set of containers.
In the following sequence of definitions we describe the container-packing approach formally.
In the first definition, a container $c \in \mathcal{C}$ is an object without geometric interpretation
to which items from $I_L$ can be assigned to, while satisfying the height capacity:

\begin{definition}[Container assignment]
	A function $\con \colon I_L \rightarrow \mathcal{C}$ is called a \textit{container assignment} 
	if $\sum_{i \colon \con(i) = c} h(i) \leq h_B$ holds for each $c \in \mathcal{C}$.

\end{definition}
The next definition shows how to build the \textit{container instance},
\ie the strip packing instance which allows to pack $I_L$ according to a container assignment.

\begin{definition}[Container instance]
	\label{def:containerInstace}
	Let $\con \colon I_L \rightarrow \mathcal{C}$ be a container assignment.
	The \textit{container instance} to $\con$,
	denoted by $\C{\con}$, is the strip packing instance
	defined as follows:
	For each $c \in \mathcal{C}$ such that there is an $i \in I_L$ with
	$\con(i) = c$, define a rectangle $\bar{c} \in \C{\con}$ of
	fixed height $h(\bar{c}) = h_B$ and width
	$w(\bar{c}) = \max_{i \in I_L} \{ w(i) \mid \con(i) = c \}$.
\end{definition}
Since by Definition~\ref{def:containerInstace} each container $c \in \mathcal{C}$ corresponds with
a unique rectangle $\bar{c} \in \C{\con}$, in the following we use both notions synonymously.

Like shown in Figure\nobreakspace \ref {abs:fig:containerContent}, big and flat items are placed
inside the container differently:
\begin{itemize}
	\item Big items form a stack starting from the bottom.
	\item Flat items form a stack starting from the top. 
	Thereby, items in that stack are sorted such that the least wide item
	is placed at the top edge of the container.
\end{itemize}
The need for the special structure for flat items will become clear in
Section\nobreakspace \ref {sec:insertFlat}.

In order to solve the container packing problem via a linear program (LP), 
the number of occurring widths has to be bounded. 
For this purpose we define a \textit{rounding function}
$R \colon \mathcal{C} \rightarrow G$ that maps 
containers to groups.
The concrete rounding function will be specified later.
Using $R$ we obtain the  \textit{rounded container instance} by rounding up
each container to the widest in its group.

\begin{definition}[Rounded container instance]
	\label{def:roundedContainerInstace}	
	Let $\C{\con}$ be a container instance and 
	$R \colon \mathcal{C} \rightarrow G$ be a rounding function.
	The \textit{rounded container instance} $\CR{\con}{R}$ is obtained 
	from the container instance $\C{\con}$
	by setting the width of each container $c$ to a new width 
	$w^R(c) \geq w(c)$ with
	$w^R(c):= \max_{c' \in \mathcal{C}} \{w(c') \mid R(c') = R(c) \}$.
\end{definition}

\subsection{LP Formulation}
\label{sec:binpackingLP}
By Definition~\ref{def:roundedContainerInstace} a rounded container instance has only containers
of $\card{G}$ different widths. Therefore we can use the following LP formulation to find
a level assignment.
It is commonly used for bin packing problems and was
first described by Eisemann \cite{eisemann1957trim}.

\begin{definition}[LP($C$)]
	\label{lp:binpacking}
	
	Let $C$ be a container instance and assume that items from $C$
	have one of $m$ different widths $w_1,\ldots,w_m \geq \epsilon$.
	Let $P$ be the set of patterns and $n=\card{P}$.
	A \textit{pattern} $P_i \in P$ is a multiset of widths
	$ \{ a(P_i,1) : n_1, a(P_i,2): n_2, \ldots a(P_i,m) : n_M \}$
	where $a(P_i,j)$ for $1 \leq i \leq n$ and $1 \leq j \leq m$
	denotes how often width $w_j$ appears in pattern $P_i$. 
	Let $b_j$ for $1 \leq j \leq m$ be the number of containers having width~$w_j$.
	LP($C$) is defined as:
	\begin{align*}
	\min & \norm{x} & \\
	\text{s.t.} & \sum_{P_i \in P} x_i a(P_i,j) \geq b_j & \forall 1 \leq j \leq m \\
	& x_i \geq 0                             & \forall 1 \leq i \leq n
	\end{align*}
	The value $\norm{x}$ of an optimal solution to the 
	above LP relaxation is denoted by $\LIN(C)$.
	If $y \in \NN^n$ is an optimal solution to the LP with
	integer constraints $x_i \in \ZZ^+$, its value is denoted with $\OPT(C)$.
	Note that $\LIN(C) \leq \OPT(C)$.
\end{definition}

%In a solution $x \in \RR^n$, each component $x_i$ states how often pattern
%$P_i$ appears in the solution.
%Therefore, given an integral solution $y$, the container packing 
%can be derived by configuring $y_i$ levels with pattern $P_i$.

Since all containers have width greater or equal $\epsilon$, 
each level has at most $1 / \epsilon$ slots.
In each slot, there are $m+1$ possibilities to fill the slot:
Either one container of the $m$ different sizes is placed there,
or the slot stays empty.
Therefore, $n \leq (m+1)^{1/ \epsilon}$.

\subsection{Grouping}

In the next subsection we adapt the dynamic rounding technique developed
in \cite{fullyDynamicBP} for containers. It is based on the one by
Karmarkar and Karp \cite{karmarkarKarp} for (offline) bin packing,
but modified for a dynamic setting.

\subsubsection{Basic Definitions}
\label{sec:groupingBasic}

Each container is assigned to a \textit{(width) category} $l \in \NN$, 
where container $c$ has width category $l$ if $w(c) \in (2^{-(l+1)},2^{-l}]$. 
Let $W$ denote the set of all non-empty categories, \ie
$W = \{ l \in \NN \mid \exists c \in \mathcal{C} ~ 
w(c) \in (2^{-(l+1)},2^{-l}] \}$.

\begin{lemma}
	\label{lemma:Omega}
	The set $W$ of categories has at most $\omega := \wGeneric$ elements.
\end{lemma}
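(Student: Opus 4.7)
The plan is to observe that the category of a container is completely determined by its width, and then bound the range of possible categories using the fact that container widths lie in a bounded interval.

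First, I would note that the width of every container lies in $[\epsilon, 1]$. The upper bound $w(c) \leq 1$ holds because the strip itself has width $1$ and every item has width at most $1$, and by Definition~\ref{def:containerInstace} the width of $c$ is the width of the widest item assigned to it. The lower bound $w(c) \geq \epsilon$ follows because only big and flat items (i.e., items from $I_L$) are assigned to containers, and all such items have width at least $\epsilon$.

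Next, I would show that each container $c$ sits in a unique category, namely $l = \lfloor \log_2(1/w(c)) \rfloor$. Indeed, $w(c) \in (2^{-(l+1)}, 2^{-l}]$ is equivalent to $l \leq \log_2(1/w(c)) < l+1$, which pins down $l$ uniquely. Hence $W$ can be identified with the set $\{\, \lfloor \log_2(1/w(c)) \rfloor \mid c \in \mathcal{C} \,\}$.

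Finally, I would bound this set. Since $w(c) \in [\epsilon, 1]$, the quantity $\log_2(1/w(c))$ lies in $[0, \log_2(1/\epsilon)]$, so its floor lies in $\{0, 1, \dots, \lfloor \log_2(1/\epsilon) \rfloor\}$. This set has exactly $\lfloor \log_2(1/\epsilon) \rfloor + 1 = \omega$ elements, which gives $|W| \leq \omega$.

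There is no real obstacle here; the proof is essentially a direct calculation once the range of possible container widths is established. The only tiny subtlety is remembering that the lower bound $w(c) \geq \epsilon$ relies on the definition of $I_L$ (big and flat items both have width at least $\epsilon$), and that narrow items are, by design, not yet part of the container packing considered in this section.
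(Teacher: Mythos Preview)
Your proof is correct and follows essentially the same approach as the paper: bound container widths by $[\epsilon,1]$ using the definitions of $I_L$ and of the container instance, then observe that the resulting category indices range over $\{0,1,\ldots,\lfloor\log_2(1/\epsilon)\rfloor\}$. Your version is simply more explicit (writing out $l=\lfloor\log_2(1/w(c))\rfloor$ and justifying the endpoint bounds), while the paper compresses this into a couple of sentences.
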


\begin{proof}
	Containers can have any width between $\epsilon$ and 1, just like the width of big and flat items. 
	The widest containers belong to category 0, while containers of minimum width $\epsilon$
	are assigned to the category $\floor{\log \frac{1}{\epsilon}}$.
	Thus $\card{W} \leq \omega$.
\end{proof}

\begin{figure}
	\begin{center}
		\includegraphics[width=\textwidth]{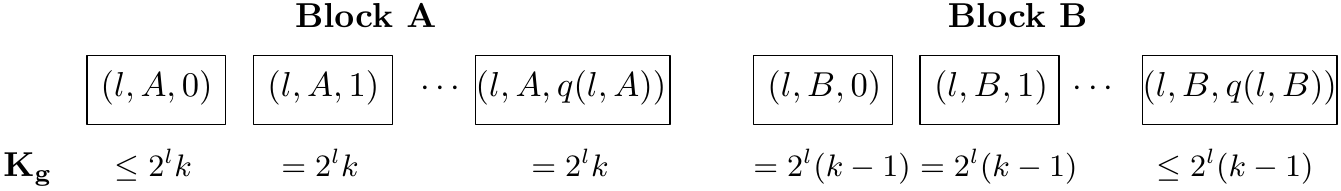}
		\caption{Groups of one category $l \in W$ and number of containers
			$K_g$ per group}
		\label{fig:groupStructure}
	\end{center}
\end{figure}

Furthermore, we build groups within the categories: A group $g \in G$ is a 
triple $(l,X,r)$, where $l \in W$ is the category, 
$X \in \{A,B\}$ is the \textit{block}, and 
$r \in \NN$ is the \textit{position} in the block. 
The maximum position of category $l$ at block $X$ that is non-empty 
is denoted by $q(l,X)$. Figure\nobreakspace \ref {fig:groupStructure} outlines the groups and 
the block structure of one category $l \in W$ (the values for $K_g$ 
will become clear in Section\nobreakspace \ref {sec:invariant}).

By the notion of blocks, groups of one category can be partitioned into two types.
This becomes helpful to maintain the invariant properties with respect to the growing set of 
items. More details on that are given in the later Sections~\ref{sec:invariant} and \ref{sec:insertBig}.
For a group $g=(l,X,r)$ we define the group to the left of it as follows:
$$
\mathit{left}(g) = 
\begin{cases} 
(l,X,r-1) 		& \mbox{if } r > 0 \vee X=A \\
(l,A,q(l,A)) 	& \mbox{if } r = 0 \wedge X=B \\ 
\end{cases}
$$
Analogously, we say $g = \mathit{right}(g')$, if $\mathit{left}(g) = g'$ holds. 
We set $\mathit{left}((l,A,0))=(l,A,-1)$ and $\mathit{right}((l,B,q(l,B))) = (l,B,q(l,B)+1)$ 
as empty groups.

From now on, let a rounding function $R$ and a container assignment $\con$ be given.
Let $K_g = \card{\{ c \in \mathcal{C} \mid R(c) = g \}}$ 
be the number of containers of group $g$.
We say that item $i$ has group $g$ if $R(\con(i)) = g$, that is, item $i$ is in a container which
belongs to group $g$.
Let $\ILg = \{ i \in I_L \mid R(\con(i)) = g \}$ be the set of items of group $g$
and define $h(g) = \sum_{i \in \ILg} h(i)$ as the total height of those items.

\subsubsection{Invariant Properties}
\label{sec:invariant}

In Section~\ref{abs:sec:technicalContribution} we argued that 
only solutions with strong structural properties can be adapted appropriately
in the online setting while maintaining a good competitive ratio.
In Definition~\ref{def:invariantProperties} we state a set of \textit{invariant}
properties, to which we refer with ($\mathcal{I}$1)-($\mathcal{I}$5) in the remainder of this paper.

\begin{definition}[Invariant properties]
	\label{def:invariantProperties}
	Let $k \in \NN$ be a parameter.
	\begin{enumerate}[label=(\ensuremath{\mathcal{I}}\arabic*)]
		\item \textbf{Width categories} \\
		\label{inv:a:categories}
		$2^{-(l+1)} < w(i) \leq 2^{-l}$  \hspace{20pt} 
		for all $i \in \ILg$ \sut $g=(l,\cdot,\cdot)$
		\item \textbf{Sorting of items over groups} \\
		\label{inv:b:sorting}
		$w(i) \geq w(i')$ \hspace{20pt}  
		for all $i \in \ILg ,i' \in {I_L}^{g'}$ \sut $g = \mathit{left}(g')$ 
		\item \textbf{Number of containers in block A} \\
		\label{inv:c:nrContainersBlockA}
		$K_{(l,A,0)} \leq 2^l k$, \\
		$K_{(l,A,r)} = 2^l k $   \hspace{20pt} 
		for all $l \in W$ and $1 \leq r \leq q(l,A)$
		\item \textbf{Number of containers in block B}  \\
		\label{inv:d:nrContainersBlockB}
		$K_{(l,B,q(l,B))} \leq 2^l(k-1)$, \\
		$K_{(l,B,r)} = 2^l(k-1)$  \hspace{20pt} 
		for all $l \in W$ and $0 \leq r < q(l,B)$
		\item \textbf{Total height of items per group} \\
		\label{inv:e:heightInterval}
		$ (h_B - 1) (K_g -1) \leq h(g) \leq (h_B - 1) K_g $ \hspace{20pt}
		for all $g \in G$
	\end{enumerate}
\end{definition}

Property \invA ensures that each item is assigned to the right category.
Note that as a consequence, each container of a group $(l,\cdot,\cdot)$
has a width in $(2^{-(l+1)},2^{-l}]$.
By property \invB, all items in a group $g$ have a width greater or equal
than items in the group $\mathit{right}(g)$. That is, instead of a strict
order over all containers, \invB ensures an order over groups of containers.
The properties \invC and \invD set the number of containers 
to a fixed value, except for special cases (see Figure~\ref{fig:groupStructure}):
Groups in block~$A$ have more containers than groups in block~$B$. 
Moreover, there are two \textit{flexible groups} 
(namely $(l,A,0)$ and $(l,B,q(l,B))$) whose number of containers is only upper
bounded.
Finally, property \invE ensures an important relation between items and containers
of a group $g$: Since $h(g) \leq K_g (h_B - 1)$, at least one of the $K_g$
containers has a filling height of at most $h_B - 1$ and thus can admit a new
item. However, the lower bound $(h_B - 1) (K_g - 1) \leq h(g)$ ensures that
each container is well filled in an average container assignment.

\subsubsection{Number of Groups}
\label{sec:numberOfGroups}

One of the important consequences of the invariant properties is the fact
that the number of non-empty groups $\card{G}$ can be bounded from above,
assuming that the instance is not too small.
Therefore, the parameter $k$ has to be set in a particular way:

\begin{lemma}
	\label{lemma:NumberOfGroups}
	Let $\omega$ be defined like in \MakeUppercase Lemma\nobreakspace \ref {lemma:Omega}. 
	For $k = \kGeneric$ the number of non-empty groups 
	in $G$ is bounded by $\bigO{\frac{\omega}{\epsilon}}$, assuming that 
	$\SIZE(I_L) \geq \frac{24 \omega h_B (h_B-1)}{\epsilon h_B - 2 \epsilon}$.
\end{lemma}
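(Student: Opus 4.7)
\medskip
\noindent\textit{Proof plan.}
The plan is to separate the groups into \emph{flexible} groups (one or two per category, namely $(l,A,0)$ and $(l,B,q(l,B))$) and \emph{non-flexible} groups (the rest), bound each class separately, and then sum. There are at most $|W|\le \omega$ categories, so the flexible part already contributes at most $2\omega = \bigO{\omega}$ groups. The real work is to bound the number of non-flexible groups in terms of $\SIZE(I_L)$ via a double application of the invariants \invA{} and \invE{}.

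For a non-flexible group $g=(l,X,r)$, property \invE{} gives $h(g)\ge (h_B-1)(K_g-1)$, and property \invC{}/\invD{} forces $K_g\in\{2^l k,\,2^l(k-1)\}$, so in either block $K_g \ge 2^l(k-1)$. Combining this with \invA{}, which guarantees that every item in group $g$ has width strictly greater than $2^{-(l+1)}$, I obtain
\[
\SIZE(\ILg)\;\ge\;2^{-(l+1)}\cdot h(g)\;\ge\;2^{-(l+1)}(h_B-1)\bigl(2^l(k-1)-1\bigr)\;=\;\tfrac{h_B-1}{2}\bigl(k-1-2^{-l}\bigr)\;\ge\;\tfrac{h_B-1}{2}(k-2),
\]
where the last inequality uses $l\ge 0$. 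Crucially, the lower bound on $\SIZE(\ILg)$ no longer depends on the category~$l$, so summing over all non-flexible groups and using disjointness of the $\ILg$ yields
\[
\#\{\text{non-flexible groups}\}\;\le\;\frac{2\,\SIZE(I_L)}{(h_B-1)(k-2)}.
\]

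It remains to plug in $k=\kGeneric$ and check that the assumed lower bound on $\SIZE(I_L)$ makes this quantity of order $\omega/\epsilon$. Since $k\ge \frac{\epsilon\,\SIZE(I_L)}{4\omega h_B}-1$, the hypothesis $\SIZE(I_L)\ge \frac{24\,\omega h_B(h_B-1)}{\epsilon h_B - 2\epsilon}$ is designed precisely to ensure that $k-2$ is at least a constant multiple of $\frac{\epsilon\,\SIZE(I_L)}{\omega h_B}$, so that $\SIZE(I_L)$ cancels in the displayed ratio and the remaining expression becomes $\bigO{\frac{\omega h_B}{\epsilon(h_B-1)}}=\bigO{\omega/\epsilon}$. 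Adding back the at most $2\omega$ flexible groups yields the claimed bound $|G|=\bigO{\omega/\epsilon}$.

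The only mildly delicate step is the last one: one must track that the floor in the definition of $k$, together with the constant $-2$ appearing in $k-2$, does not spoil the asymptotics. This is exactly what the seemingly cryptic threshold $\frac{24\,\omega h_B(h_B-1)}{\epsilon h_B - 2\epsilon}$ is calibrated for; it guarantees the inequality $\frac{\epsilon\,\SIZE(I_L)}{4\omega h_B}-3 \;\ge\; \frac{2\,\SIZE(I_L)}{N(h_B-1)}$ for a suitable $N=\bigO{\omega/\epsilon}$, after which the bound follows by pure arithmetic. No further structural property of the packing is needed beyond \invA{}--\invE{}.
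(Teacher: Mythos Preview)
Your proposal is correct and follows essentially the same route as the paper: drop the (at most $2\omega$) flexible groups, use \invA{}, \invC{}/\invD{}, and \invE{} to show each remaining group satisfies $\SIZE(\ILg)\ge \tfrac{1}{2}(h_B-1)(k-2)$, and then use the size hypothesis to turn $\tfrac{2\,\SIZE(I_L)}{(h_B-1)(k-2)}$ into $\bigO{\omega/\epsilon}$. The only cosmetic difference is that the paper first bounds $q(l,A)+q(l,B)$ per category and then sums, whereas you sum globally in one step; the arithmetic and the use of the threshold on $\SIZE(I_L)$ (to get $k-2\ge \tfrac{\epsilon}{8\omega(h_B-1)}\SIZE(I_L)$) are identical.
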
	

\begin{proof}
	Let 
	$ G_1 = G \setminus \left( \bigcup_{l \in W} (l,A,0) \cup 
	\bigcup_{l \in W} (l,B,q(l,B)) \right) $
	and let $g \in G_1$. 
	Since by property \invA every container of group $g$ has width greater than 
	$2^{-(l+1)}$, it follows together with the further invariant properties
	\begin{align*}
	\SIZE(\ILg) &> 	  2^{-(l+1)} (h_B-1)(K_g-1) & \text{\invA, \invE} \\
	&\geq 2^{-(l+1)} (h_B - 1) (2^l (k-1) - 1) & \text{\invC, \invD} \\
	&=    \frac{1}{2} (h_B - 1) (k-1) - 2^{-(l+1)} (h_B-1) \\
	&\geq \frac{1}{2} (h_B - 1) (k-1) - \frac{h_B - 1}{2} \\
	&=    \frac{1}{2} (h_B - 1) (k-2) \,.
	\end{align*}	
	Now, let $I_L^{(l)}$ be the set of items in $I_L$ which belong to containers of 
	category $l$. It holds that
	$\SIZE(I_L^{(l)}) \geq \sum_{g=(l,\cdot,\cdot) \in G_1} \SIZE(\ILg) 
	\geq (q(l,A)+q(l,B)) \left( \frac{1}{2} (h_B - 1) (k-2) \right)
	$ and resolving leads to
	\begin{equation}
	\label{eq:QlaQlbCategory}
	q(l,A)+q(l,B) \leq \frac{2 \SIZE(I_L^{(l)})}{(h_B - 1)(k-2)} \,.
	\end{equation}
	We now show $(h_B - 1)(k-2) \geq \frac{\epsilon}{8 \omega h_B} \SIZE(I_L)$.
	The assumption on $\SIZE(I_L)$ is equivalent to 
	$\frac{\epsilon}{4 \omega h_B} \SIZE(I_L) - 3 
	\geq \frac{\epsilon}{8 \omega (h_B-1)} \SIZE(I_L)$. Therefore,
	$$
	k-2 
	= \floor{ \frac{\epsilon}{4 \omega h_B} \SIZE(I_L) } - 2
	\geq \frac{\epsilon}{4 \omega h_B} \SIZE(I_L) - 3
	\geq \frac{\epsilon}{8 \omega (h_B-1)} \SIZE(I_L)
	$$
	and thus
	$$
	(h_B - 1) (k-2) 
	\geq  \frac{(h_B-1) \epsilon}{8 \omega (h_B-1)} \SIZE(I_L)
	= \frac{\epsilon}{8 \omega} \SIZE(I_L) \,.
	$$
	Further, we get
	\begin{equation}
	\label{eq:boundQLAQLB}
	\begin{aligned}[b]
	\frac{2 \SIZE(I_L)}{(h_B-1) (k-2)}
	\leq \frac{2 \SIZE(I_L)}{\frac{\epsilon}{8 \omega} \SIZE(I_L)}
	= 	  \frac{16 \omega}{\epsilon}	\,.	
	\end{aligned}
	\end{equation}
	As shown in Figure\nobreakspace \ref {fig:groupStructure}, for each category $l$
	there are $q(l,A)+q(l,B)+2$ groups.
	Now, summing over all categories $l \in W$ concludes the proof:
	\begin{align*}
	&~~~   \sum_{l \in W} q(l,A) + q(l,B) + 2 \\
	&\leq  \sum_{l \in W} \left(\frac{2 \SIZE({I_L^{(l)})}}{(h_B - 1) (k-2)} + 2\right) 
	& \text{eq.\nobreakspace \textup {(\ref {eq:QlaQlbCategory})}} \\
	&=     2\card{W} + \frac{2}{(h_B - 1) (k-2)} \sum_{l \in W} \SIZE(I_L^{(l)}) \\
	&=     2\card{W} + \frac{2 \SIZE(I_L)}{(h_B - 1) (k-2)}  \\
	&\leq  2 \omega + \frac{16 \omega}{\epsilon} & \text{eq.\nobreakspace \textup {(\ref {eq:boundQLAQLB})}}
	\end{align*}	
\end{proof}

\subsection{Approximation Guarantee}
\label{sec:approxGuarantee}

If the invariant properties of Definition \ref{def:invariantProperties}
are fulfilled, the rounded container instance yields a good approximation to $I_L$.
Using a proof technique from \cite{han2007strip} we are able to prove the following theorem.

\begin{theorem}
	\label{theo:approxGuarantee}
	Let $\con \colon I_L \rightarrow \mathcal{C}$ be
	a container assignment and $R \colon \mathcal{C} \rightarrow G$
	be a rounding function such that invariant properties \invA-\invE are fulfilled.
	For the rounded container instance $\CR{\con}{R}$ it holds that
	$\OPT(\CR{\con}{R}) \leq (1+ 4 \epsilon) \OPT(I_L) + \bigO{1 / \epsilon^4}	$.
\end{theorem}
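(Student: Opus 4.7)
The plan is to use a two-step argument: first I will bound $\LIN(\CR{\con}{R})$—the LP relaxation of the bin-packing formulation in which each bin corresponds to a strip level of height $h_B$—against $\OPT(I_L)$, and then convert this fractional bound to an integral one via Karmarkar--Karp-style rounding, incurring an additive loss proportional to the number of distinct container widths. Since $h_B \cdot \OPT(\CR{\con}{R})$ equals the strip-packing height of the rounded container instance, an upper bound on the number of bins translates directly into a height bound.

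To bound the LP relaxation, I will exploit the sorted group structure enforced by invariants \invA and \invB. By \invB, every item in group $g$ is at least as wide as every item in $\mathit{right}(g)$, and by \invA the rounded width $w^R(g)$ is at most twice the narrowest container width in $g$. The key observation is then that the rounded containers of group $\mathit{right}(g)$ are fractionally dominated, in width, by the items lying in $g$. Starting from an optimal strip packing of $I_L$ of height $\OPT(I_L)$, I would construct a fractional level assignment for $\CR{\con}{R}$ by a shift argument: use items of group $g$ as \emph{width witnesses} for the (narrower) rounded containers of group $\mathit{right}(g)$. By invariant \invE, the item heights $h(g)$ cover the container height budget $h_B K_g$ up to an additive slack of at most $1$ per container, and the widths of group $g$ are sufficient to account for the rounded width $w^R(\mathit{right}(g))$. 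The multiplicative factor $(1+4\epsilon)$ absorbs the loss from width rounding within a category (where the ratio of widest to narrowest is at most $2$, but tightened by the efficient filling of \invE) together with the overpayment of at most~$1$ per container height from the same invariant.

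For the integrality step, I would invoke the standard argument that any basic optimal solution to the LP of Definition \ref{lp:binpacking} has at most $m = \bigO{\omega/\epsilon}$ fractional coordinates, by Lemma \ref{lemma:NumberOfGroups}. Rounding each fractional pattern up opens at most $\bigO{\omega/\epsilon}$ additional levels, each of height $h_B$. Since $h_B$ will be chosen as a polynomial in $1/\epsilon$ (small enough so that this product is dominated by $1/\epsilon^4$), and $\omega = \bigO{\log(1/\epsilon)}$, this yields the additive $\bigO{1/\epsilon^4}$ term.

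The main obstacle is the careful construction of the fractional level assignment in the first step: the items of group $g$ must jointly justify the widths of the rounded containers of $\mathit{right}(g)$ while one also has to account for the containers of $g$ itself. Three sources of error must be balanced without breaking the $(1+4\epsilon)$ bound—the per-container height slack of up to $1$ from \invE, the reduced occupancy of the flexible groups from \invC and \invD, and the width inflation of at most a factor $2$ within each category from \invA—and it is in the bookkeeping of these slacks, rather than in any single clever trick, where the proof will require care.
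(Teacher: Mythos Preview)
Your route is genuinely different from the paper's. The paper does not build a fractional LP solution from an optimal packing of $I_L$; instead it invokes the machinery of \emph{homogenous lists} from Han et al.\ (Definition~\ref{def:homogenousLists}, Theorem~\ref{theo:HomogenousLists}). Concretely, it defines a rounded-down instance $\hat I_L$ (each item in group $g$ gets width $w^R(\mathit{right}(g))$), drops four border groups per category to form $C_1\subseteq C$, proves that $C_1$ is $(1+2\epsilon)$-homogenous to $\hat I_L$ (Lemma~\ref{lemma:JhomogenousI}), and then applies the black-box Theorem~\ref{theo:HomogenousLists} to obtain $\OPT(C_1)\le (1+2\epsilon)(1+\delta)\OPT(\hat I_L)+\bigO{1/\delta^2}h_B$. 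The dropped groups are packed separately at cost $\le \epsilon\,\SIZE(I_L)$. Your slicing-plus-rounding idea can be made to work and in fact sidesteps the homogenous-lists theorem entirely: slicing an optimal packing of $I_L$ horizontally, replacing each item in group $g$ by the width $w^R(\mathit{right}(g))$, and scaling by $h_B/s$ yields a feasible fractional solution of value $\le s\,\OPT(I_L)/h_B$, after which basic-solution rounding adds only $m$ levels. The key ratio you need, $h_B K_g\le s\, h(\mathit{left}(g))$ with $s=1+2\epsilon$, is exactly Lemma~\ref{lemma:JhomogenousI}/\ref{lemma:KcanBeBounded}.

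There are, however, two real gaps in your sketch. First, you never handle the groups that the shift cannot cover: the leftmost group in each category has no left neighbour, and the flexible groups $(l,A,0)$, $(l,B,q(l,B))$ (and their neighbours) do not satisfy the exact container counts needed for the ratio $h_B K_g\le s\,h(\mathit{left}(g))$. The paper removes four groups per category into $C\setminus C_1$ and packs them in $\le 4\omega k$ extra levels of height $h_B$, which is $\le \epsilon\,\SIZE(I_L)$ by the choice of $k$. Without this step your fractional construction simply has no coverage for those containers. Second, your listed error source ``width inflation of at most a factor $2$ within each category'' is a misdiagnosis: the factor-$2$ category bound plays no role in the $(1+4\epsilon)$ ratio. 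Width rounding is handled \emph{exactly} by the shift $g\mapsto\mathit{right}(g)$ (items in $g$ are at least as wide as the rounded containers of $\mathit{right}(g)$, by \invB). The multiplicative loss $s=1+2\epsilon$ comes solely from the product $H\cdot K$ with $H=h_B/(h_B-1)$ and $K=\max_g K_g/(K_g-1)$, i.e.\ from the height slack in \invE and the ``minus one container'' slack, not from any width ratio.
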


\paragraph{Preliminaries}
From now on, we suppose that the parameter $k$ of the invariant is 
$k = \kGeneric$ and the container height is $h_B = 13 / \epsilon^2$.
Let $\omega = \wGeneric$  (see Lemma~\ref{lemma:Omega}).
Furthermore, we assume from now on
\begin{equation}
\label{eq:sizeGEQhb+1}
\SIZE(I_L) \geq \frac{4 \omega h_B}{\epsilon} (h_B + 1) \,.
\end{equation}
%
%Note that Equation\nobreakspace \textup {(\ref {eq:sizeGEQhb+1})} also implies the following two lower bounds
%that will be required at some point in this form:
%%
%\begin{align}
%\SIZE(I_L) &\geq \frac{24 \omega h_B (h_B-1) }{\epsilon h_B - 2 \epsilon} \label{eq:sizeGEQ83} \\
%\SIZE(I_L) &\geq \frac{4 \omega h_B}{\epsilon} \cdot \frac{1+3\epsilon}{\epsilon}
%\label{eq:sizeGEQ2}
%\end{align}
%%
%Thereby, Equation\nobreakspace \textup {(\ref {eq:sizeGEQ83})} is required to apply Lemma\nobreakspace \ref {lemma:NumberOfGroups}.

The proof technique from \cite{han2007strip} uses the notion
of \textit{homogenous lists}, whose definition is given in the following.
For a list of rectangles $R$, let $h(R) = \sum_{r \in R} h(r)$.
Furthermore, let $R[j]$ be the list of rectangles from $R$ having width $w_j$.

\begin{definition}[Homogenous lists, \cite{han2007strip}] 
	\label{def:homogenousLists}
	Let $P$ and $Q$ be two lists of rectangles, where any rectangle takes
	a width of $q$ distinct numbers $w_1,w_2,\ldots,w_q$.
	$P$ is \textit{$s$-homogenous} to $Q$, where $s \geq 1$, if
	$h(Q[j]) \leq h(P[j]) \leq s \cdot h(Q[j])$ for all $j \in \{1,\ldots, q \} 
	$.
\end{definition}

The next lemma states an important property of homogenous lists.
For the proof we refer the reader to \cite{han2007strip} and 
references therein.

\begin{theorem}[\cite{han2007strip}]
	\label{theo:HomogenousLists}
	Let $P$ and $Q$ be two lists of rectangles, where each rectangle has maximum
	height $h_{max}$. If $P$ is $s$-homogenous to $Q$, then
	for any $\delta > 0$:
	$$ \OPT(P) \leq s \cdot (1+\delta) \OPT(Q) + \bigO{1 / \delta^2} h_{max}$$
\end{theorem}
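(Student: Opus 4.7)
The target inequality compares integer strip-packing values of two lists differing only in the total height devoted to each of $q$ common widths. My plan is to route the argument through the \emph{configuration-based fractional strip packing LP}: bound $\OPT(P)$ from above by its fractional value $\LIN(P)$ up to a Kenyon--R\'emila rounding loss, and bound $\LIN(P)$ from above by $s\cdot\OPT(Q)$ by a direct scaling argument. I suspect this is exactly what Han \etal \cite{han2007strip} do, because it cleanly decouples the approximation loss (from rounding) from the homogeneity loss (from $s$).

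Let me set up the LP. For each multiset $C$ of the $q$ widths $w_1,\dots,w_q$ with $\sum_j m_j(C)\,w_j \le 1$, introduce a variable $t_C \ge 0$ representing ``time spent'' on configuration $C$, and let the LP minimize $\sum_C t_C$ subject to $\sum_C t_C\, m_j(C) \ge h(P[j])$ for every width class $j$. Call the optimum $\LIN(P)$; by a horizontal-sweep argument, any integer strip packing of $P$ of height $H$ induces a feasible fractional solution of value $H$, so $\LIN(P)\le\OPT(P)$. Now for the scaling direction: an optimal packing of $Q$ of height $\OPT(Q)$ induces, again by horizontal sweep, a feasible fractional solution $t^Q$ for the LP analogue of $Q$ (with right-hand sides $h(Q[j])$) of value $\OPT(Q)$. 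Scaling $t := s\cdot t^Q$ gives $\sum_C t_C\, m_j(C) = s\cdot h(Q[j]) \ge h(P[j])$ by $s$-homogeneity, so $t$ is feasible for the LP of $P$ with value $s\cdot\OPT(Q)$. Hence
\[
\LIN(P)\;\le\; s\cdot\OPT(Q).
\]

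The remaining step is to round a fractional solution of value $\LIN(P)$ to an integer packing of $P$. This is exactly the Kenyon--R\'emila rounding (also used in the Han~\etal framework): there is an integer packing of $P$ of height at most $(1+\delta)\LIN(P) + \bigO{h_{\max}/\delta^2}$, where the $(1+\delta)$ factor absorbs the packing of configurations by next-fit-decreasing-height within $\bigO{1/\delta^2}$ ``overflow'' strips each of height $h_{\max}$. Chaining this with the LP bound from the previous paragraph gives
\[
\OPT(P)\;\le\;(1+\delta)\LIN(P) + \bigO{h_{\max}/\delta^2}
\;\le\; s\,(1+\delta)\OPT(Q) + \bigO{h_{\max}/\delta^2},
\]
which is the claimed inequality.

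The main obstacle is the rounding step, since it is the only place where the additive $\bigO{h_{\max}/\delta^2}$ term is incurred and where a nontrivial prior result must be invoked; the scaling argument that reduces the problem to $\LIN(P)\le s\cdot\OPT(Q)$ is essentially immediate once the fractional LP is in place. A secondary subtlety is ensuring that the horizontal-sweep step really does produce a fractional solution with the correct coverage, i.e.\ that each rectangle of width $w_j$ in $Q$ contributes exactly its height $h(\cdot)$ to the coverage of width class $j$; this follows because at each $y$-coordinate a rectangle containing that level contributes one unit to $m_j$ of the configuration at height $y$, and integrating over $y$ recovers $h(Q[j])$ exactly.
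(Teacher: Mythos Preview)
The paper does not supply its own proof here; the sentence immediately following the statement defers entirely to \cite{han2007strip} and references therein, so there is no in-paper argument to compare your proposal against. Your two-step outline---(i) the scaling inequality $\LIN(P)\le s\cdot\OPT(Q)$ obtained by horizontally sweeping an optimal packing of $Q$ and multiplying the resulting fractional configuration solution by $s$, and (ii) the fractional-to-integral rounding $\OPT(P)\le(1+\delta)\LIN(P)+\bigO{h_{\max}/\delta^2}$ via Kenyon--R\'emila---is the standard route, and your identification of step~(ii) as the only place requiring nontrivial prior machinery is accurate. Step~(i) is correct as written.

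One remark on step~(ii): the bound $\OPT(P)\le(1+\delta)\LIN(P)+\bigO{h_{\max}/\delta^2}$ does not follow from a single application of NFDH to configurations as your parenthetical suggests; it needs the full Kenyon--R\'emila pipeline (separate narrow items, linearly group the wide ones into $\bigO{1/\delta^2}$ width classes, take a basic feasible LP solution, round). In particular, without a lower bound on widths the number of configuration ``columns'' is not a priori $\bigO{1/\delta^2}$, so the width-grouping step is essential. In the paper's application (Lemma~\ref{lemma:JhomogenousI}) all widths are at least $\epsilon$, which is exactly the regime where this rounding is straightforward.
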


\paragraph*{Proof idea of Theorem~\ref{theo:approxGuarantee}}
Instead of a strict order over containers (like in \cite{han2007strip}),
we can make use of the fact that containers are rounded to the
widest container of the group. Therefore, all items in $\mathit{right}(g)$ 
have a smaller or equal width than the items in $g$.
This observation leads to the definition of two instances, one instance $\hat{I}_L$
with rounded-down items and the other $C_1$ with all rounded containers except 
from some border groups of each category.
With the notion of homogenous lists (Definition\nobreakspace \ref {def:homogenousLists}), in the end 
Theorem\nobreakspace \ref {theo:HomogenousLists} can be applied.
Let $\con$ and $R$ be like in Theorem~\ref{theo:approxGuarantee}
and write $C$ instead of $\CR{\con}{R}$ for short.

\paragraph*{Definition of $\mathbf{C_1}$ and $\mathbf{\hat{I}_L}$}
For a category $l \in W$, the group $g_0(l)$ is defined as $(l,A,0)$ if
block $A$ is non-empty and $(l,B,0)$ otherwise. Analogously, define
$g_q(l)$ as $(l,B,q(l,B))$ if the $B$-block is non-empty and 
$(l,A,q(l,A))$ otherwise.
Let 
$G_1 = G \setminus \bigcup_{l \in W} \{g_0(l), \mathit{right}(g_0(l)),
g_q(l), \allowbreak \mathit{left}(g_q(l)) \} \,.$
That is, $G_1$ contains all groups except for the two left- and rightmost
groups (see Figure\nobreakspace \ref {fig:groupStructure}) which are non-empty.
Let $C_1 = \{ c \in C \mid R(c) \in G_1 \}$ be the set of rounded containers
of a group in $G_1$.

By rounding down every item from group $g$ to the rounded width of containers 
from the group $\mathit{right}(g)$, we obtain a rounded instance $\hat{I}_L$.
Formally, for each item $i \in \ILg$ with $g \in G_1$
define a new item $\hat{i} \in \hat{I}_L$
with $h(\hat{i}) = h(i)$ and 
$w(\hat{i}) = w^R(c)$ for any $c \in C$ with $R(c) = \mathit{right}(g)$.
Note that with invariant property \invB, $w(\hat{i}) \leq w(i)$ 
for all $i \in I_L$. Together with $h(\hat{i}) = h(i)$ we get
\begin{equation}
\label{eq:IIhat}
\OPT(\hat{I}_L) \leq \OPT(I_L) \,.
\end{equation}

\begin{lemma}
	\label{lemma:JhomogenousI}
	$C_1$ is $s$-homogenous to $\hat{I}_L$ with $s \leq 1+2\epsilon$.
\end{lemma}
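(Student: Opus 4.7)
The plan is a class-by-class comparison of widths between $C_1$ and $\hat I_L$. By Definition~\ref{def:roundedContainerInstace} and the construction of $\hat I_L$, for each group $g \in G_1$ the $K_g$ containers in $C_1$ belonging to $g$ all carry the rounded width $w^R_g$, contributing $K_g \cdot h_B$ to $h(C_1[w^R_g])$, while the items originally in $I_L^{\mathit{left}(g)}$ are rounded down to this same width $w^R_g$, contributing $h(\mathit{left}(g))$ to $h(\hat I_L[w^R_g])$. The four groups excluded per category from $G_1$ (namely $g_0(l)$, $\mathit{right}(g_0(l))$, $g_q(l)$, $\mathit{left}(g_q(l))$) are chosen precisely so that every occurring width on either side has a matching partner on the other.

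The two homogeneity inequalities come from invariants \invC, \invD, and \invE. Applied to $\mathit{left}(g)$, \invE sandwiches $h(\mathit{left}(g))$ between $(h_B - 1)(K_{\mathit{left}(g)} - 1)$ and $(h_B - 1)\,K_{\mathit{left}(g)}$, while \invC, \invD force $K_g, K_{\mathit{left}(g)} \in \{2^l k,\, 2^l(k-1)\}$, equal except at the single A-to-B block transition of each category, where $K_{\mathit{left}(g)} = 2^l k$ and $K_g = 2^l(k-1)$. The lower bound $h(\hat I_L[w]) \leq h(C_1[w])$ then reduces to $(h_B - 1)\,K_{\mathit{left}(g)} \leq K_g \,h_B$; this is immediate when the two $K$'s agree, and at the transition reduces to $k \geq h_B$, which follows from the size assumption~\eqref{eq:sizeGEQhb+1} and the definition $k = \kGeneric$. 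The upper bound $h(C_1[w]) \leq (1 + 2\epsilon)\, h(\hat I_L[w])$ uses the lower estimate from \invE together with $K_{\mathit{left}(g)} \geq K_g$ to bound the ratio by
\[
\frac{K_g \cdot h_B}{(h_B - 1)(K_g - 1)} \;=\; \frac{h_B}{h_B - 1} \cdot \frac{K_g}{K_g - 1}.
\]
With $h_B = 13/\epsilon^2$ and $K_g \geq 2^l(k-1) \geq h_B - 1$ (again from~\eqref{eq:sizeGEQhb+1}), each of the two factors is at most $1 + \epsilon^2/12$, so their product stays well below $1 + 2\epsilon$ for $\epsilon \in (0, 1/4]$.

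The main obstacle I anticipate is the boundary bookkeeping: verifying that the four groups removed per category from $G_1$ are removed in exactly the right places so that no width class appears on only one side of the matching, and also that no flexible group (whose $K$ value is only upper bounded by \invC or \invD) ever shows up as $g$ or $\mathit{left}(g)$ in the case analysis above—both facts being needed for the clean two-sided estimates.
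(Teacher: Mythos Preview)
Your plan is the paper's proof: match each $g \in G_1$ with $g' = \mathit{left}(g)$, use \invE together with \invC--\invD to sandwich $h(g')$ against $h_B K_g$, and factor the ratio as $\tfrac{h_B}{h_B-1}\cdot\tfrac{K_g}{K_g-1}$ (the paper isolates this last bound as Lemma~\ref{lemma:KcanBeBounded}, with $H\leq 1+\tfrac{\epsilon}{1+\epsilon}$ and $K\leq 1+\epsilon$ giving $HK\leq 1+2\epsilon$ exactly, rather than your tighter $1+\epsilon^2/12$ per factor). Your anticipated boundary concern is precisely what the construction of $G_1$ is meant to handle: stripping two groups at each end of every category ensures that for $g\in G_1$ neither $g$ nor $\mathit{left}(g)$ is a flexible group (this is the paper's property~(i)), so \invC--\invD give exact container counts on both sides and the only nontrivial case is the A-to-B transition, which reduces to $k\geq h_B$ just as you say.
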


\begin{proof}
	By definition $C_1$ is a set of rounded containers, where each container
	has height $h_B$ and the width of the most wide item in its group.
	Let $w_1,\ldots,w_q$ denote these widths. 
	The set $\hat{I}_L$ contains items of unchanged height and rounded-down width.
	
	Let $1 \leq j \leq q$. We consider the sets $\hat{I}_L[j]$ and $C_1[j]$ 
	containing the respective rectangles of width $w_j$.
	Let $G[j] = \{ g \in G \mid \exists c \in C_1[j] \colon R(c) = g \}$ 
	be the set of groups present in $C_1[j]$. 
	For each group $g \in G_1[j]$, items in $\hat{I}_L$
	of the group $g'=\mathit{left}(g)$ have width $w_j$ as they get rounded down
	to the width of the rounded containers to the right.
	Therefore, we can write the property from Definition\nobreakspace \ref {def:homogenousLists} as
	\begin{equation}
	\label{eq:homogenousListsGroupNotationSum}
	\sum_{g \in G[j]} h(g')
	\leq \sum_{g \in G[j]} h_B K_g 
	\leq s \sum_{g \in G[j]} h(g') \,.
	\end{equation}
	To prove Equation\nobreakspace \textup {(\ref {eq:homogenousListsGroupNotationSum})}, we show in the following
	for any group $g \in G_1$ and $g'=\mathit{left}(g)$
	\begin{align}
	\label{eq:homogenousListsGroupNotation}
	h(g') \leq h_B K_g \leq s h(g') \,.
	\end{align}
	
	Before we can show Equation~\ref{eq:homogenousListsGroupNotation} we argue that two properties hold:
	\begin{enumerate}[label=(\roman*)]
		\item $K_g \leq K_{g'} \leq K_g + 2^l$. 
		By construction of $G_1$, neither $g$ nor $g'$ can be a flexible group.
		Thus, we have fixed numbers of containers by \invC-\invD.
		If $g$ and $g'$ are in the same block, $K_g = K_{g'}$.
		If $g'$ is in block $A$ and $g$ in block $B$, we have $K_{g'}=2^l k$
		and $K_g = 2^l (k-1)$.
		
		\item $2^l (h_B-1) \leq 2^l (k-1) \leq K_g$.
		The first inequality holds since $h_B \leq k$: The minimum size on $\SIZE(I_L)$ (Equation~\ref{eq:sizeGEQhb+1})
		implies $h_B \leq \frac{\epsilon}{4 \omega h_B} \SIZE (I_L) - 1$ which is not greater than $k$.
		The second inequality follows from \invC-\invD.
			\end{enumerate}
	
	Now, the first inequality of eq.\nobreakspace \textup {(\ref {eq:homogenousListsGroupNotation})} can be
	proven:
	$$ h(g') \overset{\invE}{\leq}  (h_B - 1) K_{g'}
	\overset{\text{(i)}}{\leq} (h_B - 1) (K_g + 2^l) 
	=    K_g h_B - K_g + 2^l (h_B - 1)
	\overset{\text{(ii)}}{\leq} K_g h_B \,.
	$$
	
	It remains to prove the second inequality of Equation\nobreakspace \textup {(\ref {eq:homogenousListsGroupNotation})}.
	Let $K = \max_{g \in G_1} \frac{K_g}{K_g - 1}$ and 
	$H = \frac{h_B}{h_B - 1}$.
	By Lemma\nobreakspace \ref {lemma:KcanBeBounded}, which we prove later,
	$HK \leq s$ for $s \leq 1 + 2 \epsilon$ and this gives us
	$$h_B K_g 
	= \frac{h_B}{h_B - 1} \frac{K_g}{K_g - 1} 
	(h_B - 1) (K_g - 1) 
	\leq HK	(h_B - 1) (K_g - 1)
	\leq s (h_B - 1) (K_g - 1) 
	\leq s h(g') \,,
	$$
	where the last inequality follows from \invE and (i).
	Thus, $C_1$ is $s$-homogenous to $\hat{I}_L$.
\end{proof}

\begin{lemma}
	\label{lemma:KcanBeBounded}
	Let $K = \max_{g \in G_1} \frac{K_g}{K_g - 1}$ and
	$H = \frac{h_B}{h_B - 1} $.
	It holds that $HK \leq 1 + 2 \epsilon$.
\end{lemma}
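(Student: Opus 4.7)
The plan is to bound $K$ and $H$ separately, reducing everything to the lower bound on $k$ coming from the size assumption $\SIZE(I_L) \geq \frac{4\omega h_B}{\epsilon}(h_B+1)$ and the specific choice $h_B = 13/\epsilon^2$.

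First I would bound $K$. Since $G_1$ contains no flexible group, properties \invC and \invD force $K_g \in \{2^l k,\, 2^l(k-1)\}$ for every $g = (l,\cdot,\cdot) \in G_1$; in particular $K_g \geq k-1$ for all such $g$. Because the function $x \mapsto x/(x-1)$ is strictly decreasing on $(1,\infty)$, this yields
\[
K \;=\; \max_{g \in G_1} \frac{K_g}{K_g - 1} \;\leq\; \frac{k-1}{k-2}.
\]

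Next I would show $k \geq h_B + 1$. By assumption $\frac{\epsilon}{4\omega h_B}\SIZE(I_L) \geq h_B + 1$, and since $1/\epsilon$ is an integer the quantity $h_B = 13/\epsilon^2$ is also an integer, so $k = \lfloor \frac{\epsilon}{4\omega h_B}\SIZE(I_L)\rfloor \geq h_B + 1$. Using monotonicity of $x/(x-1)$ once more with $k-1 \geq h_B$ gives
\[
K \;\leq\; \frac{k-1}{k-2} \;\leq\; \frac{h_B}{h_B - 1} \;=\; H,
\]
so $HK \leq H^2$.

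Finally I would plug in $h_B = 13/\epsilon^2$ and estimate:
\[
H^2 \;=\; \left(1 + \frac{1}{h_B - 1}\right)^{\!2} \;=\; 1 + \frac{2}{h_B - 1} + \frac{1}{(h_B - 1)^2}.
\]
Since $\epsilon \leq 1/4$ we have $h_B - 1 \geq 12/\epsilon^2$, hence $\frac{2}{h_B - 1} \leq \epsilon^2/6$ and $\frac{1}{(h_B - 1)^2} \leq \epsilon^4/144$. Combining,
\[
HK \;\leq\; 1 + \frac{\epsilon^2}{6} + \frac{\epsilon^4}{144} \;\leq\; 1 + \epsilon^2 \;\leq\; 1 + 2\epsilon,
\]
which is the claimed bound. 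There is no genuine obstacle here; the only point requiring care is confirming $k \geq h_B + 1$ via the size assumption and integrality of $h_B$, as the rest is elementary monotonicity plus arithmetic.
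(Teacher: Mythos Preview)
Your proof is correct and uses the same ingredients as the paper—invariants \invC--\invD to lower bound $K_g$, the size assumption \eqref{eq:sizeGEQhb+1} to lower bound $k$, and the explicit value $h_B=13/\epsilon^2$. The only difference is organizational: the paper bounds $K\le 1+\epsilon$ (via $K_g\ge(1+\epsilon)/\epsilon$) and $H\le\frac{1+2\epsilon}{1+\epsilon}$ separately so that the product telescopes to exactly $1+2\epsilon$, whereas you observe directly that $K_g\ge k-1\ge h_B$ gives $K\le H$ and then bound $H^2\le 1+\epsilon^2$, which is in fact sharper than needed.
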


\begin{proof}
	To prove the claim we show 
	\begin{enumerate}[label=(\roman*)]
		\item $H \leq 1+\gamma$ for $\gamma \leq \frac{\epsilon}{1+\epsilon}$ and
		\item $K \leq 1+\epsilon$.
	\end{enumerate}
	%Then,	$HK \leq (1+\gamma) (1+\epsilon) = 1+ \epsilon + (1+\epsilon)\gamma \leq 1 + 2 \epsilon$.
	For property (i), let $\gamma = \frac{\epsilon^2}{13 - \epsilon^2} \leq
	\frac{\epsilon}{1+\epsilon}$. By definition of $h_B$ it follows $h_B = 1 + \gamma$.
	To show (ii), let $g \in G_1$.
	The minimum size assumption (Equation~\ref{eq:sizeGEQhb+1}) implies
	$\frac{\epsilon}{4 \omega h_B} \SIZE(I_L) - 2 \geq \frac{1+\epsilon}{\epsilon}$
	and therefore we get
	$$
	K_g \geq 2^l(k-1)
		   \geq k-1
		    =     \kGeneric - 1
		   \geq \frac{\epsilon}{4 \omega h_B} \SIZE(I_L) - 2 
		   \geq \frac{1+\epsilon}{\epsilon} \,,
	$$
	where the first inequality is due to \invC-\invD.
	The above statement is equivalent to $1-\frac{1}{K_g} \geq \frac{1}{1+\epsilon}$.
	Hence,
		$ K =     \max_{g \in G_1} \frac{K_g}{K_g - 1}
		=     \max_{g \in G_1} \frac{1}{1 - \frac{1}{K_g}}
		\leq  1+\epsilon
		$.

\end{proof}

With the previous lemmas we are now able to the main theorem of this section.

\begin{proof}[Proof of Theorem \ref{theo:approxGuarantee}]
	Let $s=1+2\epsilon$.
	By Lemma\nobreakspace \ref {lemma:JhomogenousI} 
	$C_1$ is $s$-homogenous to $\hat{I}_L$, so we
	can apply Theorem\nobreakspace \ref {theo:HomogenousLists} and get
	for any $\delta > 0$
	\begin{equation}
	\label{eq:JrIhat}
	\OPT(C_1) \leq s (1+\delta) \OPT(\hat{I}_L) + \bigO{1 / \delta^2} h_{max} \,.
	\end{equation}
	
	By construction of $G_1$, for each category $l \in W$ four groups were dropped
	from $G$ to obtain $G_1$. Each of the groups has by  \invC-\invD at most
	$2^l k$ containers. As each container of category $l$ has by \invA width at most $2^{-l}$ , 
	in one level of the strip $2^l$ containers can be placed.
	Hence, for each $l \in W$ we need at most 
	$4 \cdot 2^l k \cdot \frac{1}{2^l} = 4 k$
	extra levels of height $h_B$, causing additional height of $4 \card{W} h_B k$.
	Note that
	\begin{equation}
	\label{eq:LossOfG+}
	4 h_B \card{W} k
	\leq 4 h_B \omega k
	=    4 h_B \omega \kGeneric
	\leq    \epsilon \SIZE(I_L)	\,. 
	\end{equation}
	Thus any packing of $C_1$ can be turned into a packing of $C$,
	placing the missing containers into extra levels of total height
	at most $\epsilon \SIZE(I_L)$ which gives us
	\begin{equation}
	\label{eq:JrJrHat}
	\OPT(C) \leq \OPT(C_1) + \epsilon \SIZE(I_L) \,.
	\end{equation}
	Finally, we can bound $\OPT(C)$ as follows:
	\begin{align*}
	\OPT(C) &\leq \OPT(C_1) + \epsilon \SIZE(I_L) 
	&\text{eq.\nobreakspace \textup {(\ref {eq:JrJrHat})}}\\
	&\leq  s (1+\delta) \OPT(\hat{I}_L) + \epsilon \SIZE(I_L) + 
	\bigO{1 / \delta^2} h_{max}
	&\text{eq.\nobreakspace \textup {(\ref {eq:JrIhat})}} \\
	&\leq  s (1+\delta) \OPT(\hat{I}_L) + \epsilon \OPT(I_L) 
	+ 	\bigO{1 / \delta^2} h_{max}
	& \SIZE \leq \OPT \\			 
	&\leq  (s (1+\delta) + \epsilon) \OPT(I_L) 
	+ \bigO{1 / \delta^2} h_{max}
	&\text{eq.\nobreakspace \textup {(\ref {eq:IIhat})}}
	\end{align*}
	
	Setting	$\delta=\epsilon / s$ we finally get
	$       s (1+\delta) + \epsilon 
	%=       s (1+\frac{\epsilon}{s}) + \epsilon
	%=       s + \epsilon + \epsilon 
	%\leq    
	= 1 + 4 \epsilon $
	as approximation ratio.
	As the maximum height $h_{max}$ is the container height $h_B$,
	the additive term is $\bigO{1 / \delta^2} h_B = \bigO{1 / \epsilon^4}$.
\end{proof}

\subsection{Interim Conclusion: Offline Algorithm}
\label{sec:interimConclusion}

So far we described how big and flat items get packed into containers
and analyzed the properties of this approach:
By \MakeUppercase Theorem\nobreakspace \ref {theo:approxGuarantee}, 
the rounded container instance is an approximation
to the strip packing problem of asymptotic ratio $1+4 \epsilon$.
Furthermore, solving the container packing problem can be done 
via LP \ref {lp:binpacking} since by Lemma\nobreakspace \ref {lemma:NumberOfGroups}
the number of rows is bounded by
$\bigO{\frac{\omega}{\epsilon}} 
= \bigO{\frac{1}{\epsilon} \log{\frac{1}{\epsilon}}}$.

Therefore, we could handle the offline scenario for big and flat items
with the techniques presented so far completely.
A container assignment and a rounding function
fulfilling the invariant properties can be found as follows:
Partition $I_L$ into $I_L^l$ according to categories $l \in W$.
For each set $I_L^l$, assign the widest items of each category $l$  to group $(l,A,0)$, 
the second widest items to $(l,A,1)$, and so on, using $2^l k$ containers for each group
(except for the last one). Block $B$ remains empty.
For property \invE, ensure that the total height of items per group
$h(g)$ is just above the lower bound, \ie $h(g) \leq  (h_B - 1) (K_g - 1) +1$.

Moreover, narrow items can be placed with a modified first fit algorithm
into gaps of the packing, presented in the later Section\nobreakspace \ref {sec:narrowItems}.
An outline of the offline AFPTAS is given in Algorithm\nobreakspace \ref {alg:offline}.

\begin{algorithm}
	\caption{Offline AFPTAS}
	\label{alg:offline}
	
	\SetAlgoLined
	\DontPrintSemicolon
	\SetKwInOut{Input}{Input}
	\SetKw{Or}{or}
	\SetKwFunction{Shift}{Shift}
	
	\SetAlgoNoEnd%
	
	\BlankLine
	
	Find a container assignment $\con$ and rounding function $R$ 
	fulfilling the invariant properties. \;
	
	Solve the container packing problem for $\CR{\con}{R}$ via
	LP \ref {lp:binpacking}. \;
	
	Obtain the packing by placing each item accordingly to the
	container assignment and the internal packing structure of
	a container (see Figure\nobreakspace \ref {abs:fig:containerContent}).
	
	Place narrow items greedily into gaps (see Section\nobreakspace \ref {sec:narrowItems}). \;
\end{algorithm}

\section{Online Operations for Big and Flat Items}
\label{sec:OperationsBigFlat}
%auto-ignore
% !TeX root = ../main.tex

In this section we present operations that integrate arriving items into the packing structure
such that all invariant properties are maintained.
The central operation for this purpose is called \textsc{Shift} and introduced in 
Section\nobreakspace \ref {sec:shiftOperation}.
While the insertion of big items (Section\nobreakspace \ref {sec:insertBig}) is basically a pure \textsc{Shift},
inserting flat items has to be done more carefully like described in Section\nobreakspace \ref {sec:insertFlat}.
From now on, the instance $I_L$ at time step $t$ is denoted by $I_L(t)$
and the arriving item by~$i_t$.
Nevertheless, we omit the parameter $t$ whenever it is clear from the context.

\subsection{Auxiliary Operations}
\label{sec:auxFunctions}

At first we define some auxiliary operations used in the \textsc{Shift} algorithm.

\begin{figure}	
	\begin{flushleft}
		\begin{subfigure}[t]{0.30\textwidth}
			\includegraphics[width=\textwidth,page=1]{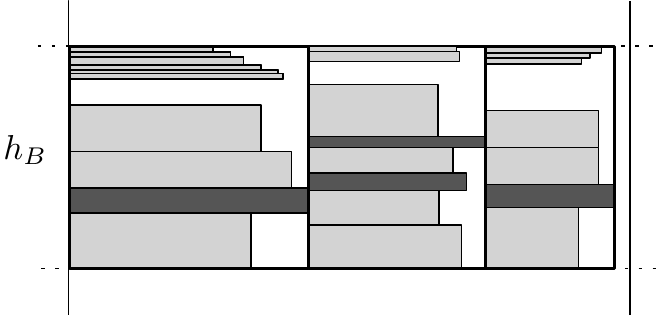}
			\caption{Given packing}
			\label{fig:sinkAlignOrigin}
		\end{subfigure}
		~
		\begin{subfigure}[t]{0.30\textwidth}
			\includegraphics[width=\textwidth,page=2]{align.pdf}
			\caption{Gaps due to absent items}
			\label{fig:sinkAlignOriginGaps}
		\end{subfigure}
		~
		\begin{subfigure}[t]{0.30\textwidth}
			\includegraphics[width=\textwidth,page=3]{align.pdf}
			\caption{After \textsc{Sink}}
			\label{fig:sink}
		\end{subfigure}	
		\vspace*{10pt}
		\begin{subfigure}[t]{0.30\textwidth}
			\includegraphics[width=\textwidth,page=4]{align.pdf}
			\caption{After \textsc{Align}}
			\label{fig:align}
		\end{subfigure}	
		~
		\begin{subfigure}[t]{0.30\textwidth}
			\includegraphics[width=\textwidth,page=5]{align.pdf}
			\caption{Overlapping item}
			\label{fig:sinkAlignOverlap}
		\end{subfigure}	
		~
		\begin{subfigure}[t]{0.30\textwidth}
			\includegraphics[width=\textwidth,page=6]{align.pdf}
			\caption{After \textsc{Stretch}}
			\label{fig:stretch}
		\end{subfigure}	
	\end{flushleft}
	
	\caption{Malformed packings due to inserted or removed items.
		Operations \textsc{Sink}, \textsc{Align}, and \textsc{Stretch}
		restore the packing structure.}	
\end{figure}

\begin{description}
	\item[WidestItems($T, s_{min}$)] 
	returns a set of items $S \subseteq T$
	such that $w(i) \geq w(i')$ for each $i \in S, i' \in T \setminus S$
	and $h(S) \in [s_{min}, s_{min}+1)$.
	That is, it picks greedily widest items from $T$ until the total height
	exceeds $s_{min}$.
	
	\item[Sink($c$)]
	places big items in container $c$ on a contiguous stack starting from the bottom 
	(gaps due to removed items are closed).
	See Figures\nobreakspace  \ref {fig:sinkAlignOrigin} to\nobreakspace  \ref {fig:sink}.
	
	\item[Align($u$)]
	places the containers of level $u$ according
	to their actual widths (widest item inside) left-aligned and without gaps.
	See Figures\nobreakspace  \ref {fig:sink} to\nobreakspace  \ref {fig:align}.
	This operation is necessary for the insertion of narrow items treated in the later 
	Section\nobreakspace \ref {sec:narrowItemsCombination}, but for convenience
	already presented here.
	
	\item[Stretch($u$)]
	resolves overlaps occurring when a new item $i$ gets placed into a container $c$ with $w(c) < w(i) \leq w^R(c)$.
	(see Figures \nobreakspace \ref {fig:stretch} to \ref{fig:sinkAlignOverlap}).
	Note that the level assignment is according to the LP \ref {lp:binpacking}
	which considers the rounded width $w^R(c)$ for a container $c$.
	The conditions for the insertion of an item (given later in 
	Section\nobreakspace \ref {sec:shiftOperation}) ensure that 
	only items $i$ with $w(i) \leq w^R(c)$ get inserted into a container $c$.
	Therefore, after \textsc{Stretch} all containers still 
	fit into their level\footnote{
		Like shown in the later Section\nobreakspace \ref {sec:narrowItems}, the gaps
		in a level will be filled with narrow items. These will be 
		suppressed by the \textsc{Stretch} operation.}.	
	
	\item[Place($g,S$)]
	packs items in $S$ into appropriate containers of group~$g$. 
	Depending on the item type (big or flat), different algorithms
	are used, which will be present in Sections\nobreakspace \ref {sec:insertBig} 
	and\nobreakspace  \ref {sec:insertFlat}.
	
	\item[InsertContainer (Algorithm\nobreakspace \ref {alg:insertContainer})]
	reflects the insertion of the new containers in the LP/ILP-solutions of LP~\ref{lp:binpacking}.
	To maintain the approximation guarantee, it makes use of the procedure 
	\textsc{Improve} from \cite{jansen2013robust}.
	Loosely speaking, calling \textsc{Improve}($\alpha$,$x$,$y$) on LP/ILP-solutions
	$x$, $y$ yields a new solution where the approximation guarantee is maintained and
	the additive term is reduced by $\alpha$.
	Further details are given in Section\nobreakspace \ref {sec:improveApprox} and Appendix\nobreakspace \ref {app:improve}.
\end{description}

\begin{algorithm}
	\caption{Insertion of a container $c$}
	\label{alg:insertContainer}
	
	\SetAlgoLined
	\DontPrintSemicolon
	\SetKwInOut{Input}{Input}
	\SetKwFunction{Shift}{Shift}
	\SetKwFunction{Improve}{Improve}	
	
	\Input{
		%Container $c$ \\
		%Rounded container instance $\CR{\con}{R}$ \\
		$x,y$ fractional and integral solutions to LP($\CR{\con}{R}$) \\
	}
	\BlankLine
	
	\Improve{1,$x$,$y$} \;
	Let $P_i$ the pattern such that $P_i = \{ w^R(c) : 1 \}$ \;
	Set $x_i := x_i + 1$ and $y_i := y_i + 1$ \;
\end{algorithm}

\subsection{Shift Operation}
\label{sec:shiftOperation}

The insertion of a set of items $S$ into containers
of a suitable group $g$ may violate \invE.
In this case, the \textsc{Shift} operation modifies the container
assignment such that \invA to \invE are fulfilled.

The easy case is when $h(g) + h(S)$ does not exceed the upper bound
$(h_B-1) K_g $ from \invE. Then, all items in $S$ can
be packed into appropriate containers of $g$.
This can be easily seen with the following indirect proof:
Assume that item $i \in S$ can not be placed. Then, each of
the $K_g$ containers is filled with items of total height 
greater than $h_B - 1$.
Thus, $h(g)+h(S) > K_g(h_B-1)$, which contradicts \invE.

Now assume that the insertion of items from $S$ exceeds
the upper bound by $\Delta>0$ and therefore violates \invE.
Basically there are two ways to deal with this situation:

\begin{itemize}
	\item Except for flexible groups, $K_g=2^l k$ or $K_g = 2^l (k-1)$.
	That is, we can not open new containers and thus
	have to remove items of group $g$ to make room for $S$. 
	When the removed items are the widest of the group $g$,
	they can be assigned to the group $\mathit{left}(g)$ while maintaining
	the sorting order \invB.
	\item The items in $S$ can be placed into new containers if $g$ is flexible
	and $K_g$ not at its upper bound. This procedure is also necessary if there
	is no group to the left of $g$.
	
\end{itemize}
The first of this two \textit{shift modes} is called \textit{left group}
and the second \textit{new container}.
Further details on both shift modes are given in the following.

\paragraph*{Mode: Left group}
First, we analyze in which cases we can proceed like that.
According to invariant \invC-\invD, there are two flexible groups, 
namely $(l,A,0)$ and $(l,B,q(l,B))$.
In those groups, the shift of items to the left group shall only be performed
if a new container can not be opened, \ie $K_g=2^l k$ for block $A$ or
$K_g=2^l (k-1)$ for block $B$ \invC-\invD.
Furthermore, this shift mode shall also not be used if $g=(l,A,-1)$,
that is, $S$ contains items that were shifted out from $(l,A,0)$ in the
previous call of shift. 
Thus the condition for the shift mode \textit{left group} is:
$$
\left( g=(l,A,0) \Rightarrow K_g = 2^l k \right) 
\wedge  \left( g=(l,B,q(l,B)) \Rightarrow K_g = 2^l (k-1) \right) 
\wedge  g \neq (l,A,-1) \,.
$$

Now, we turn to the changes in the packing.
In order to insert the items in $S$, we choose a set of items
$S_{out} \subset \ILg$ and move them to the group $\mathit{left}(g)$.
Since the sorting of items over the groups \invB must be maintained,
$S_{out}$ contains widest items of the group $g$.
To keep the amount of shifted items small, 
$S_{out}$ is chosen such that $h(S_{out})$ is minimal 
but \invE is fulfilled again.
\textsc{WidestItems}$(\ILg \cup S,\Delta)$ (see
Section\nobreakspace \ref {sec:auxFunctions}) is designed exactly for this purpose.
Now, we can remove the items $S_{out}$ from the containers
and close gaps in the stacks by \textsc{Sink}. 
Thus there is enough room to place $S$ (note that overlaps are
resolved by \textsc{Stretch}, which is part of the \textsc{Place} operation).	
The shifting process continues with 
\textsc{Shift}$(\mathit{left}(g), S_{out})$ 
in order to insert the shifted out items.

\paragraph*{Mode: New container}
%The precondition for \textit{new container} is actually the negation
%of the condition required for the mode \textit{left group}.
%That is, we shift items to a new container when $g$ is a flexible group
%and an additional container does not violate \invC-\invD
%or $g$ equals the artificial group $(l,A,-1)$.
%Formally:
This mode is performed in all remaining cases, \ie if
$$
\left( g=(l,A,0) \wedge K_g < 2^l k \right)
\vee  \left( g=(l,B,q(l,B)) \wedge K_g < 2^l (k-1) \right)
\vee  ~ g=(l,A,-1) \,.
$$

In the first two cases we are allowed to open a new container
and therefore have a simple way to maintain \invE without violating
other invariant properties.
The last case $g=(l,A,-1)$ results from a shift out of the
group $(l,A,0)$ when $K_{(l,A,0)} = 2^l k$ holds.
Then, the newly opened container builds the new leftmost group in
block $A$, temporarily called $(l,A,-1)$.
In all cases, the residual items are packed into a new container $\bar{c}$
(it will be shown later that one container is enough).
The new container gets placed via \textsc{InsertContainer}.
Finally, a renaming of groups in block $A$ restores the notation
(first group has index $r=0$)

In Algorithm\nobreakspace \ref {alg:shift}, which shows the entire shift algorithm, we 
require that the group $g$ is \textit{suitable} for the set of items $S$, according
to the following definition:

\begin{definition}[Suitable group]
	\label{def:suitableGroup}
	For a group $g$, let
	$\wmin(g)$ resp. $\wmax(g)$ denote the width of an item with 
	minimal resp. maximal width in $\ILg$.
	Set $w_{min}(\mathit{left}((l,A,0))) = \infty$
	and $w_{max}(\mathit{right}((l,B,q(l,B)))) = 0$.
	Group $g=(l,X,r)$ is \textit{suitable} for a new item $i$ if
	$w(i) \in (2^{-(l+1)}, 2^{-l}]$,
	$\wmin(\mathit{left}(g))  \geq w(i)$, and
	$\wmax(\mathit{right}(g)) < w(i)$.
\end{definition}

By the conditions from Definition~\ref{def:suitableGroup}, $i$ is inserted into the correct width category 
\invA and maintains the sorting over the groups \invB.

\begin{algorithm}
	\caption{\textsc{Shift}}
	\label{alg:shift}
	
	\SetAlgoLined
	\DontPrintSemicolon
	\SetKwInOut{Input}{Input}
	\SetKw{Or}{or}
	\SetKwFunction{Shift}{Shift}
	\SetKwFunction{Place}{Place}
	\SetKwFunction{WidestItems}{WidestItems}
	\SetKwFunction{Align}{Align}
	\SetKwFunction{Sink}{Sink}	
	\SetKwFunction{InsertContainer}{InsertContainer}	
	
	\Input{Group $g \in G$ \\ 
		Items $S \subset I_L$, where $g$ is suitable for each $i \in S$ (Def.~\ref{def:suitableGroup})}
	
	\BlankLine
	
	$\Delta = h(g) + h(S) - (h_B-1) K_g$ \;
	\uIf(\tcp*[f]{No violation of invariant \invE})
	{$\Delta \leq 0$ 
		\label{line:shiftConditionDirectPlacing} }{
		\Place{$g$,$S$} \;
	}
	\Else{
		
		\uIf{shift mode = left group}{
			\label{line:beginModeLeft}
			
			$S_{out}$ = \WidestItems{$\ILg \cup S$, $\Delta$} 
				\label{line:shiftWidest} \;
			Remove $S_{out}$ from $g$ \;
			\Sink{$c_j$} \tcp*[f]{For all affected containers $c_j$} \;
			\Place{$g$,$S$} 
				\label{line:shiftPlace} \;
			\Shift{$\mathit{left}(g), S_{out}$} \;
		}
		
		\uElseIf{shift mode = new container} {
			\label{line:beginModeCont}
			Let $\bar{c}$ be new container of $g$ \;
			\Place{$g$,$S$} \;
			\InsertContainer{$\bar{c}$} \;
			
			Rename groups in block $A$ 
			\tcp*[f]{First group gets index 0}\;
			\label{line:endModeCont}
		}
	}
\end{algorithm}

\subsubsection{Sequence of Shift Operations}
\label{sec:shiftSequence}

\begin{figure}
	\begin{center}
		\includegraphics[width=\textwidth]{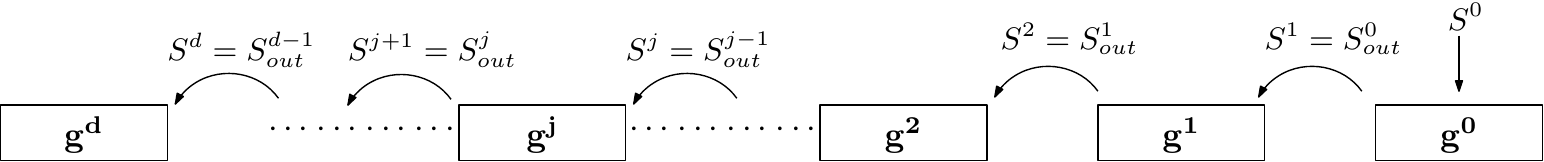}
		\caption{Shift sequence considered in Lemma\nobreakspace \ref {lemma:hSout} and\nobreakspace Corollary\nobreakspace \ref {lemma:1-2bins}}
		\label{fig:shiftTopLevel}
	\end{center}
\end{figure}

According to Algorithm\nobreakspace \ref {alg:shift} a shift operation
can end with a recursive call \textsc{Shift}($\mathit{left}(g), S_{out}$).
Note that, due to the procedure \textsc{WidestItems}, $h(S_{out}) > h(S)$ can hold.
This way a sequence of shift operations can occur, where the height of shifted
items grows in each part of the sequence.
We consider the shift sequence 
$$
\textsc{Shift}(g^0,S^0), \textsc{Shift}(g^1,S^1), \ldots, 
\textsc{Shift}(g^{d-1},S^{d-1}), \textsc{Shift}(g^d,S^d)
$$
(see Figure\nobreakspace \ref {fig:shiftTopLevel}) and
denote the values of $S$, $S_{out}$, and $\Delta$ in 
the call $\textsc{Shift}(g^j,S^j)$ by $S^j$, $S_{out}^j$, and $\Delta_j$.
The next lemma states that the total height of the shifted out items $S_{out}^j$ 
grows linearly in $j$, the position in the shift sequence.

\begin{lemma}
	\label{lemma:hSout}
	For any $j$ with $0 \leq j \leq d$ in the above defined shift sequence,
	$h(S_{out}^j) \leq h(S^0) + j + 1 \,. $
\end{lemma}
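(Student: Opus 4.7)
The plan is to prove the lemma by induction on $j$, exploiting two facts: the selection rule in \textsc{WidestItems} controls the overshoot to within one unit of height, and invariant \invE holds at the moment each shift in the sequence is invoked.

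First I would unpack notation. In a shift sequence, the recursive call at step $j$ passes $S^{j+1} = S_{out}^j$ to \textsc{Shift}$(g^{j+1}, \cdot)$, where $g^{j+1} = \mathit{left}(g^j)$. Since the recursion only occurs in the \enquote{left group} branch, for every $j$ with $0 \le j < d$ the set $S_{out}^j$ is defined via line \ref{line:shiftWidest}, namely $S_{out}^j = \textsc{WidestItems}(I_L^{g^j} \cup S^j, \Delta_j)$. By the specification of \textsc{WidestItems}, $h(S_{out}^j) \in [\Delta_j, \Delta_j + 1)$, so in particular
\[
h(S_{out}^j) < \Delta_j + 1.
\]

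Next I would bound $\Delta_j$ in terms of $h(S^j)$. Before the call \textsc{Shift}$(g^j, S^j)$ is executed, invariant \invE is still in force (each preceding shift step ends having restored all invariants), so $h(g^j) \le (h_B - 1) K_{g^j}$. Plugging this into the definition
\[
\Delta_j = h(g^j) + h(S^j) - (h_B - 1) K_{g^j}
\]
yields $\Delta_j \le h(S^j)$, and therefore $h(S_{out}^j) < h(S^j) + 1$.

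Finally I would chain these bounds: using $S^{j+1} = S_{out}^j$, induction on $j$ gives $h(S^j) \le h(S^0) + j$ (the base case $j=0$ is trivial, and the step $h(S^{j+1}) < h(S^j) + 1 \le h(S^0) + j + 1$ carries the bound forward). Combining with the one-step estimate finally produces $h(S_{out}^j) < h(S^j) + 1 \le h(S^0) + j + 1$, as claimed. The minor subtlety to handle cleanly is that $S_{out}^d$ need not be defined if the final shift in the sequence uses the \enquote{new container} mode; in that case the statement holds vacuously at $j=d$, and the induction only needs to track $h(S^j)$ for $0 \le j \le d$. The main conceptual point — and the only nontrivial ingredient — is recognising that \invE being valid before each shift is exactly what forces $\Delta_j$ to be absorbed by $h(S^j)$, so the height of migrated items can grow by at most a single additive unit per recursive step.
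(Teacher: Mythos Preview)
Your proof is correct and follows essentially the same approach as the paper: bound $\Delta_j \leq h(S^j)$ via \invE, use the overshoot bound $h(S_{out}^j) < \Delta_j + 1$ from \textsc{WidestItems}, and chain inductively through $S^{j+1} = S_{out}^j$. Your handling of the endpoint $j=d$ (where $S_{out}^d$ may be undefined in the \enquote{new container} mode) is in fact more careful than the paper's own proof.
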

\begin{proof}
	First note that $\Delta_j \leq h(S^j)$ holds for each $j$ by \invE .
	Further, the function \textsc{WidestItems}($\cdot,\Delta_j)$ returns a 
	set $S_{out}^j$ with $h(S_{out}^j) < \Delta_j + 1$.
	For $j=0$ it holds that 
	$ h(S_{out}^0) < \Delta_0 + 1 \leq h(S^0) + 1 $.
	Now suppose $h(S_{out}^j) \leq h(S^0) + j + 1$ for some $j \geq 0$.	
	Note that $S^{j+1}=S_{out}^j$, thus
	$
	h(S_{out}^{j+1}) < \Delta_{j+1} + 1 \leq h(S^{j+1}) + 1
	= h(S_{out}^j) + 1$.
	By assumption, $h(S_{out}^j) \leq h(S^0) + j + 1$ and
	thus $h(S_{out}^{j+1}) \leq h(S^0) + (j + 1) + 1$.
\end{proof}

\begin{corollary}
	\label{lemma:1-2bins}	
	In the shift sequence defined above, it holds that 
	$h(S^d) \leq h(S^0) + h_B - 2$.
\end{corollary}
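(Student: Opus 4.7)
The plan is to apply the previous Lemma to the last call in the shift sequence and then bound the length $d$ of the chain using the group-counting Lemma.

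First, I would handle the trivial case $d = 0$: then $S^d = S^0$, and the claim reduces to $h(S^0) \leq h(S^0) + h_B - 2$, which holds because $h_B = 13/\epsilon^2 \geq 2$.

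For $d \geq 1$, I would observe that the recursive structure of Algorithm~\ref{alg:shift} (the ``left group'' branch ends with \textsc{Shift}($\mathit{left}(g), S_{out}$)) forces $S^{j+1} = S_{out}^j$ for all $j < d$; in particular $S^d = S_{out}^{d-1}$. Applying Lemma~\ref{lemma:hSout} at $j = d-1$ therefore yields
\[
h(S^d) \;=\; h(S_{out}^{d-1}) \;\leq\; h(S^0) + (d-1) + 1 \;=\; h(S^0) + d.
\]

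It remains to show $d \leq h_B - 2$. Since $\mathit{left}$ preserves the width category, the chain $g^0, g^1, \ldots, g^d$ lies entirely within a single category $l$, so $d + 1$ is at most the number of non-empty groups in category $l$, namely $q(l,A) + q(l,B) + 2$. The proof of Lemma~\ref{lemma:NumberOfGroups} already supplies the bound $q(l,A) + q(l,B) \leq \frac{2\SIZE(I_L)}{(h_B-1)(k-2)} \leq \frac{16 \omega}{\epsilon}$ (combining (\ref{eq:QlaQlbCategory}) and (\ref{eq:boundQLAQLB})), which is available because the minimum-size assumption (\ref{eq:sizeGEQhb+1}) is in force throughout this section. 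Substituting the parameter choices $h_B = 13/\epsilon^2$ and $\omega = \lfloor \log(1/\epsilon)\rfloor + 1$, the required inequality $\frac{16\omega}{\epsilon} + 1 \leq h_B - 2$ reduces to $16\omega\epsilon + 3\epsilon^2 \leq 13$; a direct evaluation on $\epsilon \in (0, 1/4]$ (the worst case is $\epsilon = 1/4$, where the left-hand side is $12 + 3/16 < 13$) confirms it.

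The main technical obstacle is this final arithmetic check, which ties together the group-counting Lemma with the specific quantitative choices of $h_B$, $k$, and $\omega$ from the section's preamble; everything else is a mechanical consequence of the preceding Lemma once one spots the identity $S^d = S_{out}^{d-1}$.
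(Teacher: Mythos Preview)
Your argument follows essentially the same route as the paper's proof: identify $S^d = S_{out}^{d-1}$, apply Lemma~\ref{lemma:hSout} to get $h(S^d) \leq h(S^0) + d$, then bound $d$ via the group count in a single category and finish with a numerical check against $h_B = 13/\epsilon^2$.

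There is one small inaccuracy in your group count. You assert that $d+1$ is at most the number of \emph{non-empty} groups in category $l$, but the terminal group $g^d$ of the chain can be the (currently empty) virtual group $(l,A,-1)$ reached in the ``new container'' branch of \textsc{Shift}. In the maximal sequence $g^0 = (l,B,q(l,B))$, $g^{d-1} = (l,A,0)$, $g^d = (l,A,-1)$ one has $d = q(l,A)+q(l,B)+2$, not $q(l,A)+q(l,B)+1$. This off-by-one is harmless here --- with the corrected bound $d \leq 16\omega/\epsilon + 2$ the required inequality becomes $16\omega\epsilon + 4\epsilon^2 \leq 13$, which at the worst case $\epsilon = 1/4$ still gives $12.25 < 13$ --- but the reasoning should account for the possibility that the chain ends one step past the last non-empty group.
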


\begin{proof}
	The maximal total height of $S^d$ occurs in the longest possible shift
	sequence, that is, when $g^0 = (l,B,q(l,B))$, $g^{d-1}=(l,A,0)$, and 
	$g^d = (l,A,-1)$ hold.
	Note that $S^d=S_{out}^{d-1}$.
	
	We have $d=q(l,A)+q(l,B)+2$ for some $l \in W$ and with Lemma\nobreakspace \ref {lemma:NumberOfGroups} 
	(Equations\nobreakspace \textup {(\ref {eq:QlaQlbCategory})} and\nobreakspace  \textup {(\ref {eq:boundQLAQLB})}) 
	$d \leq 16 \omega / \epsilon + 2\omega$.
	Lemma\nobreakspace \ref {lemma:hSout} gives	
	$
	h(S_{out}^{d-1})
	\leq h(S^0) + (d-1) + 1
	\leq h(S^0) + 16 \omega / \epsilon + 2 \omega
	$.
	It remains to show $16 \omega / \epsilon + 2 \omega \leq h_B$, which follows 
	by definition of $\omega$ and $h_B$.
	%	\begin{proof}
	%		We show equivalently $16  \omega \epsilon + 2 \omega \epsilon^2 \leq 13 = \epsilon^2 h_B$.
	%		Note that for $\epsilon \leq 0.25$ it holds that 
	%		$\epsilon \log \frac{1}{\epsilon} \leq \frac{1}{2}$.
	%		Therefore, 
	%		$\epsilon \omega 
	%		= \epsilon \left( \floor{ \log \frac{1}{\epsilon} } + 1 \right)
	%		\leq \epsilon \left( \left( \log \frac{1}{\epsilon} \right) + 1 \right)
	%		\leq \frac{1}{2} + \epsilon$ and thus
	%		$$
	%		16  \omega \epsilon + 2 \omega \epsilon^2
	%		\leq 16 \left( \frac{1}{2} + \epsilon \right)
	%		+ 2 \left( \frac{1}{2} + \epsilon \right) \epsilon
	%		= 8 + 16 \epsilon + \epsilon + 2 \epsilon^2
	%		\leq 13
	%		$$
	%	\end{proof}
\end{proof}

Note that \MakeUppercase Corollary\nobreakspace \ref {lemma:1-2bins} implies that shifting a single item 
$i_t$ into a group, we have 
$h(S^d) \leq h(i_t) + h_B - 2 \leq h_B - 1$.
Therefore, one container is enough to pack all items in $S^d$ arriving in $(l,A,-1)$.

\subsection{ShiftA}
\label{sec:shiftA}
In Section~\ref{sec:dynBlockBalancing} we need an operation which moves one group from block 
$B$ to block $A$ within one category $l \in W$. This \textsc{ShiftA} operation, 
closely related to the dynamic rounding technique adapted from \cite{fullyDynamicBP}, 
is considered in the following.

The characteristic property of groups in block $B$ is the number of containers
$2^l (k-1)$, while groups in block $A$ have $2^l k$ containers (invariant properties \invC - \invD,
except for flexible groups).
\textsc{ShiftA} enlarges the group $(l,B,0)$ by $2^l$ additional
containers such that it can act as the new $(l,A,q(l,A))$ group.
To fulfill \invE, widest items (similar to \textsc{Shift}) are moved
from group $(l,B,1)$ to $(l,B,0)$ such that \invE is fulfilled for the 
group $(l,B,0)$ with $2^l$ additional containers. As items are taken
from $(l,B,1)$, widest items from $(l,B,2)$ have to be shifted to
$(l,B,1)$ and so on.

It might be the case that the removal of items from the last group 
$(l,B,q(l,B))$ leads to a violation of \invE since there are too many
containers for the total height of the residual items. 
Then, some containers have to be emptied and removed. 
At the extreme, all containers of the group get removed
and thus $(l,B,q(l,B)-1)$ becomes the new group $(l,B,q(l,B))$ via
renaming.

Note that each of the $2^l$ new containers of the new group $(l,A,q(l,A))$ has
width at most $2^{-l}$ and thus they can be placed in one level of height $h_B$ 
in the strip.
Algorithm\nobreakspace \ref {alg:shiftA} shows the steps explained above.
Thereby, two auxiliary algorithms for the insertion and removal of containers
(Algorithms\nobreakspace  \ref {alg:insertContainerShiftA} to\nobreakspace  \ref {alg:deleteContainerShiftA} ) occur, 
which are given in the following subsection.

\begin{algorithm}
	\caption{ShiftA}
	\label{alg:shiftA}
	
	\SetAlgoLined
	\DontPrintSemicolon
	\SetKwInOut{Input}{Input}
	\SetKw{Or}{or}
	\SetKwFunction{Shift}{Shift}
	\SetKwFunction{Insert}{Insert}	
	\SetKwFunction{shiftA}{shiftA}
	\SetKwFunction{Place}{Place}
	\SetKwFunction{WidestItems}{WidestItems}	
	\SetKwFunction{Align}{Align}
	\SetKwFunction{Sink}{Sink}	
	
	\Input{Width category $l \in W$}
	\BlankLine
	
	Let $g_i = (l,B,i)$ for $0 \leq i \leq q(l,B)$ \;
	Let $c_1, \ldots c_{2^l}$ be new containers of group $g_0$ \;
	
	\For{i=0 \ldots q(l,B)-1}{	
		$u = 
		\begin{cases} 
		2^l k (h_B - 1)	    & \mbox{if } i = 0 \\
		2^l (k-1) (h_B - 1)	& \mbox{if } i > 0
		\end{cases}$ 
		\label{line:shiftAdefU} \;
		$s_{min} = u - h(g_i) - 1$ \;
		
		$S$ = \WidestItems{${I_L}^{g_{i+1}}$, $s_{min}$} 
		\label{line:shiftAdefS} \;
		Remove $S$ from $g_{i+1}$ \;
		\Sink{$c_j$} \tcp*[f]{For all affected containers $c_j$} \;
		\Place{$S$,$g_i$} 
		\label{line:shiftAPlace} \;
	}
	Insert containers $c_1, \ldots c_{2^l}$ via Algorithm\nobreakspace \ref {alg:insertContainerShiftA}\;
	Rename $g_0$ to $(l,A,q(l,A)+1)$ \;
	
	\tcp*[l]{If necessary, remove containers from $(l,B,q(l,B))$}
	$\Delta = (K_{g_{q(l,B)}} - 1) (h_B - 1) - h(g_{q(l,B)})$ \label{line:removeQLBContainer}\;
	\uIf{$\Delta > 0$}{
		Empty $\ceil{\frac{\Delta}{h_B - 1}}$ containers 
		\label{line:emptyContainers}\;
		Remove each of them via Algorithm\nobreakspace \ref {alg:deleteContainerShiftA}\;
	}	
	
	Renaming\;
\end{algorithm}

The next lemma states how many containers have to be removed from a 
flexible group when the total height of items falls below the lower bound of \invE.
Note that 
Algorithm\nobreakspace \ref {alg:shiftA} behaves exactly accordingly to the lemma
in Line\nobreakspace \ref {line:emptyContainers}.

\begin{lemma}
	\label{lemma:tooManyContainers}
	Assume that $h(g) = (K_g - 1) (h_B - 1) - \Delta$ for a group $g$ and $\Delta>0$.
	Removing $\ceilS{\frac{\Delta}{h_B - 1}}$ containers restores \invE.
\end{lemma}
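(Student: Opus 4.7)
The plan is to verify both inequalities of invariant $\invE$ directly, by arithmetic substitution, after removing $n := \ceilS{\Delta/(h_B-1)}$ containers from $g$. Since only (already or newly) empty containers are removed, $h(g)$ is unchanged and the new container count is $K_g' := K_g - n$.

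First I would check the lower bound $(K_g'-1)(h_B-1) \leq h(g)$. Expanding the left-hand side gives $(K_g-1)(h_B-1) - n(h_B-1) = h(g) + \Delta - n(h_B-1)$, so the inequality is equivalent to $n(h_B-1) \geq \Delta$, which is precisely the defining property of the ceiling. For the upper bound $h(g) \leq K_g'(h_B-1)$, substituting the assumed form of $h(g)$ and $K_g' = K_g - n$ reduces the inequality to $n \leq 1 + \Delta/(h_B-1)$, which follows from the elementary fact $\ceilS{x} \leq x+1$ for any real $x$. No other invariant is affected, since \invA{} and \invB{} refer to widths and their ordering (neither of which changes when empty containers are dropped), and $g$ being flexible means the modified container count is still within the upper bounds prescribed by \invC--\invD.

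The only subtlety worth mentioning is the implicit claim that $n$ containers can in fact be emptied before removal. In the setting of Algorithm~\ref{alg:shiftA} this is automatic: the underfilling $\Delta$ arose from \textsc{WidestItems} shifting items out of $g$ and \textsc{Sink} collapsing the remaining items toward the bottom of each affected container, so after sinking there are already (at least) $n$ topmost containers whose contents can be redistributed among the surviving $K_g - n$ containers. Feasibility of that redistribution is guaranteed by the upper-bound calculation above, which gives $h(g) < K_g' h_B$, i.e.\ strictly less than the total height capacity of the surviving containers.
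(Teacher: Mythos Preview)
Your proof is correct and follows the same direct arithmetic verification as the paper; in fact you are more thorough, since the paper's proof only states the lower bound $h(g) \geq (K_g - \delta - 1)(h_B - 1)$ and leaves the upper bound implicit, whereas you check both inequalities of \invE explicitly. Your additional remarks about the other invariants and about the feasibility of actually emptying $n$ containers go beyond what the paper proves for this lemma, but they are accurate and helpful context.
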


\begin{proof}
	Let $\delta = \ceil{\frac{\Delta}{h_B - 1}}$.
	It is easy to show that $h(g) \geq (K_g - \delta - 1) (h_B - 1)$.
	%	Using the definition of $\delta$ and $\Delta$ we get
	%	%
	%	\begin{multline*}
	%	K_g - \delta - 1 
	%	= K_g - \left\lceil \frac{\Delta}{h_B - 1} \right\rceil - 1
	%	\leq K_g - \frac{\Delta}{h_B - 1} - 1 \\
	%	= K_g - \frac{(K_g - 1) (h_B - 1) - h(g)}{h_B - 1} - 1
	%	\end{multline*}
	%	%
	%	and therefore
	%	%
	%	$$
	%	(K_g - \delta - 1) (h_B - 1) 
	%	\leq K_g (h_B - 1) - (K_g - 1) (h_B - 1) + h(g) - (h_B - 1) 
	%	= h(g)
	%	\,.
	%	$$
\end{proof}

\subsubsection{Insertion and Deletion of Containers}

As a result of the \textsc{ShiftA} procedure, for a category $l$ 
there are $2^l$ containers that have to be inserted into the packing.
Instead of using Algorithm\nobreakspace \ref {alg:insertContainer} for each single container,
we rather use a slightly modified algorithm presented below.
Since all containers have rounded width $w^R(c) \leq 2^{-l}$, they fit into one level
of the strip. Hence, one single call of \textsc{Improve}, 
followed by a change of the LP/ILP-solution, is enough.
Algorithm\nobreakspace \ref {alg:insertContainerShiftA} shows the steps for the insertion
of $2^l$ new containers.

\begin{algorithm}
	\caption{Insertion of containers $c_1, \ldots, c_{2^l}$ for \textsc{ShiftA}}
	\label{alg:insertContainerShiftA}
	
	\SetAlgoLined
	\DontPrintSemicolon
	\SetKwInOut{Input}{Input}
	\SetKwFunction{Shift}{Shift}
	\SetKwFunction{Improve}{Improve}	
	
	\Input{$x,y$ fractional and integral solutions to LP($\CR{\con}{R}$)
	}
	\BlankLine
	
	\Improve{1,$x$,$y$} \;
	Let $P_i$ the pattern s.t.
	$P_i = \{ w^R(c_1) : 1, w^R(c_2) : 1 , \ldots, w^R(c_{2^l}) : 
	1\}$ \;
	Set $x_i := x_i + 1$ and $y_i := y_i + 1$ \;
\end{algorithm}

Furthermore, containers of group $(l,B,q(l,B))$ may get deleted
at the end of the \textsc{ShiftA} algorithm.
Analogously to the insertion of containers, 
modifications in the LP/ILP-solutions are necessary to reflect
the change of the packing. See Algorithm\nobreakspace \ref {alg:deleteContainerShiftA}.

\begin{algorithm}
	\caption{Deletion of a container $c$ for \textsc{ShiftA}}
	\label{alg:deleteContainerShiftA}
	
	\SetAlgoLined
	\DontPrintSemicolon
	\SetKwInOut{Input}{Input}
	\SetKwFunction{Shift}{Shift}
	\SetKwFunction{Improve}{Improve}	
	
	\Input{$x,y$ fractional and integral solutions to LP($\CR{\con}{R}$)
	}
	\BlankLine
	
	Let $P_i$ a pattern containing $w^R(c)$\;
	Let $P_j = P_i \setminus w^R(c)$ be the pattern without $w^R(c)$. \;
	Set $x_i := x_i - 1$ and $y_i := y_i - 1$ \;
	Set $x_j := x_j + 1$ and $y_j := y_j + 1$ \;
\end{algorithm}

\subsection{Operations Maintain Properties}
\label{sec:opsMaintainProps}
In this section we show that the operations
\textsc{Shift} and \textsc{ShiftA} maintain all invariant properties
(Lemma\nobreakspace \ref {lemma:opsMaintainInvariant}).
Further, in Lemma~\ref{lemma:opsMaintainFeasibleXY} it is shown that
also the LP/ILP-solutions (modified by
Algorithms\nobreakspace \ref {alg:insertContainer},  \ref {alg:insertContainerShiftA}, and\nobreakspace  \ref {alg:deleteContainerShiftA}) stay feasible.

\begin{lemma}
	\label{lemma:opsMaintainInvariant}
	Assume that the \invA - \invE are fulfilled by a container assignment 
	$\con$ and a rounding function $R$.
	Applying one of the operations \textsc{Shift}($g$,$S$) for any $g \in G$
	and $S$ with $h(S) \leq h_B - 1$ and $g$ is suitable for all $i \in S$, and
	\textsc{ShiftA}($l$) for any $l \in W$ 
	maintains \invA - \invE.
\end{lemma}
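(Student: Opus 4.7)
The plan is to split the argument in two, handling \textsc{Shift} first and \textsc{ShiftA} second, and in each case walking invariant by invariant through the branches of the algorithm. For \textsc{Shift} I would argue by induction on the length $d$ of the shift sequence from Section \ref{sec:shiftSequence}, using Corollary \ref{lemma:1-2bins} to ensure that the batch eventually absorbed by the new container has height at most $h_B-1$. The hypothesis $h(S)\le h_B-1$ is the base of this induction, so in the first call the conditions of the corollary apply with $h(S^0) = h(i_t)$.

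For the direct-placement branch ($\Delta\le 0$), \invA and \invB follow immediately from the suitability condition (Definition \ref{def:suitableGroup}), \invC and \invD are untouched because $K_g$ is unchanged, and \invE is exactly the sign of $\Delta$ on the upper side while the lower side is inherited. For the left-group branch, the \textsc{WidestItems} call returns a set $S_{out}$ with $h(S_{out})\in[\Delta,\Delta+1)$, so after the swap the new height $h(g)+h(S)-h(S_{out})$ lies in $((h_B-1)K_g-1,\,(h_B-1)K_g]$, which is inside the \invE window; \invC, \invD are again untouched for $g$, \invA is preserved because all motion stays inside one category, and \invB holds because $S_{out}$ contains widest items of $\ILg\cup S$. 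I must then verify that $\mathit{left}(g)$ is suitable for every $i\in S_{out}$ before invoking the inductive hypothesis; this follows from \invA applied to $g$ together with \invB between $\mathit{left}(g)$ and $g$ in the pre-shift state. For the new-container branch I would split on the three situations permitted by the mode guard: in the two flexible-group sub-cases the cap in \invC or \invD is not yet saturated, so opening one container keeps the count valid, while in the $g=(l,A,-1)$ sub-case the renaming promotes $(l,A,-1)$ to $(l,A,0)$ and the precondition that triggered entering this branch guarantees the displaced old $(l,A,0)$ was already full at $2^l k$, which is precisely what \invC demands of a non-flexible block-$A$ group. Invariant \invE on the new container holds because Corollary \ref{lemma:1-2bins} gives $h(S^d)\le h_B-1$.

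For \textsc{ShiftA}$(l)$ I would trace the loop over $i=0,\dots,q(l,B)-1$ and argue that after iteration $i$ the group $g_i$ satisfies \invE with its new container count ($2^l k$ for $i=0$ after the final container insertion, $2^l(k-1)$ for $i>0$), while $g_{i+1}$ has temporarily lost at most $h(S)+1$ of its height; the choice of $s_{min}$ in line \ref{line:shiftAdefS} is exactly tuned so that the next iteration restores \invE for $g_{i+1}$. Since all items moved stay inside category $l$ and only widest items are transferred leftward, \invA and \invB are preserved throughout. After the loop, the $2^l$ new containers are added to $g_0$ via Algorithm \ref{alg:insertContainerShiftA} and $g_0$ is renamed; invariants \invC and \invD now hold for every non-flexible group in category $l$. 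The last group $(l,B,q(l,B))$ may violate the lower bound of \invE after the cascade, which is resolved by Lemma \ref{lemma:tooManyContainers} in line \ref{line:removeQLBContainer}; the subsequent renaming restores the naming convention and can only decrease $q(l,B)$, which preserves the flexible-group role.

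The hard part will be the transient states introduced by index renaming: the ephemeral group $(l,A,-1)$ inside a \textsc{Shift} chain, and the possible collapse of $(l,B,q(l,B))$ in \textsc{ShiftA} when all its containers are emptied. In both cases the invariant on the flexible group must be re-established after an implicit shift of indices, so I would explicitly verify that the container count of the newly designated flexible group lies at or below its cap and that no non-flexible group silently inherits a wrong $K_g$. Everything else reduces to the arithmetic around \textsc{WidestItems}, which is already packaged in Lemma \ref{lemma:hSout} and Corollary \ref{lemma:1-2bins}.
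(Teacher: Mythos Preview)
Your plan mirrors the paper's proof: both treat \textsc{Shift} and \textsc{ShiftA} separately and walk branch-by-branch through each invariant; the arithmetic you sketch for the direct, left-group, and new-container modes matches the paper's, and your handling of \textsc{ShiftA} (trace the loop, then invoke Lemma~\ref{lemma:tooManyContainers} on the last group) is exactly what the paper does. Your extra remarks on verifying suitability of $\mathit{left}(g)$ for $S_{out}$ and on the transient index states are sound additions that the paper leaves implicit.

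One caveat on your induction setup: the hypothesis $h(S)\le h_B-1$ is \emph{not} preserved along the chain---after one left-group step $h(S_{out})<h(S)+1$, so the bound degrades by one per step, and Corollary~\ref{lemma:1-2bins} only yields $h(S^d)\le h(S^0)+h_B-2$, which is $\le h_B-1$ when $h(S^0)\le 1$ but not under the weaker lemma hypothesis $h(S^0)\le h_B-1$. The paper has the same looseness and closes it externally in Lemma~\ref{lemma:insertionAlgsMaintainInv} by noting that every actual first call has $h(S^0)\le 1$, whence Lemma~\ref{lemma:hSout} gives $h(S^j)\le h_B-1$ at every step; your remark ``$h(S^0)=h(i_t)$'' gestures at this fix but conflates the lemma statement with its intended application. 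If you want a self-contained induction inside the proof of this lemma, carry the bound $h(S^j)\le h(S^0)+j$ from Lemma~\ref{lemma:hSout} together with $j\le d\le h_B-2$, rather than trying to propagate $h(S)\le h_B-1$ step by step.
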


\begin{proof} 
	Implicitly, the operations modify both functions:
	The container assignment is changed in the \textsc{Shift} and 
	\textsc{ShiftA} algorithm due to the removal or placing of item sets.
	The rounding function changes when new containers get assigned.
	
	\paragraph*{Shift} 
	Let $g=(l,X,r)$.
	Each item assigned to group $g$ by \textsc{Shift} is suitable for $g$ or comes from
	the group $\mathit{right}(g)$. In the first case, \invA - \invB hold by Definition~\ref{def:suitableGroup},
	in the latter case by the fact that only widest items are moved to the left within the same category.
	
	The number of containers only changes in the
	second shift mode (Lines\nobreakspace  \ref {line:beginModeCont} to\nobreakspace  \ref {line:endModeCont} ).
	The preconditions of the respective shift mode
	ensure that either an additional
	container does not violate \invC - \invD since $g$ is a flexible group,
	or the new container belongs to group $(l,A,-1)$. After the renaming
	this group becomes $(l,A,0)$, has only one container, and thus fulfills \invC.
	In the latter case groups in block $B$ remain unchanged, 
	thus in all cases \invC - \invD are maintained.
	
	Finally, the algorithm is designed to maintain property \invE by
	shifting items of appropriate height.
	If $\Delta \leq 0$,
	then the new total height of items in group $g$ is
	$h(g) + h(S) = \Delta + (h_B - 1) K_g \leq (h_B - 1) K_g$.
	On the other hand, 
	$h(g) +h(S) \geq (K_g - 1) (h_B - 1)$ holds by the assumption that
	\invE is fulfilled beforehand and $h(S) \geq 0$.
		Now assume $\Delta > 0$. We analyze both shift modes separately.
	
	\subparagraph*{Mode: Left group}
	We have to show that the total height of items after the removal of $S_{out}$
	and insertion of $S$ lies in the interval $ [ (h_B - 1) (K_g - 1), (h_B - 1) K_g]$.
	Recall $\Delta = h(g) + h(S) - (h_B - 1) K_g$. It holds $h(S_{out}) \geq \Delta$ and thus
	$$h(g) - h(S_{out}) + h(S) 
	\leq h(g) - \Delta + h(S) 
%	=    h(g) - h(g) - h(S) + (h_B - 1) K_g + h(S) 
	= (h_B - 1) K_g
	\,.
	$$
	On the other side, $h(S_{out}) < \Delta + 1$ and thus
	$$
	h(g) - h(S_{out}) + h(S) 
	>    h(g) - \Delta - 1 + h(S) 
%	=    h(g) - h(g) - h(S) + (h_B - 1) K_g - 1 + h(S) 
	=    (h_B - 1) K_g - 1
	\geq (h_B - 1) (K_g - 1) \,,
	$$
	where the last inequality follows by $h_B \geq 2$.
	Hence, property \invE is fulfilled.
	
	\subparagraph*{Mode: New container}
	In the case that a new container gets inserted into one of the flexible groups
	$(l,A,0)$ or $(l,B,q(l,B))$, the number of containers $K_g$ is changed
	to $K_g' = K_g + 1$. We have to show that the new total height
	of items per group $h(g) + h(S)$ lies in the interval
	$[(K_g' - 1) (h_B - 1), K_g' (h_B - 1)]$.
	Since $\Delta>0$, it holds that 
	$h(g)+h(S) \geq K_g (h_B - 1) = (K_g' - 1) (h_B - 1)$.
	By assumption, $h(g) \leq K_g (h_B - 1)$ and $h(S) \leq h_B - 1$ hold,
	and thus $h(g)+h(S) \leq K_g (h_B - 1) + (h_B - 1) = (K_g + 1) (h_B - 1)
	= K_g' (h_B - 1)$.
	
	The remaining case is $g=(l,A,-1)$. Then, we insert a single container
	into the empty group $g$ and have to show
	$h(S) \in [(K_g' - 1) (h_B - 1), K_g' (h_B - 1)]$.
	Since $K_g'=1$, the lower bound holds obviously for any $h(S)\geq 0$. 
	Again by assumption, $h(S) \leq h_B - 1 = K_g' (h_B - 1)$.
	
	\paragraph*{ShiftA}
	Since items are moved between groups of the same category, \invA can not be
	violated. Let $g_i = (l,B,i)$ for $0 \leq i \leq q(l,B)$.
	Again, widest items are removed from one group $g_{i+1}$ and
	then assigned to group $g_i$, where $g_i = \mathit{left}(g_{i+1})$ and therefore
	\invB is maintained.
	The number of containers of group $g_0$ is enlarged by $2^l$ to 
	$2^l (k-1) + 2^l = 2^l k$. Hence, $g_0$ can be moved
	to block $A$ afterwards while maintaining \invC.
	In block $B$, the number of containers per group is not changed, except
	for $g_{q(l,B)}$, whose number of containers might be decreased. 
	But since $(l,B,q(l,B))$ is a flexible group, \invD holds anyway.
	
	The crucial part is again to show that \invE is fulfilled after the 
	\textsc{ShiftA} operation.
	The new total height of items in group $g_i$ 
	is $h'(g_i) = h(g_i) + h(S)$ and due to the procedure \textsc{WidestItems}
	we have
	$ h'(g_i) =    h(g_i) + h(S) 
	\leq h(g_i) + s_{min} + 1
	=    u $
	and		   
	$ h'(g_i) =    h(g_i) + h(S) 
	\geq h(g_i) + s_{min}
	=    u - 1 
	$.
	
	In the case $i=0$, we have to show 
	$(2^l k - 1) (h_B - 1) \leq h'(g_0) \leq 2^l k (h_B - 1)$
	as $g_0$ shall act as new $(l,A,q(l,A))$ group.
	By definition of $u$,
	it holds that $h'(g_0) \leq u = 2^l k (h_B - 1)$
	and $h'(g_0) \geq u - 1 = 2^l k (h_B  - 1) - 1 \geq (2^l k - 1) (h_B  - 1)$,
	where the last inequality follows by $h_B - 1 \geq 1$.
	
	For $1 \leq i < q(l,B)$, the value of $h'(g_i)$ has to be in the interval
	$[(2^l (k-1) - 1) (h_B - 1), 2^l (k-1) (h_B - 1)]$. We get similarly
	$h'(g_i) \leq u = 2^l (k-1) (h_B - 1)$ and again with $h_B - 1 \geq 1$
	it follows that
	$h'(g_i) \geq u - 1 = 2^l (k-1) (h_B - 1) - 1 \geq (2^l (k-1) - 1) (h_B - 1)$.
	
	It remains to analyze the case $i=q(l,B)$.
	Items from group $g_{q(l,B)}$ are shifted to group $g_{q(l,B)-1}$ in order
	to fulfill \invE for $g_{q(l,B)-1}$. Therefore, the loss of total height in group 
	$g_{q(l,B)}$ is at most
	\begin{align}
	\label{eq:shiftALossQLB}
	s_{min} + 1 &=    u - h(g_{q(l,B)-1}) \nonumber \\
	&\leq 2^l k (h_B - 1) - h(g_{q(l,B)-1})  \nonumber \\
	&\leq 2^l k (h_B - 1) - (2^l (k-1) - 1) (h_B - 1) \nonumber \\
	&=    (2^l + 1) (h_B - 1)
	\end{align}
	which can not be compensated by another shift, since $(l,B,q(l,B))$ has no right
	neighbor. Instead, decreasing the number of containers will repair property \invE.
	Maybe, the loss of items can be compensated (partially) because
	the previous total height $h(g)$ was greater than 
	$(K_{g_{q(l,B)}} - 1)(h_B - 1)$, the lower bound of \invE.
	For this purpose, define the actual amount of height $\Delta$ that
	is required to fulfill \invE:
	Let $\Delta = (K_{g_{q(l,B)}} - 1) (h_B - 1) - h'(g_{q(l,B)})$. 
	We have
	\begin{align*} 
	h'(g_{q(l,B)})
	&\geq h(g_{q(l,B)}) - (s_{min} + 1)  \\
	&\geq (K_{g_{q(l,B)}} - 1) (h_B - 1) - (2^l + 1)(h_B - 1) & \text{eq.\nobreakspace \textup {(\ref {eq:shiftALossQLB})}}  \\
	&=    (K_{g_{q(l,B)}} - 2^l - 2) (h_B - 1)
	\end{align*}
	and therefore
	\begin{align*}
	\Delta &=    (K_{g_{q(l,B)}} - 1) (h_B - 1)  - h'(g_{q(l,B)}) \nonumber \\
	&\leq (K_{g_{q(l,B)}} - 1) (h_B - 1) - 
	(K_{g_{q(l,B)}} - 2^l - 2) (h_B - 1) \nonumber \\
	&=    (K_{g_{q(l,B)}} - K_{g_{q(l,B)}} + 2^l + 1) (h_B - 1) \nonumber \\
	&=    (2^l + 1) (h_B - 1)
	\,.
	\end{align*}
	Clearly, if $\Delta \leq 0$, nothing has to be done since \invE is already fulfilled.
	Assuming $\Delta > 0$, by Lemma\nobreakspace \ref {lemma:tooManyContainers} the removal of
	
	\begin{equation}
	\label{eq:delContainersToRepairE}
	\left \lceil \frac{\Delta}{h_B - 1} \right \rceil
	\leq \left \lceil \frac{(2^l + 1) (h_B - 1)}{h_B - 1} \right \rceil
	=    \left \lceil 2^l + 1 \right \rceil
	\end{equation}
	containers is enough.
	Since Algorithm\nobreakspace \ref {alg:shiftA} behaves exactly like this
	starting from Line\nobreakspace \ref {line:removeQLBContainer}, we can conclude that
	\invE is fulfilled for group $g_{q(l,B)}$ by adjusting the number of containers
	appropriately.
\end{proof}

\begin{lemma}
	\label{lemma:opsMaintainFeasibleXY}
	Assume that the \invA -  \invE are fulfilled by a container assignment 
	$\con$ and a rounding function $R$.
	Let $x$ and $y$ be corresponding LP/ILP-solutions.
	Applying one of the operations \textsc{Shift}, \textsc{ShiftA} 
	defines new functions $\con'$, $R'$ and new solutions $x', y'$.
	The solutions $x',y'$ are feasible for the LP defined on the new
	rounded container instance $C_{\con'}^{R'}$.
\end{lemma}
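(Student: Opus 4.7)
My plan is to proceed by case analysis on the operation applied, tracking how the rounded container instance and the pair $(x, y)$ evolve. Since $x$ and $y$ are modified only by Algorithms~\ref{alg:insertContainer}, \ref{alg:insertContainerShiftA}, and~\ref{alg:deleteContainerShiftA}, called from Shift's ``new container'' branch and from ShiftA, I split the analysis into two regimes: (i) item-reassignment steps that leave the container set fixed, and (ii) additions or removals of containers accompanied by explicit updates to $x, y$.

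In regime (i), namely Shift in ``left group'' mode and the for-loop body of ShiftA, no container is added or removed and the counts $K_g$ per group are unchanged. Only the rounded widths $w^R(g)$ may shift: by the selection of $S_{out}$ via \textsc{WidestItems} together with invariant~\invB, items moved from $g$ to $\mathit{left}(g)$ have width at most $\wmin(\mathit{left}(g))$, so $w^R(\mathit{left}(g))$ cannot grow, while $w^R(g)$ can only shrink. I will argue that each pattern $P_i$, interpreted as a multiset of slots whose widths are the current rounded widths of the containers placed, remains valid because its total width in a level can only decrease. Under this group-indexed interpretation the LP constraint for each group reads $\sum_i x_i \cdot a(P_i, g) \geq K_g$, which is unaffected by changes in $w^R(g)$, so $x' = x$, $y' = y$ remain feasible.

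In regime (ii) I verify each algorithm directly. Algorithm~\ref{alg:insertContainer} increments $x_i, y_i$ by $1$ for the pattern $P_i = \{w^R(c):1\}$, raising the LHS of the single affected constraint by $1$ to match the RHS increase of one new container, while other constraints are untouched. Algorithm~\ref{alg:insertContainerShiftA} increments $x_i, y_i$ by $1$ for $P_i = \{w^R(c_1):1, \ldots, w^R(c_{2^l}):1\}$; this is a valid pattern since invariant~\invA gives $w^R(c_k) \leq 2^{-l}$, so $2^l \cdot 2^{-l} \leq 1$, and the LHS of the corresponding constraint rises by $2^l$, matching the $2^l$ new containers. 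Algorithm~\ref{alg:deleteContainerShiftA} replaces a pattern $P_i$ containing $w^R(c)$ by $P_j = P_i \setminus \{w^R(c)\}$, decrementing $x_i, y_i$ and incrementing $x_j, y_j$; the LHS for $w^R(c)$ drops by $1$ to match the removal of one container, while the LHS for every other width is unchanged since $a(P_i, k) = a(P_j, k)$ for every other index $k$.

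The main obstacle is reconciling the group-indexed view used in regime (i) with the width-indexed formulation of \ref{lp:binpacking} that Definition~\ref{def:roundedContainerInstace} refers to. I expect this to reduce to a bookkeeping step: the per-width count $b_j$ aggregates the group-counts $K_g$ over all groups sharing the rounded width $w_j$, so a shift of some $w^R(g)$ merely reassigns $K_g$ from one $b_j$ to another, and the old $x$ (viewed through the group-indexed LP) already covers both source and target terms, which gives feasibility of the per-width constraints of the new LP.
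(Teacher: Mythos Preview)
Your proposal is correct and follows essentially the same case analysis as the paper's proof: item-reassignment steps (Shift in ``left group'' mode, the body of ShiftA) leave the group cardinalities fixed while rounded widths can only weakly decrease, so the old $(x,y)$ remain feasible; container insertions and deletions are handled by the explicit updates in Algorithms~\ref{alg:insertContainer}, \ref{alg:insertContainerShiftA}, and~\ref{alg:deleteContainerShiftA}. Your treatment is in fact more explicit than the paper's in two respects --- you verify pattern validity for Algorithm~\ref{alg:insertContainerShiftA} via $\sum_k w^R(c_k)\le 2^l\cdot 2^{-l}\le 1$, and you flag and resolve the group-indexed versus width-indexed bookkeeping issue that the paper leaves implicit --- but the underlying argument is the same.
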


\begin{proof}
	Again, we look at each operation separately.
	\paragraph*{Shift (Mode: Left group)}		
	The conditions from Definition~\ref{def:suitableGroup} ensure that
	no item that is inserted into a container increases the rounded width of that container.
	Therefore, each container in $C_{\con'}^{R'}$ has
	a smaller or equal width than in $C_{\con}^{R}$,
	\ie $w^{R'}(c) \leq w^{R}(c)$.
	
	Since in this case the cardinalities of the rounding groups do not
	change (no container gets inserted or removed), the right hand side
	in LP \ref {lp:binpacking} does not change.
	All configurations of $C_{\con}^{R}$ can be transformed into
	feasible configurations of $C_{\con'}^{R'}$.
	
	\paragraph*{Shift (Mode: New container)}
	In this case, a new container $\bar{c}$ is placed in a new level.
	The algorithm \textsc{InsertContainer}
	reflects this action in the LP/ILP-solution:
	For the pattern $P_i$ that contains the rounded width of $\bar{c}$
	once the corresponding values $x_i$ and $y_i$ are increased by one.
	
	\paragraph*{ShiftA} The \textsc{ShiftA} algorithm can be seen as a series of
	shift operations where each operation affects two neighboring
	groups. The new containers get inserted via Algorithm\nobreakspace \ref {alg:insertContainerShiftA},
	the removal of containers is done by Algorithm\nobreakspace \ref {alg:deleteContainerShiftA}.
	Both algorithms are similar to \textsc{InsertContainer}.
	Therefore, the claim can be shown analogously to \textsc{Shift}.
\end{proof}

\subsection{Insertion Algorithms}

Before we can give the entire algorithms for the insertion of a big or flat
item, we have to deal with another problem:
Remember that the parameter $k$, which controls the group sizes by
\invC - \invD, depends on $\SIZE(I_L(t))$ and thus changes over time.
Hence, for this section we use the more precise notation 
$k(t) = \floor{ \frac{\epsilon}{4 \omega h_B} \SIZE(I_L(t)) }$. 
Let $\kappa(t) = \frac{\epsilon}{4 \omega h_B} \SIZE(I_L(t))$.

At some point $t+1$, the value of $k(t+1)$ will increase such that
$k(t+1) = k(t) + 1$. Obviously, we can not rebuild the whole container assignment 
to fulfill the new group sizes required by \invC - \invD according to $k(t+1)$.
Instead, the block structure is exactly designed to deal with this situation.
By adjusting the ratio of the block sizes of $A$ and $B$, the problem mentioned
above can be overcome.
This dynamic block balancing technique was developed in
\cite[Sec. 3.3]{fullyDynamicBP} and is described in the following section.

\subsubsection{Dynamic Block Balancing}
\label{sec:dynBlockBalancing}
All groups of block $A$ that fulfill invariant \invC - \invD with parameter $k(t)$ 
can act as groups of block $B$ with parameter $k(t)+1 = k(t+1)$. 
So assuming that block $B$ is empty, renaming block
$A$ into $B$ fulfills the invariant properties. 
Note that with \textsc{ShiftA} we have an operation 
that transfers a single group from block $B$ to $A$.

Therefore, we have to ensure that whenever the (integer) value of $k(t)$ is willing to 
increase (\ie $\kappa(t)$ has a fractional value close to one), 
the block $B$ is almost empty.
Let the \textit{block balance} be defined as $bb(t) = \frac{A(t)}{A(t)+B(t)}$,
where $A(t)$ and $B(t)$ denote the number of groups in the respective block 
summing over all categories. 
Note that $bb(t)$ equals one if and only if the $B$-block is empty.

Let $\fract{\kappa(t)} \in [0,1)$ be the fractional part of $k(t)$.
To adjust the block balance to the value of $\fract{\kappa(t)}$, 
we partition the interval $[0,1)$ into smaller intervals $J_i$:

$$ J_i = \left[ \frac{i}{A(t)+B(t)}, \frac{i+1}{A(t)+B(t)} \right) 
\hspace*{20pt} \text{for } i=0, \ldots, A(t)+B(t)-1 $$
Algorithm\nobreakspace \ref {alg:blockBalancing} shows the block balancing algorithm: A number $d$ of groups is moved 
from block $B$ to $A$ such that afterwards $bb(t) \in J_i \iff \fract{\kappa(t)} \in J_i$.
Hence, block $B$ is almost empty when $k(t)$ is willing to increase.

\begin{algorithm}
	\SetAlgoLined
	\DontPrintSemicolon
	\SetKwInOut{Input}{Input}
	\SetKw{Or}{or}
	\SetKwFunction{Shift}{Shift}
	\SetKwFunction{Insert}{Insert}	
	\SetKwFunction{ShiftA}{ShiftA}		
	
	\BlankLine
	Let $i$ s.t. $\fract{\kappa(t)} \in J_i$ \;
	Let $j$ s.t. $bb(t) \in J_j$ \;
	Set $d = i-j \bmod A(t)+B(t)$ \;
	\For{p=0 \ldots d-1}{
		\uIf(\tcp*[f]{Block $B$ is empty}){$j+p = A(t) + B(t)$ }{
			Rename Block $A$ to Block $B$ \;
		}
		\shiftA{$l$} \tcp*[f]{for any suitable category $l$} \;
	} 
	
	\caption{Block Balancing Algorithm}
	\label{alg:blockBalancing}
\end{algorithm}

\begin{lemma}
	\label{lemma:blockBalancingIsCorrect}
	Assume that $\fract{\kappa(t)} \in J_i$. 
	At the end of Algorithm\nobreakspace \ref {alg:blockBalancing} it holds that
	$bb^*(t) \in J_i$, where $bb^*(t)$ is the block balance after shifting $d$
	groups from $B$ to $A$.
\end{lemma}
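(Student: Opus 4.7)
The plan is to track the block balance through each iteration of Algorithm~\ref{alg:blockBalancing} and verify that exactly $d$ shifts (with the rename handled correctly) land the balance in $J_i$. First I would observe that the initial block balance $bb(t) = A(t)/(A(t)+B(t))$ sits precisely at the left endpoint of $J_{A(t)}$, so the initial index is $j = A(t)$. This is the anchor point for the subsequent algebra.

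Next I would analyze how $bb$ evolves. A single call to \texttt{ShiftA}$(l)$ (on any suitable category $l$) moves exactly one group from block~$B$ to block~$A$. Crucially, the total number of groups $A(t)+B(t)$ is preserved by \texttt{ShiftA}, since the algorithm only relabels one block--$B$ group as a block--$A$ group (and possibly adjusts the flexible group $(l,B,q(l,B))$ via container insertion/removal, without creating or destroying groups for the purposes of this count). Consequently each iteration increments $bb$ by $1/(A(t)+B(t))$, so after $p$ iterations (without any rename) $bb = (A(t)+p)/(A(t)+B(t))$.

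I would then handle the rename step. When $j + p = A(t)+B(t)$ the value $bb$ would reach $1$, which falls outside $[0,1)$; at this moment block $B$ is empty, the algorithm relabels block $A$ as block $B$, and $bb$ resets to $0$ (the left endpoint of $J_0$). Thus, counting modulo $A(t)+B(t)$, after $d$ iterations (including any rename) the block balance is
\[
bb^*(t) \;=\; \frac{(A(t)+d) \bmod (A(t)+B(t))}{A(t)+B(t)}.
\]
By the algorithm's choice $d = (i-j) \bmod (A(t)+B(t)) = (i-A(t)) \bmod (A(t)+B(t))$, so $(A(t)+d) \equiv i \pmod{A(t)+B(t)}$, giving $bb^*(t) = i/(A(t)+B(t))$, which lies in $J_i$.

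The main obstacle I anticipate is not the modular arithmetic but justifying that the preconditions of \texttt{ShiftA} are met at every iteration—specifically, that a suitable category $l$ with a non-empty block~$B$ exists whenever we need to shift, and that the total $A(t)+B(t)$ is truly invariant under \texttt{ShiftA} (including the case in Line~\ref{line:removeQLBContainer} of Algorithm~\ref{alg:shiftA} where the flexible group $(l,B,q(l,B))$ loses containers). Both points follow from the invariant properties and from Lemma~\ref{lemma:opsMaintainInvariant}, which guarantees that after each \texttt{ShiftA} the group structure remains well-defined, so the inductive step of the argument goes through cleanly.
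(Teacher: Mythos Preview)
Your approach is essentially the same as the paper's: both arguments show that each call to \textsc{ShiftA} advances the block balance by exactly one interval, and both handle the wrap-around via the rename and modular arithmetic. The one noteworthy difference is that you exploit the observation that $bb(t) = A(t)/(A(t)+B(t))$ is \emph{exactly} the left endpoint of $J_{A(t)}$, so $j = A(t)$ and all subsequent balances are also left endpoints; the paper instead treats $bb(t)$ as an arbitrary point of $J_j$ and bounds $bb'(t)$ with two inequalities. Your version is marginally cleaner, but buys nothing beyond cosmetics---the paper's interval-membership argument is equally short. The concern you raise about $A(t)+B(t)$ being invariant under \textsc{ShiftA} (in particular when the flexible group $(l,B,q(l,B))$ may lose all its containers) is a genuine subtlety that the paper's proof also glosses over; neither proof fully addresses it, so you are on equal footing there.
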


\begin{proof}
	Let $J_j$ be the interval containing $bb(t)$, the block balance before
	the \textsc{ShiftA} operations. We have to show that after performing
	$d = i-j \bmod A(t)+B(t)$ \textsc{ShiftA} operations, it holds that
	$bb^*(t) \in J_i$.
	Each call of \textsc{ShiftA} moves one group from $B$ to $A$.
	Let $bb'(t)$ be the block balance after one single call of \textsc{ShiftA}.
	We have
	$$ bb'(t) = bb(t) + \frac{1}{A(t)+B(t)} \geq \frac{j}{A(t)+B(t)} + \frac{1}{A(t)+B(t)}
	= \frac{j+1}{A(t)+B(t)} $$
	and
	$$ bb'(t) = bb(t) + \frac{1}{A(t)+B(t)} < \frac{j+1}{A(t)+B(t)} + \frac{1}{A(t)+B(t)}
	= \frac{j+2}{A(t)+B(t)} \,.$$
	Hence, $bb'(t) \in J_{j+1}$, where $j+1$ can be seen as performed 
	modulo $A(t)+B(t)$: When $j+p = A(t) + B(t)$ for $0 \leq p \leq d-1$, the algorithm
	renames block $A$ to $B$. Note that block $B$ is empty at this time since
	$p$ calls of \textsc{ShiftA} were performed previously.
	After renaming, the $A$ block is empty and the block balance lies in the
	interval $J_0$.
	As $bb^*(t)$ denotes the block balance after $d$ calls of \textsc{ShiftA}, 
	the interval index of $bb^*(t)$ equals
	$j + d \bmod A(t)+B(t) = j + (i-j \bmod A(t)+B(t)) \bmod A(t)+B(t)
	= i \,.$
\end{proof}

For the migration analysis it is important
how many groups are shifted between blocks. The next lemma shows
that the total number of groups shifted for the insertion
of a set $M$ grows proportional with $\SIZE(M)$.

\begin{lemma}
	\label{lemma:movedGroupsBlockBalancing}
	
	Let $t, t'$ be two time steps and $M$ the set of items inserted in between,
	\ie $I_L(t') = I_L(t) \cup M$.
	Assume that each item $i \in M$ causes one call of Algorithm\nobreakspace \ref {alg:blockBalancing}
	and let $d_i$ be the parameter $d$ for item $i$.
	It holds that 
	$\sum_{i \in M} d_i \leq \frac{8 + \epsilon}{2 h_B} \SIZE(M) + 1$.
\end{lemma}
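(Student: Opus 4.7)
The plan is to view $\sum_{i \in M} d_i$ as the total number of interval boundary crossings made by $\fract{\kappa(t)}$ on the unit circle as $t$ ranges over the arrival times in $M$, and then upper bound this by the total displacement of $\kappa$ multiplied by a uniform bound on the density of interval boundaries.

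First I would establish a uniform upper bound on the number of groups: by Lemma~\ref{lemma:NumberOfGroups}, for every time step $s$ during the processing,
$$N(s) := A(s) + B(s) \leq 2\omega + \frac{16\omega}{\epsilon} =: N_{\max}.$$
Next I would interpret the quantity $d_i$ geometrically. By Lemma~\ref{lemma:blockBalancingIsCorrect}, after the block balancing triggered by item $i$ the block balance $bb$ lies in exactly the interval $J_{i_{\text{target}}}$ that contains $\fract{\kappa(t_i)}$. Consequently, when the next item arrives, the "current" interval index $j$ used by Algorithm~\ref{alg:blockBalancing} is precisely the "target" index from the previous call. Hence $d_i$ counts, modulo a full revolution, how many interval boundaries of the partition $\{J_0, \ldots, J_{N(t_i)-1}\}$ are swept by $\fract{\kappa}$ during that single item's arrival.

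Now I would sum over all items of $M$. Since $\kappa$ increases monotonically and the total change is
$$\kappa(t') - \kappa(t) = \frac{\epsilon}{4\omega h_B}\SIZE(M),$$
and since every interval has length at least $1/N_{\max}$, the cumulative number of boundaries that $\fract{\kappa}$ sweeps across can exceed $N_{\max}(\kappa(t') - \kappa(t))$ by at most $1$ (accounting for the single partial interval between the last crossed boundary and the final position of $\fract{\kappa}$). Substituting,
$$\sum_{i \in M} d_i \leq N_{\max}\cdot\frac{\epsilon}{4\omega h_B}\SIZE(M) + 1 = \frac{2\epsilon + 16}{4h_B}\SIZE(M) + 1 = \frac{8 + \epsilon}{2h_B}\SIZE(M) + 1.$$

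The main obstacle is arguing that the rounding loss accumulates to a single additive $+1$ rather than one per item. This rests on the fact that, by Lemma~\ref{lemma:blockBalancingIsCorrect}, after each update the block balance exactly tracks the interval containing $\fract{\kappa}$ — there is no "residual" rounding carried between items — so the $d_i$'s literally partition the sequence of boundary crossings that $\fract{\kappa}$ performs throughout $M$'s processing. A secondary subtlety is that $N(t)$ itself varies (via \textsc{ShiftA} possibly removing trailing containers of $(l,B,q(l,B))$ or via container insertions from the outer \textsc{Shift}), so the partition of the unit circle is repeatedly refined or coarsened; the uniform bound $N(t) \leq N_{\max}$ is exactly what allows replacing all these partitions by the worst-case density $N_{\max}$ in one stroke.
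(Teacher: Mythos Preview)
Your proposal is correct and follows essentially the same approach as the paper: bound the total displacement $\kappa(t')-\kappa(t)=\tfrac{\epsilon}{4\omega h_B}\SIZE(M)$, use Lemma~\ref{lemma:NumberOfGroups} to bound the number of intervals by $N_{\max}=2\omega+16\omega/\epsilon$ (hence each interval has length $\geq 1/N_{\max}$), invoke Lemma~\ref{lemma:blockBalancingIsCorrect} to argue that the $d_i$'s track the interval crossings of $\fract{\kappa}$, and divide. Your circle interpretation and your explicit discussion of why the rounding loss is a single $+1$ (and of the varying $N(t)$) are in fact more careful than the paper's terse treatment, which writes the somewhat imprecise expression $\fract{\kappa(t')}-\fract{\kappa(t)}$, but the underlying argument is the same.
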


\begin{proof}
	By definition, $\kappa(t)$ grows linearly in $\SIZE(I_L(t))$, thus
	$\kappa(t') - \kappa(t) = \allowbreak \frac{\epsilon}{4 \omega h_B} \SIZE(M)$.
	We obtain an upper bound for the difference of the fractional parts:
	$
	\fract{\kappa(t')} - \fract{\kappa(t)} 
	\leq \frac{\epsilon}{4 \omega h_B}  \SIZE(M)
	$.
	
	By Lemma\nobreakspace \ref {lemma:NumberOfGroups}, at each time $t$ the total number
	of groups $A(t)+B(t)$ is at most 
	$\frac{16 \omega + 2 \omega \epsilon}{\epsilon}$.
	Hence, all intervals $J_i$ have the length 
	$\frac{1}{A(t)+B(t)} \geq \frac{\epsilon}{16 \omega + 2 \omega \epsilon}$.
	From Lemma\nobreakspace \ref {lemma:blockBalancingIsCorrect} we know that
	$bb(t) \in J_i \iff \fract{\kappa(t)} \in J_i$ after each run of Algorithm\nobreakspace \ref {alg:blockBalancing}.
	The total number of intervals that can lie between $\kappa(t')$ 
	and $\kappa(t)$ can thus be bounded as follows:
	$$ \sum_{i \in M} d_i 
	\leq \ceil{ \frac{\fract{\kappa(t')} - \fract{\kappa(t)}}
		{\frac{\epsilon}{16 \omega + 2 \omega \epsilon}} }
	\leq \ceil{ \frac{ \frac{\epsilon}{4 \omega h_B} \SIZE(M) }
		{\frac{\epsilon}{16 \omega + 2 \omega \epsilon}} }
	\leq    \frac{8 + \epsilon}{2 h_B} \SIZE(M) + 1
	$$ 
\end{proof}

\subsubsection{Insertion of Big Items}
\label{sec:insertBig}

The insertion algorithm for a big item $i_t$ given in
Algorithm\nobreakspace \ref {alg:insertBigItem} is very simple:
Basically, the insertion of item $i_t$ is done by the
\textsc{Shift} algorithm called with a suitable group.
Afterwards, the block balancing presented in Section\nobreakspace \ref {sec:dynBlockBalancing} is 
performed.

\begin{algorithm}
	\SetAlgoLined
	\DontPrintSemicolon
	\SetKwInOut{Input}{Input}
	\SetKw{Or}{or}
	\SetKwFunction{Shift}{Shift}
	\SetKwFunction{BlockBalancing}{BlockBalancing}
	\SetKwFunction{Insert}{Insert}	
	\SetKwFunction{shiftA}{shiftA}		
	
	\Input{Item $i_t \in I_L$}
	\BlankLine
	
	Find suitable group $g=(l,X,r)$ according to Definition~\ref{def:suitableGroup} \;
	\Shift{$g,\{i_t\}$} \; 
	\BlockBalancing \;
	
	\caption{Insertion of a big item}
	\label{alg:insertBigItem}
\end{algorithm}

It remains to give the \textsc{Place} algorithm which is used as a
subroutine in \textsc{Shift}. 
For each item $i$ to be inserted, Algorithm\nobreakspace \ref {alg:placeBig} looks for a container
where $i$ can be added without overfilling the container, places
the item on top of the stack and calls \textsc{Stretch} to resolve overlaps.

\begin{algorithm}
	\caption{Placing of big items}
	\label{alg:placeBig}
	
	\SetAlgoLined
	\DontPrintSemicolon
	\SetKwInOut{Input}{Input}
	\SetKw{Or}{or}
	\SetKwFunction{Shift}{Shift}
	\SetKwFunction{Stretch}{Stretch}
	
	\SetAlgoNoEnd%
	
	\Input{Group $g$, set of big items $S$}
	\BlankLine
	
	Let $H(c) = \sum_{i \in I_L : \con(i)=c} h(i)$ \\
	
	\For{$i \in S$}{
		Find container $c$ with $R(c)=g$ and $H(c) \leq h_B - h(i)$ \;
		Place $i$ on top of the stack for big items in $c$ \;
		\Stretch{u}, where $u$ is the level of $c$ \;
	}
\end{algorithm}

\subsubsection{Insertion of Flat Items}
\label{sec:insertFlat}

The main difficulty of flat items becomes clear in the following scenario:
Imagine that flat items of a group $g$ are elements of 
$S_{out} = \textsc{WidestItems}(g,\Delta)$ in a shifting process.
Remember that generally each container from which items are removed has to be
sinked, \ie at most $\card{S_{out}}$ containers. 
In case of big items, due to their minimum height $\epsilon$ we get 
$\card{S_{out}} \leq \floorS{\Delta / \epsilon}$.
In contrast, flat items can have an arbitrary small height and thus no such
bound is possible.
But \textsc{Sink} on all $K_g$ containers would lead to unbounded
migration (since $K_g$ depends on $\SIZE(I_L)$). 
Therefore, we aim for a special packing structure that avoids the above
problem of sinking too many containers.

Like shown in Figure\nobreakspace \ref {abs:fig:containerContent},
flat items build a sorted stack at the top of the container such that the least wide item is placed
at the top edge.
Thereby, widest items can be removed from the container without leaving a gap.
To maintain the sorting, we introduce a buffer for flat items called 
\textit{$F$-buffer}.
It is located in a rectangular segment of width 1 and height $\omega y$,
somewhere in the packing,
where $y = \card{G} \epsilon^2 + \epsilon$.
Note that the additional height for the F-buffer is
bounded by $\omega y = \bigO{\epsilon \left(\log 1 / \epsilon \right)^2}$.
The internal structure of the F-buffer is shown in Figure\nobreakspace \ref {fig:flatBuffer}:
For each category $l$ there are $2^l$ slots in one level of height
$y$. Items can be placed in any slot of their category.

\begin{figure}
	\centering
	\includegraphics[width=0.5\textwidth]{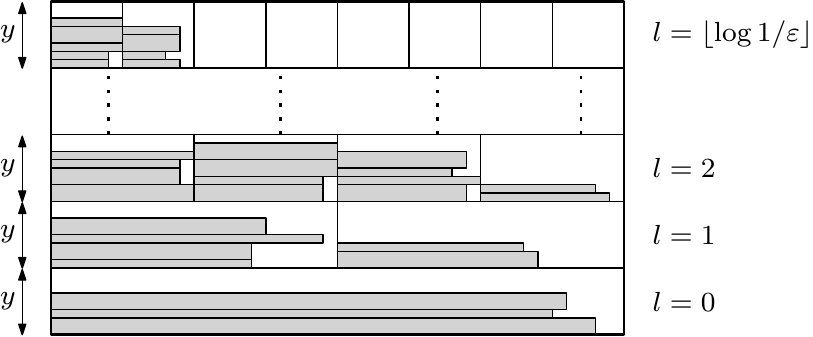}
	\caption{F-buffer contains $2^l$ slots of height $y$ for each category $l$.}
	\label{fig:flatBuffer}
\end{figure}

An incoming flat item may overflow the F-buffer, more precisely, the level of one category in the F-buffer.
For this purpose, Algorithm\nobreakspace \ref {alg:insertFlat} iterates over all groups 
$g_q, g_{q-1}, \ldots, g_0$ of this category, where $g_q$ is the rightmost
and $g_0$ the leftmost group\footnote{Note that the direction of the iterative shifting is crucial: 
	Calling \textsc{Shift} for a group $g$ may reassign items in all groups left
	to $g$. Therefore, iterating from \enquote{right to left} is necessary to guarantee
	that after shifting into group $g$, no group to the right of $g$ is suitable for a remaining
	item in $S$.
	In other words, with this direction one shift call for each group is enough,
	which is in general not true for the direction \enquote{left to right}.}. 
For each group, the set $S$ contains 
those items in the F-buffer for which $g$ is a suitable group.
The set $S$ is split into smaller subsets of total height at most 1, then each 
subset gets inserted via a single call of \textsc{Shift}.

\begin{algorithm}
	\caption{Insertion of a flat item}
	\label{alg:insertFlat}
	
	\SetAlgoLined
	\DontPrintSemicolon
	\SetKwInOut{Input}{Input}
	\SetKw{Or}{or}
	\SetKw{And}{and}
	\SetKwFunction{Shift}{Shift}
	\SetKwFunction{Stretch}{Stretch}	
	\SetAlgoNoEnd
	
	\Input{Flat item $i_t \in I_L$ of category $l$}
	\BlankLine
	
	\uIf{$i_t$ can be placed in the F-buffer}{
		Place $i_t$ in the F-buffer \;
	}
	\uElse{
		Let $B(l)$ be the set of items in the buffer slots of category $l$ \;
		\For{$j=q(l,A)+q(l,B)+1, \ldots,0$}{
			\label{line:shiftFlatGroupLoop}
			Let $g_j = \begin{cases} 
			(l,A,j)			  & j \leq q(l,A) \\
			(l,B,j-q(l,A)-1)  & j >    q(l,A)
			\end{cases}$ \;
			Let $S = \{ i \in B(l) \mid g_j \text{ is suitable for } i \}$ \;
			Let $S_1, \ldots, S_n$ be partition of $S$ with $h(S_r) \in (1-\epsilon,1]$
			for all $1 \leq r \leq n$.
			\label{line:partitionShiftFlat} \;
			\For{$r=1, \ldots, n$}{
				\Shift{$g_j$,$S_r$} \;
			}
			Remove $S$ from $B(l)$ \;
		}
		\BlockBalancing \;
	}
\end{algorithm}

Algorithm\nobreakspace \ref {alg:placeFlat} shows \textsc{Place}
for flat items, required for \textsc{Shift}: 
The set of flat items $S$ gets partitioned into sets $S_j$ each of total height at most 1.
Then, each set $S_j$ is
placed into a container analogously to Algorithm\nobreakspace \ref {alg:placeBig}.
Thereby, the stack of flat items needs to be resorted.

\begin{algorithm}
	\caption{Placing of flat items}
	\label{alg:placeFlat}
	
	\SetAlgoLined
	\DontPrintSemicolon
	\SetKwInOut{Input}{Input}
	\SetKw{Or}{or}
	\SetKw{And}{and}
	\SetKwFunction{Shift}{Shift}
	\SetKwFunction{Stretch}{Stretch}	
	\SetAlgoNoEnd
	
	\Input{Group $g$, set of flat items $S$}
	\BlankLine
	
	Let $H(c) = \sum_{i \in I_L: \con(i)=c} h(i)$ \;
	Let $S_1,\ldots,S_n$ be partition of $S$ with $h(S_r) \in (1-\epsilon,1]$
	for all $1 \leq r \leq n$. \;
	
	\For{$j=1,\ldots,n$}{
		Find container $c$ with $R(c)=g$ and $H(c) \leq h_B - h(S_j)$	\;
		Add $S_j$ to the stack of flat items and resort the stack \;
		\Stretch{u}, where $u$ is the level of $c$ \;
	}
\end{algorithm}

By choice of $y$, the slot height of the F-buffer, we get the following observation
required for a later proof:

\begin{lemma}
	\label{lemma:sizeShiftPartitionFlat}
	The for-loop in Line\nobreakspace \ref {line:partitionShiftFlat} of Algorithm\nobreakspace \ref {alg:insertFlat} iterates at most $\bigO{1 / \epsilon}$ times.
\end{lemma}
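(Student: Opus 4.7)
The plan is to bound the number of pieces $n$ of the partition directly by combining the lower-height guarantee of each piece with a volume bound on the content of the F-buffer at category $l$.

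First I would use the defining property of the partition: since $h(S_r) > 1 - \epsilon$ for every $1 \leq r \leq n$, summing gives $n(1-\epsilon) < h(S)$ and hence $n < h(S)/(1-\epsilon)$. So it suffices to bound $h(S)$.

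Next, note $S \subseteq B(l) \cup \{i_t\}$, and $h(i_t) < \epsilon$ because $i_t$ is flat. I would then bound $h(B(l))$ via the F-buffer structure. At category $l$ the buffer consists of $2^l$ vertical slots inside one level of height $y = \card{G}\epsilon^2 + \epsilon$, each slot having width $2^{-l}$ (the strip width is $1$). Because any item of category $l$ has width strictly greater than $2^{-(l+1)}$, no two items of category $l$ fit side by side inside a slot; thus items of category $l$ stack vertically in their slot, giving $h(B(l)) \leq 2^l \cdot y$. Using $2^l \leq 1/\epsilon$ (valid since $l \leq \floor{\log 1/\epsilon}$, as items in $I_L$ have width at least $\epsilon$), this becomes $h(B(l)) \leq (1/\epsilon)(\card{G}\epsilon^2+\epsilon) = \card{G}\epsilon + 1$.

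Finally I would invoke Lemma \ref{lemma:NumberOfGroups} to get $\card{G} = \bigO{\omega/\epsilon}$, so $\card{G}\epsilon = \bigO{\omega} = \bigO{\log 1/\epsilon}$. Combining with $h(i_t) < \epsilon$ and $1/(1-\epsilon) \leq 4/3$ for $\epsilon \leq 1/4$, we conclude $n \leq h(S)/(1-\epsilon) = \bigO{\log 1/\epsilon} = \bigO{1/\epsilon}$, as desired. There is no real obstacle here; the only conceptual point worth flagging is the horizontal-exclusion argument that pins $h(B(l))$ to $2^l y$ rather than to a larger quantity coming from hypothetically packing multiple category-$l$ items side by side in one slot.
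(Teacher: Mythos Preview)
Your argument is correct and follows essentially the same route as the paper: bound $n$ by $h(S)/(1-\epsilon)$, then bound $h(S)$ by $2^l y \leq y/\epsilon$ and invoke Lemma~\ref{lemma:NumberOfGroups} to control $\card{G}$ in $y = \card{G}\epsilon^2 + \epsilon$. Your algebra actually extracts the slightly sharper bound $n = \bigO{\log 1/\epsilon}$, whereas the paper stops at the cruder $y \leq 17$ and $h(S) \leq 17/\epsilon$; your explicit horizontal-exclusion remark is a nice clarification of why the slot height caps the stacked height.
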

\begin{proof}
	The number of iterations is $n \leq \ceil{\frac{h(S)}{1-\epsilon}}$.
	Thus is is enough to show $h(S)  \leq \bigO{1  / \epsilon}$.
	Each category $l$ has $2^l$ slots of height $y = \card{G} \epsilon^2 + \epsilon$.
	By Lemma\nobreakspace \ref {lemma:NumberOfGroups} it follows $\card{G} \epsilon^2 
	\leq 16 + 2 \epsilon$ and thus $y \leq 17$.
	Further, the maximum slot number is $2^{\floor{\log 1 / \epsilon}} \leq 1 / \epsilon$, 
	therefore $h(S) \leq 17 / \epsilon$.
	%
	%	By Lemma\nobreakspace \ref {lemma:NumberOfGroups}, 
	%	$\card{G} \leq \omega \left( 2 + \frac{16}{\epsilon}\right)$.
	%	%
	%	For $\epsilon \leq \frac{1}{4}$ it holds that
	%	$\epsilon \log \frac{1}{\epsilon} \leq \frac{1}{2}$. 
	%	Therefore, $\epsilon \omega 
	%	= \epsilon \left( \floor{ \log \frac{1}{\epsilon} } + 1 \right)
	%	\leq \epsilon \left( \left( \log \frac{1}{\epsilon} \right) + 1 \right)
	%	\leq \frac{1}{2} + \epsilon \leq 1$.
	%	Hence,
	%	$$
	%	\epsilon^2 \card{G} \leq \epsilon^2 \omega \left( 2 + \frac{16}{\epsilon}\right)
	%	\leq \epsilon \left( 2 + \frac{16}{\epsilon} \right)
	%	= 2 \epsilon + 16 \,.
	%	$$
\end{proof}

\subsubsection{Online-Approximation Guarantee}
\label{sec:improveApprox}

In Section~\ref{sec:opsMaintainProps} we shows that single \textsc{Shift}- and \textsc{ShiftA}-operations
maintain the invariant properties. This holds for Algorithms\nobreakspace \ref {alg:insertBigItem} 
and\nobreakspace  \ref {alg:insertFlat} as well. 

\begin{lemma}
	\label{lemma:insertionAlgsMaintainInv}
	Algorithms \nobreakspace \ref {alg:insertBigItem} and \ref{alg:insertFlat} maintain all invariant
	properties and the LP/ILP-solutions stay feasible.
\end{lemma}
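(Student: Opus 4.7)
The plan is to reduce the statement to the already-proved Lemmas~\ref{lemma:opsMaintainInvariant} and~\ref{lemma:opsMaintainFeasibleXY}, by viewing each insertion algorithm as a sequence of \textsc{Shift} and \textsc{ShiftA} calls (plus purely cosmetic renamings inside \textsc{BlockBalancing}) and checking, at every call, the preconditions demanded by those lemmas, namely that the group argument $g$ is suitable for the passed items in the sense of Definition~\ref{def:suitableGroup} and that the passed set $S$ satisfies $h(S) \le h_B - 1$.

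For Algorithm~\ref{alg:insertBigItem} the argument is short. Step~1 picks a group $g$ satisfying Definition~\ref{def:suitableGroup} for $i_t$, so \textsc{Shift}$(g,\{i_t\})$ is called with $h(\{i_t\}) = h(i_t) \le 1 \le h_B - 1$; Lemma~\ref{lemma:opsMaintainInvariant} yields \invA--\invE and Lemma~\ref{lemma:opsMaintainFeasibleXY} preserves LP/ILP feasibility. \textsc{BlockBalancing} performs a finite number of \textsc{ShiftA} calls, each preserving all five invariants and LP/ILP feasibility by the same two lemmas, and occasionally renames block $A$ to $B$; I would verify separately that this renaming is invariant-preserving by noting it happens exactly when block $B$ is empty and when $k$ has just incremented, so the former $A$-groups (of cardinality $2^l k_{\text{old}} = 2^l(k_{\text{new}}-1)$) now satisfy~\invD with parameter $k_{\text{new}}$ in block $B$, while \invA, \invB, \invE are syntactically untouched.

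For Algorithm~\ref{alg:insertFlat} I distinguish the two cases of the outer \textbf{if}. If $i_t$ fits in the F-buffer, no container assignment or rounding is altered, the LP/ILP-solutions are unaffected, and nothing changes for \invA--\invE (which are statements about $\con$ and $R$, not about the F-buffer). Otherwise, the algorithm iterates $j$ from right to left over the groups of category $l$ and, for each $j$, partitions the subset $S \subseteq B(l)$ of items suitable for $g_j$ into chunks $S_1,\dots,S_n$ with $h(S_r) \in (1-\epsilon,1]$, issuing one \textsc{Shift}$(g_j,S_r)$ per chunk. For each such call, suitability of $g_j$ for every $i \in S_r$ is explicit in the construction of $S$, and $h(S_r) \le 1 \le h_B - 1$ since $h_B \ge 2$; Lemmas~\ref{lemma:opsMaintainInvariant} and~\ref{lemma:opsMaintainFeasibleXY} then apply. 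The final \textsc{BlockBalancing} is handled as in the big case.

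The main obstacle I anticipate is the inductive consistency across the nested loops of the flat case: after processing group $g_j$, a recursive chain of \textsc{Shift} calls triggered inside $g_j$ may have reassigned items further to the left, so I must argue that, once we move on to $g_{j-1}$, the set of items still sitting in $B(l)$ that we earlier flagged as suitable for $g_{j-1}$ is still suitable and that no items suitable for some $g_{j'}$ with $j' > j$ have appeared. This is exactly what the footnote to Algorithm~\ref{alg:insertFlat} addresses: iterating from right to left guarantees that a \textsc{Shift} into $g_j$ only redistributes items to groups strictly left of $g_j$, never right, so suitability labels for groups $g_{j-1},\dots,g_0$ computed against the current instance remain correct, and a single \textsc{Shift} per group suffices. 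Combined with the invariants maintained at every inner call, this closes the induction and yields the lemma.
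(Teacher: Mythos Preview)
Your proposal is correct and follows essentially the same approach as the paper: reduce to Lemmas~\ref{lemma:opsMaintainInvariant} and~\ref{lemma:opsMaintainFeasibleXY} by verifying $h(S)\le 1\le h_B-1$ and suitability for each top-level \textsc{Shift} call, then handle \textsc{BlockBalancing} via \textsc{ShiftA}. The only notable difference is that the paper explicitly invokes Lemma~\ref{lemma:hSout} and Corollary~\ref{lemma:1-2bins} to confirm that every \emph{recursive} call in a shift sequence also satisfies $h(S)\le h_B-1$, whereas you leave this to the black-box statement of Lemma~\ref{lemma:opsMaintainInvariant}; your additional remarks on the $A\to B$ renaming and the right-to-left iteration are correct but go beyond what the paper's proof spells out.
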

\begin{proof}
	Both algorithms use the \textsc{Shift} operation to insert new items. 
	In Lemma\nobreakspace \ref {lemma:opsMaintainInvariant} we showed that this operation
	maintains the invariant properties if applied with a set $S$ of items with
	$h(S) \leq h_B - 1$. Thus, the claim follows by 
	Lemma\nobreakspace \ref {lemma:opsMaintainInvariant} if we can show the latter condition.

	We show that for the first call $\textsc{Shift}(g^0,S^0)$ it holds that
	$h(S^0) \leq 1$. Then, $h(S) \leq h_B - 1$ follows by Lemma\nobreakspace \ref {lemma:hSout}
	and \MakeUppercase Corollary\nobreakspace \ref {lemma:1-2bins}.
	In case of big items (Algorithm\nobreakspace \ref {alg:insertBigItem}), $S^0$ contains a single item,
	which has maximum height 1.
	For flat items (Algorithm\nobreakspace \ref {alg:insertFlat}), the partition of the set $S$ ensures
	that $\textsc{Shift}$ is only called with items of total height 1.
	
	Afterwards both algorithms call the block balancing Algorithm\nobreakspace \ref {alg:blockBalancing}.
	Here, the crucial operation is \textsc{ShiftA}, but 
	Lemma\nobreakspace \ref {lemma:opsMaintainInvariant} already gives the claim.
	Also by Lemma\nobreakspace \ref {lemma:opsMaintainInvariant}, all changes to the LP/ILP-solutions
	result in new feasible solutions.
\end{proof}

However, since the packing height can increase due new containers,
also the level assignment gets changed in order to fulfill the approximation guarantee. 
Here, the crucial operation is \textsc{Improve} called in Algorithms~\ref{alg:insertContainer},
\ref{alg:insertContainerShiftA}, and \ref{alg:deleteContainerShiftA}
which we analyze in the following.
The next theorem is a modified version of Theorem 3 in \cite{fullyDynamicBP}
and justifies that we can apply \textsc{Improve}
on suitable LP/ILP-solutions $x$, $y$ to obtain
a solution with reduced additive term.
The proof of Theorem\nobreakspace \ref {theo:improveApplication} is moved to Appendix\nobreakspace \ref {app:improve}.

For a vector $x$, let $\nnz{x}$ denote the number of non-zero components.
Let $\delta$ be a parameter specified later.
By \MakeUppercase Theorem\nobreakspace \ref {theo:approxGuarantee}, we have
$\OPT(\CR{\con}{R}) \leq (1+\epsilon') \OPT(I_L) + z$
for $\epsilon' = 4 \epsilon$ and $z = \bigO{1 / \epsilon^4}$.
Let $\Delta = (1 + \epsilon') (1+\delta) - 1$
and $m$ be the number of constraints in LP~\ref {lp:binpacking}. 
Since there is exactly one constraint for each occurring width (group), 
by Lemma\nobreakspace \ref {lemma:NumberOfGroups}
$m \leq \frac{16 \omega}{\epsilon} + 2 \omega$.
Finally, let $D = \Delta \OPT(I_L) + m + (1+\delta)z$.
With the above definitions, we have
\begin{align}
\label{eq:improveOptCrOptIl}
(1+\epsilon) \OPT(\CR{\con}{R}) \leq (1+\Delta) \OPT(I_L) + (1+\epsilon)z
\,.
\end{align}

\begin{theorem}
	\label{theo:improveApplication}
	Given a rounded container instance $\CR{\con}{R}$ and an LP defined 
	for $\CR{\con}{R}$, let $\delta > 0$, $\alpha \in \NN$, and $x$ be a fractional 
	solution of the LP with
	\begin{align}
	\norm{x} &\leq (1+\Delta) \OPT(I_L) + (1+\delta)z \label{eq:improve:theo3a} \,,\\
	\norm{x} &\geq 2 \alpha \left( \tfrac{1}{\delta}+1 \right) \,,
	\label{eq:improve:theo3b}\\
	\norm{x} &=    (1+\delta') \LIN(\CR{\con}{R}) 
	\hspace*{12pt} \text{for } \delta'>0 \,.
	\label{eq:improve:theo3c}
	\intertext{Let $y$ be an integral solution of the LP with}
	\norm{y} &\geq (m+2) \left( \tfrac{1}{\delta}+2 \right )  \,,
	\label{eq:improve:theo3d}\\
	\norm{y} &\leq (1+2\Delta) \OPT(I_L) + 2 (1+\delta)z + m \,.
	\label{eq:improve:theo3e}
	\end{align}
	Further, let $\nnz{x} = \nnz{y} \leq D$ and
	$x_i \leq y_i$ for all $1 \leq i \leq n$.

	Then, algorithm \textsc{Improve}($\alpha$) on $x$ and $y$ returns a new
	fractional solution $x'$ with
	$\norm{x'} \leq (1+\Delta) \OPT(I_L) + (1+\delta)z - \alpha$
	and also a new integral solution $y'$ with
	$\norm{y'} \leq (1+2\Delta) \OPT(I_L) + 2 (1+\delta) z + m - \alpha$.
	Further, $\nnz{x'}=\nnz{y'} \leq D$, and
	for each component we have $x'_i \leq y'_i$.
	The number of levels to change in order to obtain the new packing 
	corresponding with $y'$ is bounded by $\bigO{m / \delta}$.
\end{theorem}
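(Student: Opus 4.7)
The plan is to derive Theorem~\ref{theo:improveApplication} by invoking the \textsc{Improve} procedure from \cite{jansen2013robust,fullyDynamicBP} as a black box, after translating the strip-packing hypotheses stated in terms of $\OPT(I_L)$ into the bin-packing form in which \textsc{Improve} was originally analyzed. The translation goes through Equation~\eqref{eq:improveOptCrOptIl} (which is a direct consequence of Theorem~\ref{theo:approxGuarantee}), letting us replace a bound of the form $(1+\Delta)\OPT(I_L) + (1+\delta)z$ with a bound of the form $(1+\epsilon)\OPT(\CR{\con}{R})$, matching the classical hypotheses on $\norm{x}$ required by the reference result, and analogously for $\norm{y}$.

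First I would rewrite the hypotheses \eqref{eq:improve:theo3a}--\eqref{eq:improve:theo3e} as assumptions on the rounded container instance $\CR{\con}{R}$, using Theorem~\ref{theo:approxGuarantee} and the definitions of $\Delta$ and $D$. The lower bounds \eqref{eq:improve:theo3b} and \eqref{eq:improve:theo3d} ensure that $x$ and $y$ are large enough compared to $m$ and $\alpha$ for \textsc{Improve} to have enough slack; the near-LP-optimality condition \eqref{eq:improve:theo3c} feeds into the rounding argument at the core of the procedure; the support bound $\nnz{x} = \nnz{y} \leq D$ together with $x_i \leq y_i$ guarantees that the basic feasible solution constructed by \textsc{Improve} stays short. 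Once these assumptions are cast in the right form, the result in \cite{fullyDynamicBP} yields a modified pair $(x',y')$ with $\norm{x'} \leq \norm{x} - \alpha$ and $\norm{y'} \leq \norm{y} - \alpha$, and preserves both $\nnz{x'} = \nnz{y'} \leq D$ and componentwise dominance. Substituting the bounds \eqref{eq:improve:theo3a} and \eqref{eq:improve:theo3e} then yields the claimed inequalities on $\norm{x'}$ and $\norm{y'}$.

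For the bound of $\bigO{m/\delta}$ on the number of changed levels, I would look directly at the internal mechanics of \textsc{Improve} in the reference: each local basis-exchange step alters $\bigO{1/\delta}$ patterns, and reducing the objective by a constant amount such as $\alpha$ requires $\bigO{m}$ such exchanges, summing to $\bigO{m/\delta}$. Since here the patterns live in the rounded container instance rather than in a bin packing instance, I would verify that pattern changes translate into level changes with no blow-up, because by construction one level corresponds to one pattern up to a bijection on the LP side.

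The main obstacle is the bookkeeping around the additive term $(1+\delta)z$ with $z = \bigO{1/\epsilon^4}$: one has to make sure that inserting Equation~\eqref{eq:improveOptCrOptIl} into the references' bounds does not create cross-terms (e.g.\ products of $\Delta$ with $z$) that would eat into the gain $\alpha$ produced by a single call of \textsc{Improve}. The specific choice $D = \Delta\OPT(I_L) + m + (1+\delta)z$ is precisely tuned so that it simultaneously upper-bounds $\norm{x}$ (and hence also the number of non-zero patterns in a basic solution) while remaining polynomial in the input; verifying this compatibility, and checking that the hypothesis \eqref{eq:improve:theo3b} suffices for the $\alpha$-unit reduction, is the place where the proof has to be carried out carefully rather than quoted.
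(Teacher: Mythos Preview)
Your high-level approach matches the paper's: invoke the \textsc{Improve} machinery of \cite{jansen2013robust} as a black box after translating the hypotheses on $\OPT(I_L)$ into hypotheses on $\LIN(\CR{\con}{R})$ via Theorem~\ref{theo:approxGuarantee}. However, two concrete pieces of the paper's argument are missing from your sketch, and one of your claims about the black box is inaccurate.

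First, the paper does not use a single black-box result but two, depending on whether $\delta' \leq \delta$ or $\delta' > \delta$ (Theorem~8 versus Corollary~9 of \cite{jansen2013robust}). This case split is not optional: in the regime $\delta' > \delta$ you cannot conclude $\norm{x} \leq (1+\delta)\LIN(\CR{\con}{R})$, so the hypothesis of the first black box fails and a different statement is needed. Second, the black box does \emph{not} return $\norm{y'} \leq \norm{y} - \alpha$ unconditionally; it returns either $\norm{y'} = \norm{y} - \alpha$ or $\norm{y'} = \norm{x'} + D$ (respectively $\norm{x} - \alpha + D$ in the corollary). The second alternative is where the precise choice $D = \Delta\OPT(I_L) + m + (1+\delta)z$ earns its keep: substituting it into $\norm{x'} + D$ is what produces the target bound $(1+2\Delta)\OPT(I_L) + 2(1+\delta)z + m - \alpha$. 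Your description of $D$ as ``upper-bounding $\norm{x}$'' is off; its role is to satisfy the prerequisite $D \geq \delta\LIN(\CR{\con}{R})$ (resp.\ $D \geq \delta'\LIN(\CR{\con}{R})$) and to absorb the additive slack in the $y'$ alternative. The paper verifies both of these explicitly, and the verification of $D \geq \delta'\LIN(\CR{\con}{R})$ in the second case uses hypothesis~\eqref{eq:improve:theo3a} in an essential way, which your outline does not mention.
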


\begin{theorem}
	\label{theo:heightContainerPackingAfterImprove}
	Assuming $\SIZE(I_L(t)) \geq \frac{4 \omega h_B}{\epsilon} (h_B + 1)$,
	Algorithms\nobreakspace \ref {alg:insertBigItem} and\nobreakspace  \ref {alg:insertFlat} are AFPTASs for the insertion of big 	
	and flat items with asymptotic ratio 
	$1+2\Delta$ for $\Delta \in \bigO{\epsilon}$.
\end{theorem}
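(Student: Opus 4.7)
The plan is to prove the theorem by induction on the time step $t$, maintaining the invariant that the LP/ILP pair $(x, y)$ of the current rounded container instance $\CR{\con}{R}$ satisfies
\begin{align*}
\norm{x} &\leq (1+\Delta)\OPT(I_L(t)) + (1+\delta)z \,, \\
\norm{y} &\leq (1+2\Delta)\OPT(I_L(t)) + 2(1+\delta)z + m \,,
\end{align*}
which are exactly the hypotheses \eqref{eq:improve:theo3a} and \eqref{eq:improve:theo3e} of Theorem \ref{theo:improveApplication}, with $\delta = \bigO{\epsilon}$ chosen so that $\Delta = (1+\epsilon')(1+\delta) - 1 \in \bigO{\epsilon}$. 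The packing height at time $t$ is $h_B \norm{y}$, and the additive term $h_B(2(1+\delta)z + m)$ is polynomial in $1/\epsilon$ by Theorem \ref{theo:approxGuarantee} and Lemma \ref{lemma:NumberOfGroups}. Hence the invariant directly yields the AFPTAS guarantee with asymptotic ratio $1+2\Delta$.

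For the base case, at the first time step satisfying \eqref{eq:sizeGEQhb+1} the offline Algorithm \ref{alg:offline} constructs a container assignment and rounding function for which \invA--\invE hold; the LP optimum gives the $x$-bound by Theorem \ref{theo:approxGuarantee}, and standard rounding as in \cite{jansen2013robust} yields an integral $y$ satisfying the $y$-bound. For the inductive step, suppose the invariant holds at time $t$ and $i_{t+1}$ arrives. By Lemma \ref{lemma:insertionAlgsMaintainInv} the insertion algorithm preserves the invariant properties and LP/ILP feasibility. The only events changing $\norm{x}$ or $\norm{y}$ are the container updates performed by Algorithms \ref{alg:insertContainer}, \ref{alg:insertContainerShiftA}, and \ref{alg:deleteContainerShiftA}. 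Each such update first invokes \textsc{Improve}$(1, x, y)$, which by Theorem \ref{theo:improveApplication} reduces both $\norm{x}$ and $\norm{y}$ by one while preserving the non-zero bound $\nnz{x} = \nnz{y} \leq D$ and the dominance $x_i \leq y_i$, and then increments one pattern coefficient by one to account for the new container, restoring the LP's right-hand side. The net change to $\norm{x}$ and $\norm{y}$ is zero, and since $\OPT(I_L(t+1)) \geq \OPT(I_L(t))$, the bounds propagate forward.

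The main obstacle is verifying the size hypotheses \eqref{eq:improve:theo3b} and \eqref{eq:improve:theo3d} of Theorem \ref{theo:improveApplication} at every \textsc{Improve} call: both right-hand sides are polynomial in $1/\epsilon$, and since each item occupies some level one has $\norm{y} \geq \SIZE(I_L(t))/h_B$, so the hypotheses follow from the minimum-size assumption \eqref{eq:sizeGEQhb+1} for a sufficiently large poly$(1/\epsilon)$ constant. A secondary subtlety is that Algorithms \ref{alg:insertContainerShiftA} and \ref{alg:deleteContainerShiftA} affect $2^l$ containers of a single category simultaneously; however, since those containers all have width at most $2^{-l}$ they share a single pattern in LP \ref{lp:binpacking}, so one \textsc{Improve}-and-increment pair suffices and the zero-net-change argument still applies. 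Finally, by Corollary \ref{lemma:1-2bins}, Lemma \ref{lemma:sizeShiftPartitionFlat}, and Lemma \ref{lemma:movedGroupsBlockBalancing} the total number of container updates triggered per arriving item is polynomial in $1/\epsilon$, and \textsc{Improve} itself runs in polynomial time \cite{jansen2013robust}, giving the AFPTAS runtime bound and completing the proof.
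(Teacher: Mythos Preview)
Your proposal is correct and follows essentially the same inductive strategy as the paper: maintain the bounds \eqref{eq:improve:theo3a} and \eqref{eq:improve:theo3e} as the induction hypothesis, verify the size hypotheses \eqref{eq:improve:theo3b} and \eqref{eq:improve:theo3d} from the minimum-size assumption via $\SIZE(I_L)\le h_B\norm{x}\le h_B\norm{y}$, and observe that each \textsc{Improve}-then-increment leaves $\norm{x}$ and $\norm{y}$ net-unchanged while $\OPT(I_L)$ can only grow. One small correction: Algorithm~\ref{alg:deleteContainerShiftA} does not actually invoke \textsc{Improve}, but since its pattern swap already leaves both norms unchanged your net-zero argument still applies; you should also explicitly carry the side conditions $\nnz{x}=\nnz{y}\le D$, $x_i\le y_i$, and \eqref{eq:improve:theo3c} through the increment step, as the paper does in part~(ii).
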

\begin{proof}~
	\paragraph*{Approximation Guarantee}
	Let $\Delta, \epsilon', z, m$ be defined as above and set $\delta=\epsilon$. 
	Then, $\Delta = 4\epsilon + \epsilon + \epsilon^2 = \bigO{\epsilon}$.
	We show by induction that the algorithms maintain a packing of height $h$ with
	\begin{equation}
	\label{eq:heightContainerPackingAfterImprove}
	h \leq (1+2\Delta) \OPT(I_L) + 2(1+\epsilon)z + m \,.
	\end{equation}
	
	Suppose that the packing corresponds with $x$ and $y$, 
	which are fractional/integer solutions to $LP(\CR{\con}{R})$ such that
	$\norm{x} =    (1+\epsilon) \LIN(\CR{\con}{R})$ and
	$\norm{y} \leq (1+\epsilon) \OPT(\CR{\con}{R}) + m$.
	Further, let $\nnz{x} = \nnz{y} \leq m$ and $x_i \leq y_i$ for all $i \leq n$.
	Note that if $x$ is a basic solution for the LP relaxation 
	with accuracy $1+\epsilon$, a solution $y$ with the above properties can be derived 
	by rounding up each non-zero entry of $x$. 
	Since we obtain the container packing by an integral solution $y$ of the LP,
	the conclusion of Theorem\nobreakspace \ref {theo:improveApplication} implies 
	Equation\nobreakspace \textup {(\ref {eq:heightContainerPackingAfterImprove})}.
	Therefore, the remainder of the proof consists of two parts:
	\begin{enumerate}[label=(\roman*)]
		\item We show that Theorem\nobreakspace \ref {theo:improveApplication} is applicable for  
		$x, y$, and $\alpha=1$ (since \textsc{Improve} is applied only with $\alpha=1$).
		\label{item:improveProofStep1}
		\item We show that for all algorithms that modify the LP/ILP-solutions
		(Algorithms\nobreakspace \ref {alg:insertContainer},  \ref {alg:insertContainerShiftA}, and\nobreakspace  \ref {alg:deleteContainerShiftA})
		the returned solutions fulfill 
		the prerequisites of Theorem\nobreakspace \ref {theo:improveApplication}.
		\label{item:improveProofStep2}	
	\end{enumerate}
	
	\begin{description}
		\item[\ref {item:improveProofStep1}]
		By condition,
		$\SIZE(I_L(t)) \geq \frac{4 \omega h_B}{\epsilon} (h_B + 1)$.
		We show that this implies $\SIZE(I_L(t)) \geq h_B (m+2) (\frac{1}{\delta} + 2)$.
		The following equivalence holds using $\delta=\epsilon$:
		\begin{align*}
		\frac{4 \omega h_B}{\epsilon} (h_B + 1) 
		\geq h_B (m+2) \left(\tfrac{1}{\delta} + 2 \right)
		\Longleftrightarrow 4 \omega (h_B+1) &\geq (m+2) (1+2\epsilon)
		\end{align*}
		The second inequality is true since $4 (h_B + 1) \geq m+2$ and 
		$\omega \geq 1 + 2\epsilon$ for $\epsilon \leq 0.25$.
		Further, the following relation between $\SIZE(I_L)$, $\norm{x}$, and
		$\norm{y}$ holds:
		\begin{equation}
		\label{eq:relationSizes}
		\SIZE(I_L) \leq \SIZE(\C{\con}) \leq \SIZE(\CR{\con}{R}) \leq h_B \norm{x} \leq h_B \norm{y}
		\end{equation}
		
		With $\SIZE(I_L(t)) \geq h_B (m+2) \left( \frac{1}{\delta} + 2 \right)$ 
		and Equation\nobreakspace \textup {(\ref {eq:relationSizes})},
		$\norm{x} \geq (m+2) \left( \frac{1}{\delta} + 2 \right) 
		\geq 2  \left( \frac{1}{\delta} + 2 \right)$ and
		$\norm{y} \geq \norm{x} \geq (m+2) \left( \frac{1}{\delta} + 2 \right) $ follow.
		Hence, conditions \textup {(\ref {eq:improve:theo3b})} and \textup {(\ref {eq:improve:theo3d})}
		are fulfilled.
		
		Further, we have properties \textup {(\ref {eq:improve:theo3c})} 
		(with $\delta' = \epsilon$), \textup {(\ref {eq:improve:theo3e})}, and \textup {(\ref {eq:improve:theo3a})}:
		\begin{align*}
		\norm{x} &= (1+\epsilon) \LIN(\CR{\con}{R}) \leq (1+\epsilon) \OPT(\CR{\con}{R})
		\overset{\textup {(\ref {eq:improveOptCrOptIl})}}{\leq} (1+\Delta) \OPT(I_L) + (1+\epsilon) z \\
		\norm{y} &\leq (1+\epsilon) \OPT(\CR{\con}{R}) + m 
		\overset{\textup {(\ref {eq:improveOptCrOptIl})}}{\leq} (1+\Delta) \OPT(I_L) + (1+\epsilon) z + m
		\end{align*}
		
		The last two conditions of Theorem\nobreakspace \ref {theo:improveApplication} hold by assumption,
		thus it is applicable.		
		
		\item[\ref {item:improveProofStep2}]
		Since Algorithms\nobreakspace \ref {alg:insertContainer},  \ref {alg:insertContainerShiftA}, and\nobreakspace  \ref {alg:deleteContainerShiftA}
		all perform analog modifications on the LP/ILP-solutions, we can
		consider them at once. Let $x'$ and $y'$ be the solutions returned by 
		\textsc{Improve}(1,$x$,$y$) before the update of single components. 
		According to Theorem\nobreakspace \ref {theo:improveApplication},
		$\norm{x'} \leq (1+\Delta) \OPT(I_L) + (1+\delta)z - 1$ and 
		$\norm{y'} \leq (1+2\Delta) \OPT(I_L) + 2 (1+\delta) z + m - 1$.
		Further, $x'$ and $y'$ have the same number of non-zero entries and
		$x_i' \leq y_i'$ for each component.
		
		Now, let $x''$ and $y''$ be the solutions finally returned by the algorithms.
		We have: 
		\begin{align*}
		\norm{x''} &\leq \norm{x'} + 1 \leq (1+\Delta) \OPT(I_L) + (1+\delta)z  \,,\\
		\norm{y''} &\leq \norm{y'} + 1 \leq (1+2\Delta) \OPT(I_L) + 2(1+\delta)z + m 
		\,.
		\end{align*}
		Since $x_i'$ and $y_i'$ get simultaneously increased or decreased, 
		the properties $x_i'' \leq y_i''$ and $\nnz{x''} = \nnz{y''} \leq D$ hold.
		Also, since $\norm{x''} \geq \norm{x}$ and $\norm{y''} \geq \norm{y}$,
		eqs.\nobreakspace \textup {(\ref {eq:improve:theo3b})} and\nobreakspace  \textup {(\ref {eq:improve:theo3d})} are maintained.
		Finally, \textup {(\ref {eq:improve:theo3c})} holds, even if we have no real control 
		about $\delta'$.
		
	\end{description}
	
	\paragraph*{Running Time}
	The running time of the overall algorithm is clearly dominated by the operation
	$\textsc{Improve}$. Like in \cite{jansen2013robust} and \cite{fullyDynamicBP}
	we apply \textsc{Improve} on a LP with 
	$m=\bigO{\frac{\omega}{\epsilon}} = \bigO{\frac{1}{\epsilon} \log \frac{1}{\epsilon}}$
	many rows, where the number of non-zero-entries is bounded from above by $D$.
	Therefore, we obtain a running time polynomial in $1/\epsilon$ and
	$\card{I(t)}$, see \cite{jansen2013robust} for further details.	
\end{proof}

\section{Narrow Items}
\label{sec:narrowItems}
%auto-ignore
% !TeX root = ../main.tex

For narrow items we use the concept of \textit{shelf algorithms}
introduced by Baker and Schwarz~\cite{baker1983shelf}.
The main idea is to place items of similar height in a row.
For a parameter $\alpha \in (0,1)$ item $i$ belongs to \textit{group} 
$r \in \NN \setminus \{0\}$ 
if $h(i) \in [(1-\alpha)^r, (1-\alpha)^{r-1})$.
Narrow items of group $r$ are placed into a \textit{shelf of group} $r$,
which is a rectangle of height $(1-\alpha)^{r-1}$,
see Figure\nobreakspace \ref {fig:narrowItems}.
Analogously to \cite{baker1983shelf}, we say a shelf of width $w$ is \textit{dense} when it 
contains items of total width greater than $w-\epsilon$ and \textit{sparse} otherwise.

\begin{figure}
	\centering
	\includegraphics[width=0.45\textwidth]{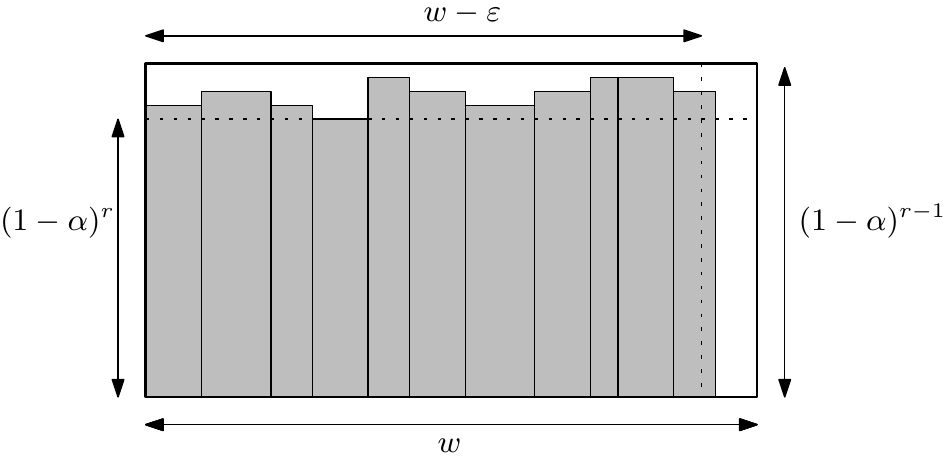}
	\caption{Shelf for narrow items of group $r$ (dense)}
	\label{fig:narrowItems}
\end{figure}

When the instance consists of narrow items only, the concept of shelf algorithms yields
an online AFPTAS immediately.
This is shown next in Section\nobreakspace \ref {sec:onlyNarrowItems}.
However, the goal is to integrate narrow items into the container packing introduced
in Section~\ref{sec:containerPacking}.
We show in Section~\ref{sec:narrowItemsCombination} how to fill gaps in the container packing with shelfs
of narrow items. This first-fit-algorithm for narrow items
maintains an asymptotic approximation ratio of $1+ \bigO{\epsilon}$, as
finally shown in Lemma\nobreakspace \ref {lemma:finalPackingHeight}.

\subsection{Narrow Items Only}
\label{sec:onlyNarrowItems}

Consider the following first-fit shelf algorithm:
Place an item of group $r$ into the first shelf of group $r$ where it fits.
If there is no such shelf, open a new shelf of group $r$ on top of the packing.

\begin{lemma}
	\label{lemma:onlyNarrowItems}
	The shelf algorithm with
	parameter $\alpha = \frac{\epsilon^2}{1-\epsilon^2}$ yields a 
	packing of height at most
	$(1+\epsilon) \OPT(I_N) + \bigO{1 / \epsilon^4}$.
\end{lemma}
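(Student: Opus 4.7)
My plan is to follow the classical first-fit-shelf analysis in the style of Baker--Schwarz, partitioning the shelves into \emph{dense} and \emph{sparse} ones and bounding the two total heights separately.

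First I would show that for every group $r$ at most one shelf of group $r$ is sparse. By the first-fit rule, a new shelf of group $r$ is opened only when no existing shelf of group $r$ can accommodate the arriving item. Since a narrow item has width strictly less than $\epsilon$, any shelf whose free width is at least $\epsilon$ admits it; therefore every earlier shelf of group $r$ must have had items of total width greater than $1 - \epsilon$ at that moment, i.e.\ must have been dense, and a dense shelf remains dense forever. Hence at most the last shelf opened for each group can be sparse.

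From this observation I bound the contribution of sparse shelves using the geometric structure of shelf heights:
\[
H_{\mathrm{sparse}} \;\leq\; \sum_{r \geq 1} (1-\alpha)^{r-1} \;=\; \frac{1}{\alpha} \;=\; \frac{1-\epsilon^2}{\epsilon^2} \;\leq\; \frac{1}{\epsilon^2},
\]
independently of how small the items may be.

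For the dense shelves I would use an area argument. Every item placed in a shelf of group $r$ has height at least $(1-\alpha)^r = (1-\alpha)\cdot(1-\alpha)^{r-1}$, and by definition the items in a dense shelf have total width greater than $1-\epsilon$. Therefore the area of items in a dense shelf of group $r$ is at least $(1-\alpha)(1-\epsilon)$ times the shelf area $(1-\alpha)^{r-1}$. Summing over all dense shelves and applying the trivial lower bound $\SIZE(I_N) \leq \OPT(I_N)$ gives
\[
H_{\mathrm{dense}} \;\leq\; \frac{\SIZE(I_N)}{(1-\alpha)(1-\epsilon)} \;\leq\; \frac{\OPT(I_N)}{(1-\alpha)(1-\epsilon)}.
\]
A short calculation with $\alpha = \epsilon^2 / (1-\epsilon^2)$ yields $(1-\alpha)(1-\epsilon) = (1-2\epsilon^2)/(1+\epsilon)$, so the multiplier reduces to $(1+\epsilon)/(1-2\epsilon^2)$. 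For $\epsilon \leq 1/4$ this is at most $1 + \epsilon + O(\epsilon^2)$. Adding the sparse bound gives the claimed shape $(1+\epsilon)\OPT(I_N) + \bigO{1/\epsilon^4}$.

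The main obstacle is the algebraic cleanup of the $(1+\epsilon)/(1-2\epsilon^2)$ factor: absorbing the residual $O(\epsilon^2)\OPT(I_N)$ into the additive $\bigO{1/\epsilon^4}$ term. I would handle this either by slightly retuning $\alpha$ so that the multiplier equals exactly $1+\epsilon$ while keeping $1/\alpha = \bigO{1/\epsilon^2}$, or by a case split on whether $\OPT(I_N)$ is below the threshold at which $O(\epsilon^2)\OPT(I_N)$ is already dominated by the $\bigO{1/\epsilon^4}$ additive term. Everything else in the argument is standard first-fit-shelf bookkeeping, and the proof has no dependence on the later container machinery of the paper.
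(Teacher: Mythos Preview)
Your proof is essentially the same as the paper's: split into dense and sparse shelves, observe that first-fit leaves at most one sparse shelf per group, bound the dense contribution by the area argument $H_{\mathrm{dense}}\le \SIZE(I_N)/((1-\alpha)(1-\epsilon))$, and bound the sparse contribution by the geometric series $\sum_{r}(1-\alpha)^{r-1}=1/(\alpha(1-\alpha))$. The paper carries out exactly this computation and arrives at the same multiplier $\frac{1}{(1-\epsilon)(1-\alpha)}=\frac{1+\epsilon}{1-2\epsilon^2}$; it simply asserts this is $\le 1+\epsilon$ ``by definition of $\alpha$,'' which, as you correctly notice, is not literally true---so your proposed cleanup (retuning $\alpha$, or absorbing the $O(\epsilon^2)$ overshoot) is in fact more careful than the paper's own argument, but the underlying strategy is identical.
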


\begin{proof}
	Let	$\beta_r$ the number of shelfs of group $r$ in the packing obtained 
	by the shelf algorithm.
	Further, for a group $r$ let 
	$I_r = \{i \in I \mid h(i) \in [(1-\alpha)^r, (1-\alpha)^{r-1}) \}$ .
	Each dense shelf for group $r$ contains items of size at least
	$(1-\alpha)^r (1-\epsilon)$, see Figure\nobreakspace \ref {fig:narrowItems}.
	Note that by the first-fit-principle, for each group at most one shelf is sparse.
	Thus there are at least $\beta_r - 1$ dense shelfs for each group $r$, hence
	$\SIZE(I_r) \geq (\beta_r - 1)(1-\alpha)^r(1-\epsilon)$,
	or equivalently
	\begin{align}
	\label{eq:shelfLoadOfGroup}
	\beta_r \leq \SIZE(I_r) (1-\alpha)^{-r} (1-\epsilon)^{-1} + 1 \,.
	\end{align}
	The packing consists of $\beta_r$ shelfs of height $(1-\alpha)^{r-1}$ for each
	group $r$ (set $\beta_r=0$ if the group does not exist).
	Therefore, the packing height is by Equation~\ref {eq:shelfLoadOfGroup}
	%
%	\begin{align*}
%	&~~~~ \sum_{r=0}^\infty \beta_r (1-\alpha)^{r-1} \\
%	&\leq \sum_{r=0}^\infty \left( \SIZE(I_r) (1-\alpha)^{-r} (1-\epsilon)^{-1} 
%	+ 1 \right) (1-\alpha)^{r-1} 
%	& \text{eq.\nobreakspace \textup {(\ref {eq:shelfLoadOfGroup})}} \\
%	&= \sum_{r=0}^\infty \SIZE(I_r) (1-\alpha)^{-1} (1-\epsilon)^{-1}
%	+ (1-\alpha)^{r-1} \\
%	&= \frac{1}{(1-\epsilon)(1-\alpha)} \sum_{r=0}^\infty \SIZE(I_r)
%	+ \sum_{r=0}^\infty (1-\alpha)^{r-1} \\
%	&\leq \frac{1}{(1-\epsilon)(1-\alpha)} \SIZE(I) 
%	+ \frac{1}{\alpha - \alpha^2}
%	& \text{Geometric series} \\
%	&\leq \frac{1}{(1-\epsilon)(1-\alpha)} \OPT(I) 
%	+ \frac{1}{\alpha - \alpha^2} \\
%	&= (1+ \epsilon) \OPT(I) + \bigO{\frac{1}{\epsilon^4}}
%	& \text{Choice of } \alpha
%	\end{align*}

$$	\sum_{r=0}^\infty \beta_r (1-\alpha)^{r-1} 
	\leq \sum_{r=0}^\infty \left( \SIZE(I_r) (1-\alpha)^{-r} (1-\epsilon)^{-1} 
	+ 1 \right) (1-\alpha)^{r-1}  \,.$$
	By splitting the sum and moving constant factors in front we get that the last term
	equals
	$$\frac{1}{(1-\epsilon)(1-\alpha)} \sum_{r=0}^\infty \SIZE(I_r)
		+ \sum_{r=0}^\infty (1-\alpha)^{r-1} 
		\leq (1+\epsilon) \SIZE(I_N)
		+ \sum_{r=0}^\infty (1-\alpha)^{r-1} \,,
		$$
		where the inequality follows by definition of $\alpha$ and $I_r$.
		Since $\SIZE(I_N) \leq \OPT(I_N)$, the claim follows if we can show
		that the additive term is $\bigO{1 / \epsilon^4}$.
		This follows by the geometric series:
	$ \sum_{r=0}^\infty (1-\alpha)^{r-1}
	= \frac{1}{1-\alpha} \sum_{r=0}^\infty (1-\alpha)^r
	= \frac{1}{1-\alpha} \frac{1}{\alpha}
	= \bigO{1 / \epsilon^4}
	$.
\end{proof}

\subsection{Combination with Container Packing}
\label{sec:narrowItemsCombination}

As shown in Section\nobreakspace \ref {sec:onlyNarrowItems} shelfs are a good way to
pack narrow items efficiently. But before opening a new shelf that
increases the packing height, we have to ensure that the existing
packing is well-filled.
Therefore, the idea is to fill gaps in the container packing with shelfs of
narrow items. Thereby, we define a \textit{gap} is the rectangle of height $h_B$
that fills the remaining width of an aligned level.

To simplify the following proofs, we introduce artificial \textit{D-containers} 
filling the remaining width of a container level completely
and think of placing shelfs inside the D-containers.
We say that a D-container is \textit{full} if shelfs of total height
greater than $h_B-1$ are placed inside.
For this section, we call the containers for big and flat items 
(introduced in Section\nobreakspace \ref {sec:containerPacking}) \textit{C-containers} 
to distinguish them from D-containers that contain shelfs of narrow items.

A level is called \textit{well-filled} if the total width of containers
(including the D-container, if existing) is at least $1-2\epsilon$, 
and \textit{badly-filled} otherwise.
Figure\nobreakspace \ref {fig:fillingGapsTopLevel} shows a container packing filled with
D-containers: All levels are well-filled, assuming that the gray dashed
area is of total width less than $2 \epsilon$.
Note that a badly-filled level can be made well-filled
by aligning the C-containers with the \textsc{Align} operation
and then define a D-container of the remaining width.

\begin{figure}
	\centering
	\begin{subfigure}[t]{0.45\textwidth}
		\centering
		\includegraphics[height=6cm]{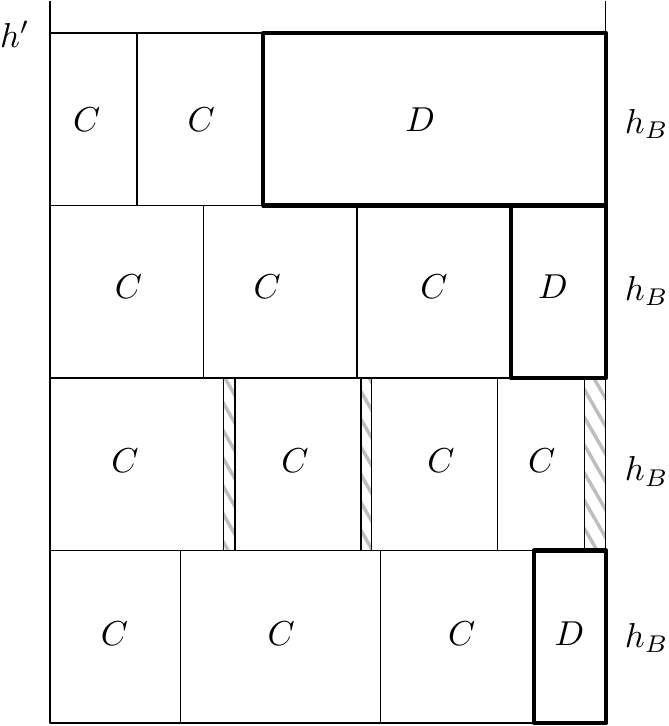}
		\caption{Well-filled container packing}
		\label{fig:fillingGapsTopLevel}
	\end{subfigure}
	~
	\begin{subfigure}[t]{0.35\textwidth}
		\centering
		\includegraphics[height=5.8cm]{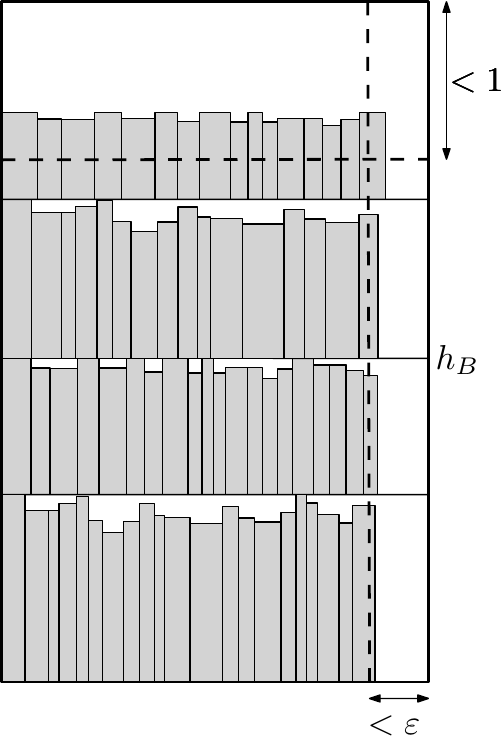}
		\caption{Full D-container}
		\label{fig:internalDContainer}
	\end{subfigure}	
	\caption{D-containers are introduced to fill gaps in the container
		packing with shelfs.}
	\label{fig:dContainerPacking}
\end{figure}
Further, we introduce the term \textit{load}:
For a container-rectangle $r$, let $\LOAD(r)$ denote the total size
of items that are packed inside the rectangle $r$.
Clearly, $\LOAD(r) \leq \SIZE(r)$.
For example, see the dense shelf $r$ for group $s$
in Figure\nobreakspace \ref {fig:narrowItems}: Here,
$\LOAD(r) \geq (1-\alpha)^s (w-\epsilon)$.
If $R$ is a set of rectangles, we define $\LOAD(R) = \sum_{r \in R} \LOAD(r)$.

In the following, we use $\alpha = \epsilon$ as shelf parameter and set 
$\lambda = 1 / (\epsilon-\epsilon^2)$.
Remember that the total height of sparse shelfs is at most
$\sum_{r=0}^\infty (1-\epsilon)^{r-1} = 1 / (\epsilon-\epsilon^2) = \lambda$
(see Section~\ref {sec:onlyNarrowItems}).

\subsubsection{Insertion Algorithm}
\label{sec:insNarrowItem}

Similar to flat items, we need a buffer for narrow items.
The \textit{N-Buffer} is a rectangular segment of height $h_B$ and width 1
placed somewhere in the strip. Items inside the N-buffer are organized in shelfs.

First we define an auxiliary algorithm called \textsc{Shelf-First-Fit}.
This algorithm tries to find a position for a narrow item without increasing the packing height.
It may not use the N-buffer. If no such position exists, the algorithm returns $\mathit{false}$.

\begin{algorithm}
	\caption{\textsc{Shelf-First-Fit}}
	\label{alg:ShelfFirstFit}
	
	\SetAlgoLined
	\DontPrintSemicolon
	\SetKwInOut{Input}{Input}
	\SetKwFunction{Shift}{Shift}
	\SetKwFunction{WidestItems}{WidestItems}	
	\SetKwFunction{Align}{Align}		
	\SetKw{If}{if}
	\SetKw{Else}{else}
	\SetKw{exit}{exit}
	
	\Input{Narrow item $i$ of group $r$}
	\BlankLine
	
	\If there is a shelf of group $r$ that can take $i$, place $i$ there \;
	\Else \If a new shelf of group $r$ can be
	opened without increasing the packing height, open it 
	and place $i$ there \;
	\Else return false \;
\end{algorithm}

\begin{figure}
	\centering
	\includegraphics[width=0.7\textwidth]{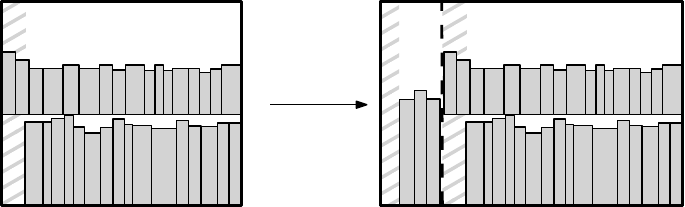}
	\caption{Placing a new shelf into the empty area of an enlarged 
		D-container does not maintain the shelf structure in general.}
	\label{fig:DcontainerExtended}
\end{figure}

With Algorithm\nobreakspace \ref {alg:ShelfFirstFit} as a subroutine, we can give the entire
algorithm for the insertion of narrow items in Algorithm\nobreakspace \ref {alg:insertNarrowItem}.
When an item can not be placed without increasing the packing height, 
the (full) N-buffer gets flushed:
For this purpose, the algorithm tries to find $q$ badly-filled levels,
where $q$ will be specified later. Aligning levels increases the area
for shelfs, but the following observation is crucial:

Assume that a D-container is filled with shelfs and its width gets increased
due to an \textsc{Align} operation. Placing a new shelf inside
the new area does not maintain the shelf structure and thus
not the lower bound on its load (see Figure\nobreakspace \ref {fig:DcontainerExtended} for an example).
Therefore, when aligning a level, items in the respective D-container get
temporarily removed. After aligning $q$ levels,
removed items from former D-containers, the buffer items,
and the current item get inserted via \textsc{Shelf-First-Fit}. 
This way, the shelf structure in the enlarged D-containers is maintained.

Finally, if there still remain items (for example, because there were
too few badly-filled levels), new shelfs are opened on top of the packing.
The next lemma states an important invariant property of Algorithm~\ref{alg:insertNarrowItem}.
\begin{algorithm}
	\caption{Insertion of a narrow item}
	\label{alg:insertNarrowItem}
	
	\SetAlgoLined
	\DontPrintSemicolon
	\SetKwInOut{Input}{Input}
	\SetKwFunction{Shift}{Shift}
	\SetKwFunction{WidestItems}{WidestItems}	
	\SetKwFunction{Align}{Align}		
	\SetKw{If}{if}
	
	\Input{Narrow item $i$}
	\BlankLine

	\If $i$ can be placed via \textsc{Shelf-First-Fit} or in the N-buffer,
	place $i$ \label{line:narrowItemDirectPlace}\;
	\uElse{
		\tcp*[l]{N-Buffer is full. Align badly-filled levels}
		Let $L$ be the set of badly-filled levels \;
		\For{$u \in L$, but at most $q$ times} {		
			\textsc{Align}$(u)$\;
			Remove items in D-container, if existing \label{line:removeNarrowItems}\;
			Define empty D-container filling the remaining width\;
		}
		Empty N-buffer \label{line:emptyNbuffer} \;
		Let $S$ be the set of all removed items from Lines~\ref{line:removeNarrowItems}
		and \ref{line:emptyNbuffer}, and item $i$ \;
		Try to reinsert each $j \in S$ via \textsc{Shelf-First-Fit} 
		\label{line:reinsertNarrowItems} \;
		For remaining items:
		Open new shelfs of width 1 on top of the packing
		\label{line:packNarrowItemsOnTop}\;
	}
\end{algorithm}
\begin{lemma}
	\label{lemma:characteristicsOfInsNarrowItem}
	 Algorithm~\ref{alg:insertNarrowItem} aligns at most $q = \bigO{1 / \epsilon}$ levels and
	 after repacking, the N-buffer is empty.
	 Further, if it increases the packing height, all levels are well-filled and D-containers are full.
\end{lemma}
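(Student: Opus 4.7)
The plan is to dispatch the three claims by a single volume/capacity argument for $q$, together with two direct inspections of the algorithm for the other parts. The easy one is the N-buffer being empty afterwards: Line~\ref{line:emptyNbuffer} empties it by definition, and the subsequent reinsertion in Lines~\ref{line:reinsertNarrowItems} and~\ref{line:packNarrowItemsOnTop} only writes to D-containers or to new shelves on top of the packing, never back into the N-buffer.

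For the bound $q=\bigO{1/\epsilon}$ the plan is to set up an area inequality. The items that must be re-placed after the alignments are the contents of the N-buffer (total size at most $h_B$, since it is a rectangle of width $1$ and height $h_B$) together with the new item $i$ (size $<\epsilon$); items from D-containers of levels in $L$ contribute nothing, because a level that has a D-container has total occupied width exactly $1$ and is therefore well-filled, not in $L$. Each alignment of a badly-filled level creates a new D-container of width strictly more than $2\epsilon$ (since its C-containers occupy less than $1-2\epsilon$), hence of area more than $2\epsilon h_B$. Using the shelf bound implicit in the proof of Lemma~\ref{lemma:onlyNarrowItems} (overhead $\lambda$ plus a $(1+\epsilon)$-factor on the load) applied per D-container, each such new D-container can absorb via \textsc{Shelf-First-Fit} items of total size at least $(h_B-\lambda)\cdot 2\epsilon/(1+\epsilon)$. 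Choosing $q$ so that these D-containers jointly cover at least $h_B+\epsilon$ units of size then yields $q=\bigO{1/\epsilon}$, using $h_B=13/\epsilon^2$ and $\lambda=1/(\epsilon-\epsilon^2)$.

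For the third claim I would argue contrapositively from the observation that every shelf has height at most $1$, since a shelf of group $r\ge 1$ has height $(1-\epsilon)^{r-1}\le 1$. Hence if some D-container had total shelf-height $T\le h_B-1$, then \textsc{Shelf-First-Fit} could open a new shelf there for an item of any group, and no item in the reinsertion loop could have failed. Consequently, if Line~\ref{line:packNarrowItemsOnTop} is reached, every D-container must satisfy $T>h_B-1$, i.e.\ every D-container is full. To conclude that every level is well-filled, suppose some badly-filled level had not been aligned; then the loop must have exited after $q$ iterations. But by the previous paragraph $q$ alignments already produce enough D-container capacity for the whole reinsertion set, so the set $S$ would be absorbed by \textsc{Shelf-First-Fit} and Line~\ref{line:packNarrowItemsOnTop} would not be reached — contradiction. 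Hence all badly-filled levels have been aligned and each now carries a D-container of width completing the level to $1$.

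The main obstacle will be the calibration of $q$ in the second paragraph: the shelf-packing bound produces an additive overhead of $\lambda=\Theta(1/\epsilon)$ per enlarged D-container together with a $(1+\epsilon)$-factor on the load, and these have to be combined with the fact that $h_B=\Theta(1/\epsilon^2)$ so that the resulting inequality $q\cdot\Theta(\epsilon h_B)\ge h_B+\epsilon+\text{overhead}$ still closes with $q=\bigO{1/\epsilon}$.
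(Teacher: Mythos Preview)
Your argument for the N-buffer claim is fine and matches the paper.

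The genuine gap is in your volume argument for $q$. You assert that ``a level that has a D-container has total occupied width exactly $1$ and is therefore well-filled, not in $L$.'' This is not an invariant of the packing. A D-container is created to fill the remaining width of an \emph{aligned} level, but later operations on the container packing --- in particular the removal of a widest item from a C-container during \textsc{Shift} (which shrinks that container's actual width) and the pattern changes effected by \textsc{Improve} --- can drive the total container width of such a level below $1-2\epsilon$ while the old D-container is still present. The paper is explicit about this possibility: Line~\ref{line:removeNarrowItems} of Algorithm~\ref{alg:insertNarrowItem} removes items from the D-container of a badly-filled level ``if existing'', and the proof of Lemma~\ref{lemma:narrowItemsBoundRepacking} opens with ``The crucial point is that narrow items of all aligned levels may also have to be reinserted.'' Hence the reinsertion set $S$ may contain up to $W(D)\,h_B$ additional area coming from the old D-containers of the $q$ aligned levels, where $W(D)$ is their total width; this extra, $q$-dependent term is precisely what drives the more delicate inequality in Lemma~\ref{lemma:narrowItemsBoundRepacking}. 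The paper's own proof of the present lemma simply defers to that lemma (which in fact gives $q=\bigO{1/\epsilon^2}$; see Table~\ref{tab:boundsForValues} --- the $\bigO{1/\epsilon}$ in the statement appears to be a slip).

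Since your ``all levels are well-filled'' argument in Part~3 relies on the capacity claim from Part~1 (namely: if $q$ badly-filled levels were aligned, then $S$ would necessarily be absorbed and Line~\ref{line:packNarrowItemsOnTop} could not be reached), the gap propagates there. Your argument that every D-container is full when Line~\ref{line:packNarrowItemsOnTop} is reached is correct on its own.
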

\begin{proof}
	For the first claim, note that the total size of $i$ and items from the N-buffer is at most $h_B + \epsilon$.
	We prove afterwards in Lemma~\ref{lemma:narrowItemsBoundRepacking} that $q = \bigO{1/\epsilon}$
	is an appropriate choice. Afterwards in Line~\ref{line:emptyNbuffer},
	the N-buffer is emptied.
	The last claim follows immediately from the description of the algorithm.
\end{proof}
\begin{lemma}
	\label{lemma:narrowItemsBoundRepacking}
	In Algorithm\nobreakspace \ref {alg:insertNarrowItem}, aligning $q= \bigO{1 / \epsilon^2}$ levels
	is enough to (re-)insert items of size at most $h_B + \epsilon$.
\end{lemma}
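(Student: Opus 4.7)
My plan is to proceed by contradiction: assume that even after aligning $q$ badly-filled levels and running shelf-first-fit on all items of $S$, some item still fails to be placed and must be pushed onto a new top shelf (Line~\ref{line:packNarrowItemsOnTop}). I will show this is impossible once $q$ is of order $\bigO{1/\epsilon^2}$.

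First I note that every aligned level was badly-filled, so its C-containers plus any old D-container occupied width strictly less than $1-2\epsilon$; the freshly defined empty D-container therefore has width strictly greater than $2\epsilon$. Next, if an item of group $r$ fails shelf-first-fit, then no existing shelf of group $r$ can accept it and no D-container retains vertical space of at least $(1-\epsilon)^{r-1}$ to host a new shelf; the latter forces every one of the $q$ new D-containers to hold shelfs of total height exceeding $h_B - (1-\epsilon)^{r-1} \geq h_B - 1$, i.e., each D-container is full in the sense of Section~\ref{sec:narrowItemsCombination}.

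The key estimate is a density lower bound. I invoke the following global sparse-shelf invariant of shelf-first-fit: at most one sparse shelf per group exists across all D-containers, since the residual horizontal space of at least $\epsilon$ in a sparse shelf always accommodates any narrow item (width less than $\epsilon$) of its group, so first-fit never opens a parallel sparse shelf. Hence the aggregated sparse-shelf height is bounded by $\lambda = \sum_{r \geq 1} (1-\epsilon)^{r-1} = 1/(\epsilon(1-\epsilon))$, and the total dense-shelf height across the $q$ containers is at least $q(h_B - 1) - \lambda$. In a dense shelf of group $r$ in a D-container of width $w \geq 2\epsilon$, the items occupy area at least $(w-\epsilon)(1-\epsilon)^r \geq \epsilon(1-\epsilon) \cdot (1-\epsilon)^{r-1}$, i.e., $\epsilon(1-\epsilon)$ times the shelf height. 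Summing over all dense shelfs, the total item area packed into the D-containers is at least $\epsilon(1-\epsilon)\bigl[q(h_B-1) - \lambda\bigr]$.

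This quantity cannot exceed $\SIZE(S) \leq h_B + \epsilon$. Substituting $h_B = 13/\epsilon^2$ and the above value of $\lambda$, the inequality rearranges to a bound $q = \bigO{1/\epsilon^2}$; choosing the constant in $q$ slightly larger yields the desired contradiction, so every item of $S$ is placed without opening a new top shelf. The main obstacle is the bookkeeping for the global (across-container) sparse-shelf invariant, and reconciling that $S$ also contains items removed from old D-containers during alignment: the latter is harmless because those items still count toward the packed area on both sides of the contradiction, and each aligned level gains width at least $2\epsilon$, providing enough additional capacity to absorb them together with the $h_B + \epsilon$ worth of new mass.
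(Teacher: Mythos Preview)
Your contradiction argument has a real gap in how it handles the items evicted from the old D-containers. You lower-bound the packed area in the $q$ new D-containers by $\epsilon(1-\epsilon)\bigl[q(h_B-1)-\lambda\bigr]$, using only the crude width bound $w(d')>2\epsilon$. You then compare this against $\SIZE(S)\le h_B+\epsilon$. But $S$ is not bounded by $h_B+\epsilon$: it also contains every narrow item removed in Line~\ref{line:removeNarrowItems}, whose total size can be as large as $W(D)\,h_B$, where $W(D)$ is the combined width of the old D-containers. In the worst case each badly-filled level already carried an old D-container of width close to $1-3\epsilon$, so $W(D)=\Theta(q)$ and $\SIZE(S)=\Theta(q\,h_B)$. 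Your capacity lower bound is only $\Theta(\epsilon\,q\,h_B)$; no choice of $q$ closes this factor-$\epsilon$ mismatch, and the contradiction never fires.

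Your closing sentence (``each aligned level gains width at least $2\epsilon$'') names the missing ingredient but does not use it: in your density estimate you invoked the \emph{absolute} bound $w(d')>2\epsilon$, not the \emph{gain} $w(d')\ge w(d)+2\epsilon$. The paper's proof keeps $w(d)$ explicit: it lower-bounds the load of each enlarged container by $(w(d)+\epsilon)(1-\epsilon)(h_B-1)$, so that the total capacity is at least $(1-\epsilon)(h_B-1)\bigl(W(D)+q\epsilon\bigr)-\lambda$. Now the $W(D)$-part of the capacity nearly cancels the $W(D)\,h_B$ mass of evicted items; the residual shortfall is only $W(D)\bigl[h_B-(1-\epsilon)(h_B-1)\bigr]=\Theta(W(D)/\epsilon)$, and together with the bound $W(D)\le(1-3\epsilon)q$ the net coefficient of $q$ becomes a positive constant. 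With $h_B=13/\epsilon^2$ one then needs $q=\Theta(1/\epsilon^2)$ to cover $h_B+\epsilon+\lambda$. To repair your argument you must carry $w(d)$ through the density bound rather than discarding it.
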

\begin{proof}
	The crucial point is that narrow items of all aligned levels may also have to be reinserted.
	Let $W(D)$ be the total width of D-containers in $q$ aligned levels before aligning.
	The load of this D-containers is thus at most $W(D) h_B$. 
	Therefore, narrow items of total size at most
	$ h_B + \epsilon + W(D) h_B $
	have to be (re-)inserted.
	
	Next, we analyze the minimum load of a shelf packing in the enlarged
	D-containers.
	Let $D'$ be the set of D-containers after aligning.
	We have $\card{D'}=q$.
	Let $d' \in D'$ be the enlarged container to $d \in D$.
	Each D-container gets enlarged in width by at least $2\epsilon$, 
	\ie $w(d') \geq w(d) + 2 \epsilon$ for all $d' \in D$.
	For the load of a single D-container, we subtract $\epsilon$ from the width
	and 1 from the height, thus 
	$\LOAD(d') 
	\geq (w(d')-\epsilon) (1-\epsilon) (h_B - 1)
	\geq (w(d)+\epsilon) (1-\epsilon) (h_B - 1)$.
	The factor $1-\epsilon$ is due to height differences inside the groups.
	Thus, the total load of containers in $D'$ is
	$$
	\LOAD(D') \geq \left(\sum_{d' \in D'} \LOAD(d') \right) - \lambda
	\geq (1-\epsilon) (h_B - 1) (W(D)+ q \epsilon) - \lambda \,,
	$$
	where the term $\lambda$ is due to sparse shelfs.
	In the remainder of the proof, we show
	\begin{equation}
	\label{eq:enlargedDContainers1}
	(1-\epsilon) (h_B - 1) (W(D)+ q \epsilon) - \lambda
	\geq h_B + \epsilon + W(D) h_B \,,
	\end{equation}
	\ie all items that have to be reinserted fit into the enlarged
	D-containers of the $q$ aligned levels.
	Note that no D-container in an aligned level has a width greater than 
	$1-3\epsilon$, thus $W(D) \leq (1-3\epsilon)q$.
	Let $z = (1-\epsilon)(h_B - 1)$ and $y = (1-3 \epsilon)(z-h_B)$.
	We show 
	\begin{equation}
	\label{eq:enlargedDContainers2}
	(y + z \epsilon) q \geq h_B + \epsilon + \lambda \,,
	\end{equation}
	which implies Equation\nobreakspace \textup {(\ref {eq:enlargedDContainers1})}, since
	Equation\nobreakspace \textup {(\ref {eq:enlargedDContainers1})} is equivalent to
	$W(D) (z- h_B) + z q \epsilon \geq h_B + \epsilon + \lambda$  
	and we have
	\begin{align*}
	W(D) (z- h_B) + z q \epsilon
	&= - W(D) (h_B - z) + z q \epsilon \\
	&\geq - (1-3 \epsilon) q (h_B - z) + z q \epsilon & W(D) \leq (1-3\epsilon) q\\
	&=       (1-3 \epsilon) q (z - h_B) + z q \epsilon \\
	&=       (y + z \epsilon) q \\
	&\geq h_B + \epsilon + \lambda & eq.\nobreakspace \textup {(\ref {eq:enlargedDContainers2})} \,.
	\end{align*}
	It holds that $y + z \epsilon \geq 12$
	%	\begin{proof}
	%		\begin{align*}
	%		y + z \epsilon 
	%		&= (1- 3 \epsilon) (z - h_B) + z \epsilon \\
	%		&= z - h_B - 3 \epsilon z + 3 \epsilon h_B + z \epsilon \\
	%		&= z (1-2 \epsilon) - h_B (1 - 3 \epsilon) \\
	%		&= (1-\epsilon) (h_B - 1) (1-2 \epsilon) - h_B (1-3 \epsilon) \\
	%		&= (1-3\epsilon + 2 \epsilon^2) (h_B - 1) - h_B (1-3 \epsilon) \\
	%		&= (1 - 3 \epsilon + 2 \epsilon^2 - 1 + 3 \epsilon) h_B - 1 + 3 \epsilon 
	%		- 2 \epsilon^2 \\
	%		&= 2 \epsilon^2 h_B - 1 + 3 \epsilon - 2 \epsilon^2 \\
	%		&\geq 12\,.
	%		\end{align*}
	%	\end{proof}
	and with $q=2 / \epsilon^2$ it follows that
	$
	(y+z \epsilon) q \geq 12 q = 24 / \epsilon^2 \geq 13 / \epsilon^2
	+ \epsilon + 1 / \epsilon^2 \geq h_B + \epsilon + \lambda$,
	thus Equation\nobreakspace \textup {(\ref {eq:enlargedDContainers2})} is shown.
\end{proof}

\subsubsection{Analysis}
The goal of this subsection is to show that if Algorithm\nobreakspace \ref {alg:insertNarrowItem}
increases the packing height, the packing up to the previous height is well-filled.
As the first step, we analyze the load of C-containers in the following lemma.
The ideas in the proof are similar to those used in Section\nobreakspace \ref {sec:approxGuarantee}.

\begin{figure}
	\centering
	\includegraphics[width=0.7\textwidth]{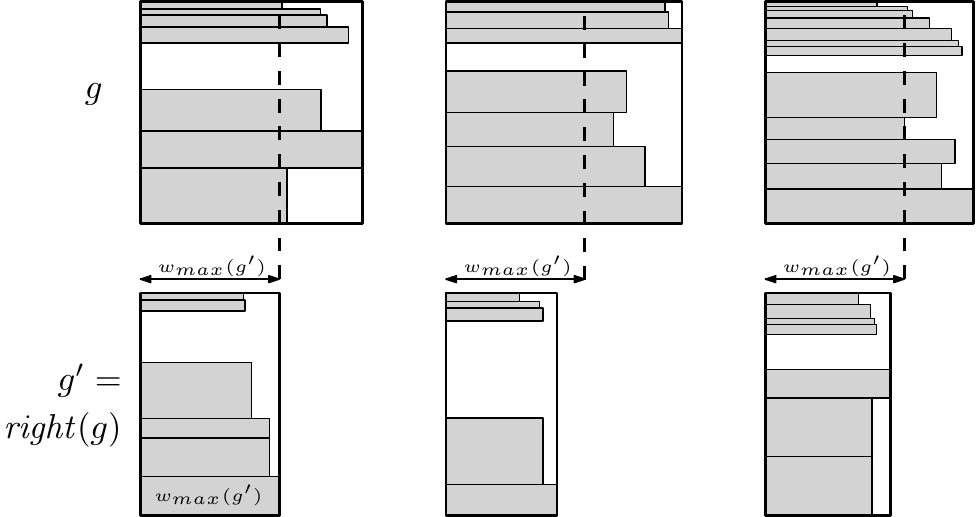}
	\caption{Containers of group $g$ have total load at least 	$h(g) \wmax(g')$.}
	\label{fig:fillingGaps}
	
\end{figure}

\begin{lemma}
	\label{lemma:loadCcontainers}
	Let $\C{\con}$ be a container instance. Then,
	$\LOAD(\C{\con}) \geq (1- 2 \epsilon) \SIZE(\C{\con}) - 
	\bigO{\omega / \epsilon^3}$.
\end{lemma}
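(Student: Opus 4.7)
The plan is to adapt, for the quantities $\LOAD$ and $\SIZE$, the homogeneous-list pairing used in the proof of Theorem~\ref{theo:approxGuarantee}. The central observation is that invariant \invB lets us pay for the rectangular area $\SIZE(g)$ of a group's containers using the load $\LOAD(g')$ of the left-neighbouring group, since items in $g'=\mathit{left}(g)$ are at least as wide as any container sitting in $g$.

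Concretely, whenever $g$ has a non-empty left neighbour $g'$ in the same category, combining \invA, \invB and \invE gives $\LOAD(g') \geq \wmin(g')\,h(g') \geq \wmax(g)\,(K_{g'}-1)(h_B-1)$, while $\SIZE(g) = \sum_{c\in g} w(c)\,h_B \leq \wmax(g)\,K_g\,h_B$. The factor $\wmax(g)$ cancels and the remaining ratio equals $1/(HK)$ with $H=h_B/(h_B-1)$ and $K=K_g/(K_g-1)$. When neither $g$ nor $g'$ is flexible, the block structure \invC-\invD forces $K_{g'}\geq K_g$, and the standing assumption $\SIZE(I_L) \geq 4\omega h_B(h_B+1)/\epsilon$ makes $K_g$ large enough for Lemma~\ref{lemma:KcanBeBounded} to yield $HK \leq 1+2\epsilon$. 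Hence $\LOAD(g') \geq (1-2\epsilon)\,\SIZE(g)$ for each such regular pair, and summing the inequalities (each $\LOAD(g')$ is reused at most once) gives $\LOAD(\C{\con}) \geq (1-2\epsilon)\sum_{g \text{ paid}} \SIZE(g)$.

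The containers not covered by this pairing live in a constant number of boundary groups per category: the two flexible groups $(l,A,0)$ and $(l,B,q(l,B))$ and the single non-flexible group $(l,A,1)$ whose predecessor is flexible. I would dispatch each such group in one of two regimes. If $K_g \leq \ceil{1/\epsilon}$ then $\SIZE(g) \leq K_g\,2^{-l}\,h_B \leq h_B/\epsilon$, so summing over the $\bigO{\omega}$ boundary groups contributes $\bigO{\omega h_B/\epsilon} = \bigO{\omega/\epsilon^3}$ since $h_B = 13/\epsilon^2$; this is absorbed into the additive slack. If instead $K_g > \ceil{1/\epsilon}$, then $g$ itself meets the hypothesis of Lemma~\ref{lemma:KcanBeBounded}, so the pair $(g',g)$ re-enters the regular regime and either $(l,A,0)$ pays for $(l,A,1)$ (when its count is large enough) or the non-flexible predecessor of $(l,B,q(l,B))$ pays for it.

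The main obstacle is the bookkeeping around flexible groups: every container must be paid for exactly once with no double counting, and the small-$K$ versus large-$K$ case split must interlock cleanly at both block and category boundaries so that the additive slack stays inside $\bigO{\omega/\epsilon^3}$. Once the case split is fixed, the remaining estimates are direct instantiations of Lemma~\ref{lemma:KcanBeBounded}, and the summation yields the claim $\LOAD(\C{\con}) \geq (1-2\epsilon)\SIZE(\C{\con}) - \bigO{\omega/\epsilon^3}$.
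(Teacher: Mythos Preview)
Your pairing is the same as the paper's: the load of a group covers the area of its right neighbour. Where you diverge is in the bookkeeping --- you invoke Lemma~\ref{lemma:KcanBeBounded} to get a clean multiplicative factor $1/(HK)\ge 1-2\epsilon$ on every regular pair, whereas the paper keeps an additive defect $-(h_B+K_g-1)$ per group (using $\wmax(g')\le 1$ to decouple the loss from the group's width) and only afterwards collapses $\card{G_1}(h_B+K_g^*)$ into $\epsilon\,\SIZE(I_L)+\bigO{\omega/\epsilon^3}$.

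The gap is in your boundary case split. For $g=(l,A,0)$ there is no left neighbour at all, so neither branch of your split applies; when $K_{(l,A,0)}$ is large (it can be as big as $2^lk$) one has $\SIZE(g)\le 2^lk\cdot 2^{-l}h_B=kh_B=\Theta\bigl(\epsilon\,\SIZE(I_L)/\omega\bigr)$, and summing over the $\omega$ categories gives $\Theta(\epsilon\,\SIZE(C))$, which is not $\bigO{\omega/\epsilon^3}$. A dual failure hits $(l,A,1)$: here $K_g=2^lk$ is always large, so you land in your ``else'' branch, but the would-be payer $(l,A,0)$ is flexible and may have $K_{g'}$ as small as $1$, making $(K_{g'}-1)/K_g$ tiny --- your parenthetical ``when its count is large enough'' leaves the complementary case uncovered. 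The split on the payee's $K_g$ simply cannot detect a deficient payer.

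The paper sidesteps this by excluding both $g_0(l)$ and $\mathit{right}(g_0(l))$ outright and bounding their combined area by $2\omega h_B k\le \epsilon\,\SIZE(I_L)$ via the argument of Equation~\eqref{eq:LossOfG+}; that second $\epsilon\,\SIZE(I_L)$, together with the per-group additive defect, is precisely what produces the constant $2$ in $(1-2\epsilon)$. If you patch your argument the natural way --- absorb the two leftmost groups of each category into a $\Theta(\epsilon\,\SIZE(C))$ term rather than the additive slack --- you will obtain $(1-c\epsilon)\SIZE(C)-\bigO{\omega/\epsilon^3}$ with $c>2$, which is enough for Corollary~\ref{lemma:heightIaux} but does not prove the lemma as stated.
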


\begin{proof}
	We write $C$ instead of $\C{\con}$ for short.
	For a category $l \in W$, the group $g_0(l)$ is defined as $(l,A,0)$ if
	block $A$ is non-empty, and otherwise $(l,B,0)$.
	Let $G_1 = G \setminus \bigcup_{l \in W} g_0(l)$.
	Further, let $g \in G_1$ and $g'=\mathit{right}(g)$. 
	With \invC - \invD we have $K_g \geq K_{g'}$
	(set $K_{g'}=0$, if $g'$ is not defined).
	
	First, we consider the load of containers of the group $g$.
	Let $C(g) = \{c \in C \mid R(c)=g \}$ be the set of such containers.
	By \invB,  each container of group $g$ contains items of 
	width at least $\wmax(g')$	(see Figure\nobreakspace \ref {fig:fillingGaps}) and height $h(g)$.
	We have:
	
	\begin{equation}
	\label{eq:loadCg}
	\begin{aligned}[b]
	\LOAD(C(g)) &\geq h(g) \wmax(g') \\
	&\geq (h_B - 1) (K_g - 1)  \wmax(g') 						& \text{\invE} \\
	&=    (h_B K_g - h_B - K_g + 1) \wmax(g') \\
	&=     h_B K_g \wmax(g') - (h_B + K_g - 1) \wmax(g') \\
	&\geq     h_B K_{g'} \wmax(g') - (h_B + K_g - 1) \wmax(g') 	& K_g \geq K_{g'} \\					
	&\geq  \SIZE(C(g')) - (h_B + K_g - 1)  				& \wmax(g') \leq 1 \\	
	\end{aligned}
	\end{equation}
	Now consider the set of groups
	$G_2 = G_1 \setminus \bigcup_{l \in W} \mathit{right}(g_0(l))$
	(\ie we drop also the $(l,X,1)$-groups from $G$).
	Let $g' = \mathit{right}(g)$ for all $g \in G_1$. Then,
	$\{ g' \mid g \in G_1 \} = \{ \mathit{right}(g) \mid g \in G_1 \}
	= G_2$.
	Furthermore, let $C_1$, $C_2$ be the sets of container rectangles 
	of groups $G_1$, $G_2$.
	Summing over each group in $G_1$ gives the total load of containers in $C_1$:

	\begin{align*}
	\LOAD(C_1)  &=    \sum_{g \in G_1} \LOAD(C(g))  \\
	&\geq \sum_{g \in G_1} \SIZE(C(g')) - (h_B + K_g - 1) 
	& \text{eq.\nobreakspace \textup {(\ref {eq:loadCg})}} \\
	&=    \sum_{g \in G_2} \SIZE(C(g)) - \sum_{g \in G_1} h_B + K_g - 1 \\
	&\geq    \SIZE(C_2) - \card{G_1} (h_B + K_g^* - 1) \,,
	\end{align*}
	where $K_g^* = \max_{g \in G_1} K_g$.
	As the next step, we show 
	$
	\card{G_1} (h_B + K_g^* - 1)
	\leq \epsilon \SIZE(I_L) + 	\bigO{\omega / \epsilon^3} $, implying
	\begin{equation}
	\label{eq:LoadC1}
	\LOAD(C_1) \geq \SIZE(C_2) - \epsilon \SIZE(I_L) - 
	\bigO{\omega / \epsilon^3} \,.
	\end{equation}
	With \invC-\invD we get
	$K_g^* \leq 2^l k \leq \frac{1}{\epsilon} k 
	= \frac{1}{\epsilon} \floor{ \frac{\epsilon}{4 \omega h_B} \SIZE(I_L) }
	\leq \frac{1}{4 \omega h_B} \SIZE(I_L) \,.$
	Since $G_1$ contains less than 
	$\card{G} \leq \omega \left(2+ \frac{16}{\epsilon}\right)$ groups 
	(by Lemma\nobreakspace \ref {lemma:NumberOfGroups}), it follows that
	$$
	\card{G_1} K_g^* \
	\leq \omega \left(2+ \frac{16}{\epsilon}\right) \frac{1}{4 \omega h_B}  \SIZE(I_L)
	= \frac{2 \epsilon + 16}{4 \epsilon h_B} \SIZE(I_L) 
	\leq \epsilon \SIZE(I_L) \,.
	$$
	Further, the term 
	$\card{G_1} (h_B - 1)$ is bounded by $	\bigO{\omega / \epsilon^3}$, as
	$h_B \in \bigO{1 / \epsilon^2}$.
	Thus, Equation\nobreakspace \textup {(\ref {eq:LoadC1})} is shown.
	Like in the proof of Theorem\nobreakspace \ref {theo:approxGuarantee}
	(Equation\nobreakspace \textup {(\ref {eq:LossOfG+})}),	
	all containers of $(l,X,0)$-groups can be placed in $k$
	levels of height $h_B$. Clearly, the same holds for $(l,X,1)$-groups and thus
	$\SIZE(C_2)  \geq \SIZE(C) - 2 \omega h_B k \geq \SIZE(C) - \epsilon \SIZE(I_L)$. 
	Using $\LOAD(C) \geq \LOAD(C_1)$, 
	Equation\nobreakspace \textup {(\ref {eq:LoadC1})},
	and the last inequality with $\SIZE(I_L) \leq \SIZE(C)$
	completes the proof.
\end{proof}
The next lemma shows a similar result for D-containers.

\begin{lemma}
	\label{lemma:loadDcontainers}
	Let $D$ be the set of all full D-containers in a packing of height $h'$
	(see Figure\nobreakspace \ref {fig:fillingGapsTopLevel}).
	Assuming that in each D-container the loss of width is at most 
	$2 \epsilon$, it holds that
	$\LOAD(D) \geq (1-2 \epsilon) \SIZE(D) - 2 \epsilon h' - \lambda$.
\end{lemma}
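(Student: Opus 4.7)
The plan is to decompose the load of each full D-container $d \in D$ by classifying its shelves as dense or sparse in the sense of Section~\ref{sec:onlyNarrowItems}, and to combine a per-container bound for dense shelves with a global bound for sparse shelves.

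Fix $d \in D$ of width $w(d)$; by fullness the total height of its shelves strictly exceeds $h_B - 1$. For a dense shelf of group $r$ in $d$, the items inside have total width at least $w(d) - 2\epsilon$ (combining the dense condition with the hypothesis that the width loss inside $d$ is at most $2\epsilon$), while each item has height at least $(1-\epsilon)$ times the shelf height $(1-\epsilon)^{r-1}$. Writing $H_{\mathrm{dense}}(d)$ and $H_{\mathrm{sparse}}(d)$ for the total heights of dense and sparse shelves in $d$, summing over dense shelves yields the per-container inequality
\begin{equation*}
\LOAD(d) \geq (1-\epsilon)(w(d) - 2\epsilon)\, H_{\mathrm{dense}}(d) \,.
\end{equation*}

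Next, I would invoke the bound from the proof of Lemma~\ref{lemma:onlyNarrowItems} that the total height of sparse shelves across the whole packing is at most $\lambda$, so $\sum_{d \in D} H_{\mathrm{sparse}}(d) \leq \lambda$. Since each $d$ is full, $H_{\mathrm{dense}}(d) > h_B - 1 - H_{\mathrm{sparse}}(d)$, and using $(1-\epsilon)(w(d) - 2\epsilon) \leq 1$ to absorb the sparse-height term, summing the per-container bound over $d \in D$ gives
\begin{equation*}
\LOAD(D) \geq (1-\epsilon)(h_B - 1)\Bigl(\sum_{d \in D} w(d) - 2\epsilon\,|D|\Bigr) - \lambda \,.
\end{equation*}

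To finish, I would substitute $\sum_{d \in D} w(d) = \SIZE(D)/h_B$ and $|D| \leq h'/h_B$ (each D-container occupies its own level), and then use $h_B = 13/\epsilon^2$ to check $(1-\epsilon)(h_B - 1)/h_B \geq 1 - 2\epsilon$, which yields $(1-\epsilon)(h_B-1)/h_B \cdot \SIZE(D) \geq (1-2\epsilon)\SIZE(D)$ and bounds the cross-term by $2\epsilon h'$, producing the claimed inequality. The main obstacle I anticipate is a careful treatment of the ``loss of width at most $2\epsilon$'' hypothesis, so that items inside a dense shelf really do sum to width at least $w(d) - 2\epsilon$ relative to the enclosing D-container; and justifying that the global $\lambda$-bound on sparse shelves inherited from Section~\ref{sec:onlyNarrowItems} survives the dynamic Algorithm~\ref{alg:insertNarrowItem}, where shelves of varying widths sit in D-containers spread across many aligned levels.
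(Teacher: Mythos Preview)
Your proposal is correct and follows essentially the same route as the paper: a per-container lower bound of the form $(1-\epsilon)(w(d)-2\epsilon)(h_B-1)$, a global subtraction of $\lambda$ for sparse shelves, and then the substitutions $\sum_{d} w(d)=\SIZE(D)/h_B$ and $|D|\leq h'/h_B$ together with $1/h_B\leq\epsilon$. Your explicit dense/sparse split is slightly more careful than the paper's presentation (which bounds $\LOAD(d)$ as if all shelves were dense and then subtracts $\lambda$ afterwards), but the two arguments are algebraically equivalent, and the concerns you flag at the end are exactly the informal points the paper also glosses over.
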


\begin{proof}
	We first consider a single full D-container $d \in D$
	like shown in Figure\nobreakspace \ref {fig:internalDContainer}.
	The loss in height is at most 1 and the loss of width
	by assumption at most $2 \epsilon$.
	Further, multiplying the height with factor $1-\epsilon$ regards 
	height differences in the groups, thus:
	\begin{align*}
	\LOAD(d) 
	&\geq (1-\epsilon) (w(d)- 2 \epsilon) (h_B - 1) \\
	&=    (1-\epsilon) \left( w(d) h_B - w(d) - 2 \epsilon (h_B - 1) \right) \\
	&\geq (1-\epsilon) w(d) h_B - w(d) - 2 \epsilon(h_B - 1) \,.
	\end{align*}
	Sparse shelfs can waste a total area of at most $\lambda$, therefore
	\begin{align*}
	\LOAD(D)
	&=    \left( \sum_{d \in D} \LOAD(d) \right) - \lambda \\
	&\geq \left( \sum_{d \in D} (1-\epsilon) w(d) h_B - w(d) 
	- 2 \epsilon(h_B - 1) \right) - \lambda \\
	&=    (1-\epsilon) \SIZE(D) - \left(\sum_{d \in D} w(d) \right)
	- \card{D} 2 \epsilon  (h_B - 1) - \lambda \\
	\intertext{and with $\sum_{d \in D} w(d) = \SIZE(D) \frac{1}{h_B}$ 
		it holds further }
	&=    \left(1-\epsilon - \frac{1}{h_B} \right) \SIZE(D) 
	- 2 \card{D} \epsilon  (h_B - 1) - \lambda \\
	&\geq (1-2 \epsilon) \SIZE(D) - 2 \epsilon h' - \lambda \,,
	\end{align*}
	where the last inequality follows from $1 / {h_B} \leq \epsilon$
	and $\card{D} \leq h' / {h_B}$.
\end{proof}

The next corollary uses Lemmas\nobreakspace \ref {lemma:loadCcontainers} and\nobreakspace  \ref {lemma:loadDcontainers}
to bound the total load of a container packing filled with D-containers.

\begin{corollary}
	\label{lemma:heightIaux}	
	Let $C$ be a container instance and let a packing of $C$, filled
	with D-containers, of height $h'$ be given.
	Let $I_{aux}$ be the set of all big, flat, and narrow items
	in the packing.
	If all levels are well-filled and all D-containers are full, then
	$\SIZE(I_{aux}) \geq (1- 4 \epsilon)h' -\bigO{\omega / \epsilon^3}$.
\end{corollary}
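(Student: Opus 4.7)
The plan is to decompose $I_{aux}$ into the items inside C-containers (big and flat items) and those inside D-containers (narrow items), apply Lemmas \ref{lemma:loadCcontainers} and \ref{lemma:loadDcontainers} respectively, and then use the well-filled condition to bound the total container volume from below by a $(1-\bigO{\epsilon})$ fraction of $h'$.

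First, I would note that every item of $I_{aux}$ sits inside some container: big and flat items live inside a C-container by the construction of Section~\ref{sec:containerPacking}, while narrow items live inside a shelf of a D-container by Algorithm~\ref{alg:insertNarrowItem}. Denoting the sets of C- and D-containers by $\mathcal{C}$ and $D$, this gives the identity $\SIZE(I_{aux}) = \LOAD(\mathcal{C}) + \LOAD(D)$. Lemma~\ref{lemma:loadCcontainers} directly yields $\LOAD(\mathcal{C}) \geq (1-2\epsilon)\SIZE(\mathcal{C}) - \bigO{\omega/\epsilon^3}$.

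Second, before invoking Lemma~\ref{lemma:loadDcontainers} I would verify its width-loss hypothesis: inside a full D-container $d$, each dense shelf carries items of total width exceeding $w(d)-\epsilon$, and since narrow items have width strictly below $\epsilon$ the shelf loss per dense shelf is bounded by $\epsilon$; the at-most-one sparse shelf per group is accounted for separately by the term $\lambda$. This justifies the assumption, so Lemma~\ref{lemma:loadDcontainers} gives $\LOAD(D) \geq (1-2\epsilon)\SIZE(D) - 2\epsilon h' - \lambda$. Summing the two load bounds produces
\[
\SIZE(I_{aux}) \geq (1-2\epsilon)\bigl(\SIZE(\mathcal{C}) + \SIZE(D)\bigr) - 2\epsilon h' - \lambda - \bigO{\omega/\epsilon^3}.
\]

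Third, I would use the well-filled hypothesis: in every level the total width of C-containers together with the D-container (if present) is at least $1-2\epsilon$, so summing over the $h'/h_B$ levels yields $\SIZE(\mathcal{C}) + \SIZE(D) \geq (1-2\epsilon) h'$. Substituting and noting that $\lambda = \bigO{1/\epsilon}$ is absorbed into the additive term, the bound collapses to $\SIZE(I_{aux}) \geq (1-\bigO{\epsilon}) h' - \bigO{\omega/\epsilon^3}$. Reshuffling the constants (or, equivalently, passing to a slightly rescaled $\epsilon$) yields the stated form $(1-4\epsilon) h' - \bigO{\omega/\epsilon^3}$.

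The main obstacle is pure bookkeeping rather than anything conceptual. One has to carefully combine the multiplicative $(1-2\epsilon)$ factor inherited from the two preceding lemmas with the additive $2\epsilon h'$ term coming from the $\card{D}(h_B-1)$ estimate in Lemma~\ref{lemma:loadDcontainers}, and confirm that the shelf construction of Algorithm~\ref{alg:insertNarrowItem} really gives the $2\epsilon$ width-loss needed for Lemma~\ref{lemma:loadDcontainers}. Once these two checks are made, the stated estimate follows directly by substitution.
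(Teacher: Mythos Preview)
Your approach is essentially the same as the paper's: decompose $\SIZE(I_{aux})=\LOAD(C)+\LOAD(D)$, apply Lemma~\ref{lemma:loadCcontainers} and Lemma~\ref{lemma:loadDcontainers}, and use the well-filled condition. The only difference is in how the well-filled gap is accounted for, and it costs you the exact constant.

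You bound $\SIZE(\mathcal{C})+\SIZE(D)\ge(1-2\epsilon)h'$ and then multiply by $(1-2\epsilon)$ and subtract $2\epsilon h'$, arriving at $(1-2\epsilon)^2h'-2\epsilon h'=(1-6\epsilon+4\epsilon^2)h'$, which forces your final ``rescale $\epsilon$'' step. The paper avoids this double-counting: it conceptually extends every D-container to fill the \emph{entire} remaining width of its (aligned) level, so that $\SIZE(D)=h'-\SIZE(C)$ exactly. The up-to-$2\epsilon$ gap in a well-filled level is then precisely the ``loss of width at most $2\epsilon$'' hypothesis of Lemma~\ref{lemma:loadDcontainers} (this is what that hypothesis is for, not the dense-shelf loss, which is already only $\epsilon$). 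With $\SIZE(C)+\SIZE(D)=h'$ the computation collapses cleanly to $(1-2\epsilon)h'-2\epsilon h'=(1-4\epsilon)h'$, matching the statement verbatim without any rescaling.
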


\begin{proof}
	According to Lemma\nobreakspace \ref {lemma:loadCcontainers}, C-containers are filled with
	$\LOAD(C) \geq (1 - 2 \epsilon) \SIZE(C) - z$
	for $z = \bigO{\omega / \epsilon^3}$.
	For the sake of simplicity, we can assume that each level is aligned
	and that the remainder is filled by a D-container:
	The load is the same as if the level would be aligned and filled by an 
	empty D-container.
	An unaligned but well-filled level has gaps of total width
	at most $2 \epsilon$. Hence, the D-container would have a loss of
	width of at most $2 \epsilon$ and we can apply	
	Lemma\nobreakspace \ref {lemma:loadDcontainers}.
	
	As $h'$ is the packing height and also the total area of the packing
	(in a strip of width 1), $\SIZE(D) = h' - \SIZE(C)$ and it holds further
	\begin{align*}
	\SIZE(I_{aux}) 
	&=     \LOAD(C) + \LOAD(D) \\
	&\geq (1-2\epsilon) \SIZE(C) - z + (1-2\epsilon) \SIZE(D) 
	- 2\epsilon h' - \lambda \\
	&=    (1-2\epsilon) \SIZE(C) + (1-2\epsilon) (h' - \SIZE(C))
	- 2\epsilon h' - z - \lambda \\	
	&=    (1-4\epsilon) h'
	- z - \lambda \,.
	\end{align*}
	Since both terms $z$ and $\lambda$ are dominated by 
	$\bigO{\omega / \epsilon^3}$, the claim is proven.
\end{proof}

Recall that Algorithm\nobreakspace \ref {alg:insertNarrowItem} only increases the packing height if all levels
are well-filled and all D-containers are full, thus we obtain the following result. 
The main argument and the notation of Lemma\nobreakspace \ref {lemma:finalPackingHeight} 
is similar to \cite{kenyon2000near}.

\begin{lemma}
	\label{lemma:finalPackingHeight}
	Let $h'$ be the height of the container packing. Algorithm\nobreakspace \ref {alg:insertNarrowItem}
	returns a packing of height $h_\mathit{final}$, such that
	$ h_\mathit{final} \leq \max \left\{ h', 
	(1+\epsilon') \SIZE(I)+ \bigO{\omega / \epsilon^3} \right\} $,
	where $\epsilon' \in \bigO{\epsilon}$.
\end{lemma}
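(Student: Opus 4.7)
The plan is to split on whether Algorithm~\ref{alg:insertNarrowItem} increases the packing height. If $h_\mathit{final} \leq h'$, the first term of the $\max$ already gives the bound, so assume $h_\mathit{final} > h'$, which means the algorithm reaches Line~\ref{line:packNarrowItemsOnTop} and opens at least one new shelf on top of the existing packing. Let $S_\mathit{top}$ be the set of items placed in these shelves and let $I_\mathit{aux} = I \setminus S_\mathit{top}$ be the items sitting in the first $h'$ units of height (big and flat items inside C-containers, together with narrow items inside D-containers; recall from Lemma~\ref{lemma:characteristicsOfInsNarrowItem} that the N-buffer is emptied before anything goes on top, so its content is redistributed into $I_\mathit{aux}$ or into $S_\mathit{top}$).

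First, I would apply Lemma~\ref{lemma:characteristicsOfInsNarrowItem}: since the algorithm opens a shelf on top, all levels of the container packing are well-filled and every D-container is full. Corollary~\ref{lemma:heightIaux} is then applicable and gives
\[
\SIZE(I_\mathit{aux}) \;\geq\; (1-4\epsilon)\, h' \;-\; \bigO{\omega/\epsilon^3},
\]
which, for $\epsilon \le 1/4$, rearranges to $h' \le (1+\bigO{\epsilon})\,\SIZE(I_\mathit{aux}) + \bigO{\omega/\epsilon^3}$.

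Second, I would bound $h_s := h_\mathit{final}-h'$ by re-running the shelf analysis of Lemma~\ref{lemma:onlyNarrowItems} on the region above height $h'$. The shelves there have full width~$1$, are grouped by height class, and were opened by \textsc{Shelf-First-Fit}, so within each height group at most one of them is sparse (otherwise the item that opened the later shelf would have fitted into the earlier, still-sparse one, contradicting the first-fit rule). The geometric-series computation from the proof of Lemma~\ref{lemma:onlyNarrowItems} then yields $h_s \leq (1+\epsilon)\,\SIZE(S_\mathit{top}) + \bigO{1/\epsilon^4}$. Adding the two bounds and using $\SIZE(I_\mathit{aux}) + \SIZE(S_\mathit{top}) = \SIZE(I)$ gives the claim with $\epsilon' \in \bigO{\epsilon}$ and additive term in $\bigO{\omega/\epsilon^3}$.

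The main obstacle I anticipate is the ``at most one sparse shelf per group on top'' claim. The reinsertion step in Line~\ref{line:reinsertNarrowItems} processes a large batch of items (former D-container content, N-buffer content, and the new item) interleaved through \textsc{Shelf-First-Fit}, and only those items that \emph{still} fail to fit end up on top. I would argue this step by step on the order in which top shelves are opened: when a shelf of some group $r$ is opened on top, every previously existing shelf of group $r$ must be too full to accept the triggering item, hence dense; consequently at any moment at most the most recently opened shelf of each group can be sparse, preserving the invariant that drives the geometric-series bound.
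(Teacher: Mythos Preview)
Your approach is the same as the paper's: split on whether the packing height increases, invoke Corollary~\ref{lemma:heightIaux} for the container region, apply the shelf analysis to the width-$1$ shelves on top, and add the two bounds using the partition $I = I_\mathit{aux} \,\dot\cup\, S_\mathit{top}$. Your discussion of the ``one sparse shelf per group'' invariant is in fact more careful than the paper, which simply cites the shelf analysis without re-justifying it.

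One quantitative slip to fix: Section~\ref{sec:narrowItemsCombination} sets the shelf parameter to $\alpha=\epsilon$ (not the $\alpha=\epsilon^2/(1-\epsilon^2)$ used in Lemma~\ref{lemma:onlyNarrowItems}), so the geometric-series additive term for the top shelves is $\lambda = 1/(\epsilon-\epsilon^2) = \bigO{1/\epsilon}$, not $\bigO{1/\epsilon^4}$. This matters because $1/\epsilon^4$ is \emph{not} $\bigO{\omega/\epsilon^3}$, so with your intermediate bound the final additive term would not match the stated lemma. Redo the shelf computation with $\alpha=\epsilon$ and the claimed $\bigO{\omega/\epsilon^3}$ follows. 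Also note that the N-buffer occupies a fixed segment of height $h_B$ in the strip even when emptied; the paper absorbs this $h_B$ into the additive term of the shelf bound, and you should account for it explicitly as well.
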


\begin{proof}
	If Algorithm\nobreakspace \ref {alg:insertNarrowItem} does not reach Line\nobreakspace \ref {line:packNarrowItemsOnTop},
	$h_{final} = h'$ and thus the claim holds trivially.
	Now suppose that Algorithm\nobreakspace \ref {alg:insertNarrowItem} opens at least one new shelf in 
	Line\nobreakspace \ref {line:packNarrowItemsOnTop} placed on top of the packing.
	By Lemma~\ref{lemma:characteristicsOfInsNarrowItem}, all levels are well-filled and all D-containers are full,
	and thus we can apply \MakeUppercase Corollary\nobreakspace \ref {lemma:heightIaux}.
	Each horizontal segment in the packing is of one of two types:
	Either it is a container-level of height $h_B$ 
	(containing C- and D-containers) 
	or a shelf of width 1 (containing narrow items).
	Let $h'$ be the total height of container-levels and $h''$ the total
	height of shelfs of width 1. Hence, $h_{final} = h' + h''$.
	We partition the set of items $I$ into $I_{aux}$ like in  
	\MakeUppercase Corollary\nobreakspace \ref {lemma:heightIaux}	and ${I_N}'$, the set of narrow items placed in shelfs 
	of width 1.
	Let $\epsilon' = \frac{1}{1 - 4 \epsilon} - 1 \in \bigO{\epsilon}$
	and $z = (1+\epsilon') \bigO{\omega / \epsilon^3}$.
	Then, the conclusion of \MakeUppercase Corollary\nobreakspace \ref {lemma:heightIaux} is equivalent to:
	\begin{equation}
	\label{eq:heightContainerPacking}
	h' \leq  (1+\epsilon') \SIZE(I_{aux}) + z
	\end{equation}
	Further, with Lemma\nobreakspace \ref {lemma:onlyNarrowItems} we can bound the packing height
	for the shelfs of width 1.
	Remember that the N-buffer causes additional height of $h_B$. 
	If we see it as part of the shelf packing, we get from the proof of
	Lemma\nobreakspace \ref {lemma:onlyNarrowItems} with $\alpha=\epsilon$:
	\begin{equation}
	\label{eq:heightShelfPacking}
	h'' \leq \frac{1}{(1-\epsilon)^2} \SIZE({I_N}') + \lambda + h_B
	\end{equation}
	The final packing height $h_{final}$ can be bounded as follows:
	\begin{align*}
	h_{\mathit{final}} &= h' + h'' \\
	&\leq (1+\epsilon') \SIZE(I_{aux}) + z 
	+ \frac{1}{(1-\epsilon)^2} \SIZE({I_N}') + \lambda + h_B 
	& \text{eq.\nobreakspace \textup {(\ref {eq:heightContainerPacking})},\textup {(\ref {eq:heightShelfPacking})}} \\
	&\leq    (1+\epsilon') \SIZE(I_{aux}) + (1+\epsilon') \SIZE({I_N}') 
	+ \lambda + h_B + z \\
	&= (1+\epsilon') \SIZE(I) + \lambda + h_B + z
	& I = I_{aux} \mathbin{\dot{\cup}} {I_N}'
	\end{align*}
	Again, the additive terms $\lambda$, $h_B$, and $z$
	are bounded by $\bigO{\omega / \epsilon^3}$.
\end{proof}

\subsection{Changes in the Container Packing}
So far we assumed a static layout of D-containers. But as the width of C-containers may change
due to inserted or removed items, D-containers also change in width.
Moreover, when the pattern (see LP \ref {lp:binpacking}) of a level changes,
D-containers have to be redefined. In both cases, narrow items may have to be 
reinserted.
This two scenarios are considered in the following.

\subsubsection{Stretch Operation and Narrow Items}
\label{sec:stretchNarrowItems}

The \textsc{Stretch} operation (introduced in Section\nobreakspace \ref {sec:auxFunctions}) 
changes the position of containers in one level such that no item overlaps with 
a C-container. When the stretched level includes a D-container, this may 
overfill a level  (see the example in Figure\nobreakspace \ref {fig:stretchNarrowItems}). 
In the worst case, all items from a maximal filled D-container have to repacked,
\ie items of total size $(1-\epsilon) h_B$. 
Lemma~\ref{lemma:narrowItemsBoundRepacking} bounds the repacking size with $q h_B$ in this case.
\begin{figure}
	\centering
	\begin{subfigure}[t]{0.45\textwidth}
		\centering
		\includegraphics[width=0.7\textwidth, page=1]{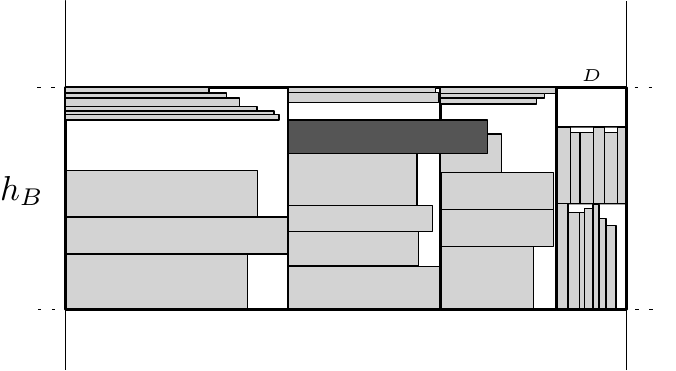}
		\caption{Before \textsc{Stretch}}
	\end{subfigure}
	~
	\begin{subfigure}[t]{0.45\textwidth}
		\centering
		\includegraphics[width=0.7\textwidth, page=2]{stretch_narrow_items.pdf}
		\caption{After \textsc{Stretch}}
	\end{subfigure}	
	\caption{Overlapping items (dark colored) after \textsc{Stretch} operation}
	\label{fig:stretchNarrowItems}
\end{figure}

\subsubsection{Repacking of Levels}

Recall that Algorithm\nobreakspace \ref {alg:insertContainer} makes use of the operation
\textsc{Improve}, which changes some components in the
LP/ILP-solutions that represent the level configuration.
In the packing several levels are repacked, which requires
that the D-containers have to be rebuilt.
The new set of D-containers may have a lower load since the
widths change and thus sparse shelfs can occur.
The next lemma bounds the total area of narrow items that may have to be
packed outside the new D-containers.

\begin{lemma}
	\label{lemma:repackingNarrowItemsImprove}
	Assume that the packing of C-containers changes in $p$ levels. 
	Let $D$ be the set of D-containers in this levels before repacking
	and $D'$ the set of new defined D-containers in the repacked and aligned
	levels. Then,
	$ \LOAD(D) - \LOAD(D') \leq \bigO{p / \epsilon} +~\lambda$.
\end{lemma}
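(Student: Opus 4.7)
The plan is a shelf-first-fit analysis: take the narrow items that were previously housed in $D$ and re-insert them into the new D-containers of $D'$ group by group, then bound the two sources of load gap, namely internal shelf inefficiency inside each $d'\in D'$ and overflow of items that do not fit anywhere in $D'$.

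For the internal waste inside a single $d'\in D'$ of width $w(d')$, the analysis mirrors the proofs of Lemmas \ref{lemma:onlyNarrowItems} and \ref{lemma:loadDcontainers}. For every group $r$ at most one shelf in $d'$ is sparse, so the total height of sparse shelves in $d'$ is bounded by $\sum_{r\geq 0}(1-\epsilon)^{r-1}=\lambda$ and their area by $w(d')\lambda\leq\lambda$; each dense shelf of group $r$ loses at most $\epsilon(1-\epsilon)^{r-1}$ of area along its right edge, and since the shelves in $d'$ have combined height at most $h_B$ the width-slack inside $d'$ sums to at most $\epsilon h_B = \bigO{1/\epsilon}$. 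Summing over the at most $p$ new D-containers yields an internal waste of $\bigO{p/\epsilon}+p\lambda=\bigO{p/\epsilon}$.

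For the overflow I would use area conservation in the $p$ affected levels: after \textsc{Align} the C- and D-container widths in each level sum to exactly $1$, whereas before alignment the alignment gaps contribute at most $2\epsilon$ per level. This gives $\SIZE(D')\geq\SIZE(D)-\bigO{p/\epsilon}$, so shelf-first-fit can absorb every item from $D$ except possibly a single unfinished sparse shelf per group, whose area is charged to the additive $+\lambda$ term in the claim.

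The hard part will be the overflow bound, since \textsc{Improve} may shift C-containers in or out of the $p$ affected levels and thereby shrink the total width $\sum_{d'\in D'}w(d')$ relative to $\sum_{d\in D}w(d)$. One has to argue either by tracking how C-containers move among levels during \textsc{Improve} or by a direct charging of the missing narrow-item area to the C-container rearrangement, so that the \emph{net} load gap between $D$ and $D'$ stays inside $\bigO{p/\epsilon}+\lambda$ rather than growing with the aggregate width mismatch.
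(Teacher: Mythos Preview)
Your outline is essentially the paper's proof. The paper upper-bounds $\LOAD(D)\le h_B\,W(D)$ trivially, lower-bounds $\LOAD(D')\ge(1-\epsilon)(h_B-1)\bigl(W(D)-p\epsilon\bigr)-\lambda$ by the same shelf analysis as in Lemma~\ref{lemma:narrowItemsBoundRepacking}, subtracts, and uses $W(D)<p$ (each D-container has width below~$1$) to conclude
\[
\LOAD(D)-\LOAD(D')\;\le\;p\,(1+\epsilon h_B+\epsilon)+\lambda\;=\;\bigO{p/\epsilon}+\lambda .
\]

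The step you flag as hard is dispatched by the paper in a single sentence: it asserts that ``the total area of the new containers in $D'$ does not decrease in comparison to $D$'', i.e.\ $W(D')\ge W(D)$, and then writes the lower bound on $\LOAD(D')$ with $W(D)$ in place of $W(D')$. The justification---which directly answers your worry about C-containers being ``shifted in or out of the $p$ affected levels''---is that a C-container can only move between two \emph{changed} levels: both the level it leaves and the level it enters are by definition among the $p$ repacked levels. Hence the multiset of C-containers occupying those $p$ levels is identical before and after the repacking, and the complementary D-width is conserved. Your tentative $2\epsilon$-per-level alignment bound is therefore unnecessary, and no separate charging argument is needed.

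One minor tightening: the $p\lambda$ you incur from counting one sparse shelf per group \emph{per container} can be replaced by a single $\lambda$, since \textsc{Shelf-First-Fit} is run globally across all D-containers rather than independently inside each one; this is why the paper's statement carries $+\lambda$ rather than $+p\lambda$ (though both are absorbed in $\bigO{p/\epsilon}$ anyway).
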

\begin{proof}
	The proof is similar to the one of Lemma\nobreakspace \ref {lemma:narrowItemsBoundRepacking}.
	Let $W(D)$ be the total width of the containers in $D$.
	Clearly, $\LOAD(D) \leq h_B W(D)$.
	The total area of the new containers in $D'$ does not decrease in comparison
	to $D$, but the new defined shelfs can have a smaller load.
	An analog argument like in the proof of 
	Lemma\nobreakspace \ref {lemma:narrowItemsBoundRepacking} gives
	$
	\LOAD(D') \geq (1-\epsilon) (h_B - 1) (W(D) - p \epsilon) - \lambda
	$.
	Since each D-container has width at most $1-\epsilon < 1$, it holds that
	$W(D) < p$. Thus:
	\begin{align*}
	&\LOAD(D) - \LOAD(D') \\
	&\leq h_B W(D) - (1-\epsilon) (h_B - 1) (W(D)- p \epsilon) + \lambda \\
	&=    h_B W(D) - (1-\epsilon) (h_B - 1) W(D) 
	+ (1-\epsilon) (h_B - 1) p \epsilon + \lambda \\
	&=    W(D) (h_B - (h_B - 1 - \epsilon h_B + \epsilon))
	+ (1-\epsilon) (h_B - 1) p \epsilon + \lambda \\
	&<    W(D) (1 + \epsilon h_B) + p \epsilon + \lambda \\
	&=    p (1 + \epsilon h_B + \epsilon) + \lambda 
	\end{align*}
	The claim follows by $1 + \epsilon h_B + \epsilon = \bigO{\epsilon h_B} =
	\bigO{1 / \epsilon}$.
\end{proof}

\subsection{Overall Algorithm}
So far we presented online algorithms
for the case where only narrow items are present (Section\nobreakspace \ref {sec:onlyNarrowItems})
and an algorithm that fills gaps in the container packing with narrow items
(Algorithm\nobreakspace \ref {alg:insertNarrowItem}). Recall that the container packing approach
requires a minimum size of~$I_L$.

Therefore, it remains to show how to handle the general setting, where
narrow items arrive, while $\SIZE(I_L(t))$ is possibly too small for the
container packing. For this purpose, the overall Algorithm\nobreakspace \ref {alg:insertGeneralSetting} 
works in one of two modes:
In the \textit{semi-online mode}, small instances of $I_L$ are packed offline 
and arbitrary large instances of $I_N$ online.
In contrast, in the \textit{online mode}, 
the techniques of Sections\nobreakspace  \ref {sec:containerPacking} to\nobreakspace  \ref {sec:narrowItems} 
are applied.

\begin{figure}
	\centering
	\includegraphics[scale=0.8]{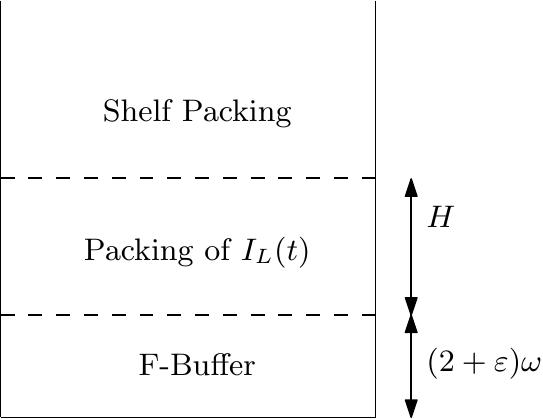}
	\caption{Top-level sketch of the packing in the semi-online mode}
	\label{fig:semionlinePacking}	
\end{figure}

\paragraph*{Semi-online mode}
The top-level structure of the packing in the semi-online mode is
shown in Figure\nobreakspace \ref {fig:semionlinePacking}. It is divided into three parts:

\begin{itemize}
	\item 
	To avoid that flat items cause a huge amount of repacking, 
	we use an F-buffer for flat items like introduced in Section\nobreakspace \ref {sec:insertFlat},
	but with slot height $y = 2 + \epsilon$.
	
	\item
	When the F-buffer overflows or a new big item arrives, 
	a new packing of
	$I_L(t)$, including all items from the F-buffer, is obtained by
	the algorithm of Kenyon and R{\'e}mila (K\&R) \cite{kenyon2000near}.
	In the strip, we reserve a segment of fixed height $H$ for the changing packings of $I_L(t)$,
	where $H$ is the packing height of a maximum instance of $I_L(t)$.
	
	\item 
	Narrow items are packed separately on top of the packing via the shelf algorithm 
	described in Section\nobreakspace \ref {sec:onlyNarrowItems}.
\end{itemize}

\paragraph*{Online mode}
As soon as $\SIZE(I_L(t)) \geq \frac{4 \omega h_B}{\epsilon} (h_B + 1)$,
we go over to use Algorithms\nobreakspace \ref {alg:insertBigItem},  \ref {alg:insertFlat}, and\nobreakspace  \ref {alg:insertNarrowItem}, 
depending on the item type.

\begin{algorithm}
	\caption{Insertion of an item in the general setting}
	\label{alg:insertGeneralSetting}
	
	\SetAlgoLined
	\DontPrintSemicolon
	\SetKwInOut{Input}{Input}
	\SetKw{If}{if}
	\SetKw{IF}{If}	
	\SetKw{Or}{or}
	\SetKwFunction{Shift}{Shift}
	\SetKwFunction{Stretch}{Stretch}	
	\SetAlgoNoEnd
	
	\Input{Item $i_t$}
	\BlankLine
	
	\uIf(\tcp*[f]{Semi-online mode})
	{$\SIZE(I_L(t)) < \frac{4 \omega h_B}{\epsilon} (h_B + 1)$} {
		\uIf{$i_t$ is flat}{
			Place $i_t$ in the F-buffer. \;
		}
		\uIf{$i_t$ is big \Or F-buffer overflows} { 
			Pack $I_L(t)$ (incl. items in the F-buffer and $i_t$) 
			with K\&R \; 
			Replace packing of $I_L(t-1)$ with the new packing
		}
		
		\uIf{$i_t$ is narrow}{
			Place $i_t$ via shelf packing
		}
	}
	\uElse(\tcp*[f]{Online mode}){
		Use Algorithms \ref {alg:insertBigItem}, \ref {alg:insertFlat} 
		or \ref {alg:insertNarrowItem} 
		depending on the type of $i_t$.
	}
\end{algorithm}

The following theorem states that the overall 
Algorithm\nobreakspace \ref {alg:insertGeneralSetting} returns a packing of
the desired packing height.

\begin{theorem}
	\label{lemma:packingHeightOverallAlgo}
	Let $h$ be the height of the packing returned by 
	Algorithm\nobreakspace \ref {alg:insertGeneralSetting}.
	It holds that
	$h \leq (1+\bigO{\epsilon}) \OPT(I) \allowbreak + \mathrm{poly}(1 / \epsilon) $.
\end{theorem}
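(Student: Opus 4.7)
The plan is to case-split according to the mode in which Algorithm~\ref{alg:insertGeneralSetting} operates at time~$t$ and, in each case, assemble the bound from the results proven in the previous sections.

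\textbf{Semi-online mode.} In this mode we have $\SIZE(I_L(t)) < \frac{4 \omega h_B}{\epsilon}(h_B+1)$. Since $h_B = 13/\epsilon^2$ and $\omega = \bigO{\log(1/\epsilon)}$, this forces $\SIZE(I_L(t)) \in \mathrm{poly}(1/\epsilon)$. Consequently, the height $H$ of any Kenyon--R\'emila packing of $I_L(t)$ together with the buffered flat items can be bounded by $\mathrm{poly}(1/\epsilon)$ (using that $\OPT(I_L(t)) \leq \SIZE(I_L(t))+1$ and the additive term of K\&R). The F-buffer contributes an additive $\omega(2+\epsilon) = \bigO{\log(1/\epsilon)}$, which is again subsumed by $\mathrm{poly}(1/\epsilon)$. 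Finally, the separate shelf packing of $I_N$ has, by Lemma~\ref{lemma:onlyNarrowItems}, height $(1+\epsilon)\OPT(I_N) + \bigO{1/\epsilon^4} \leq (1+\epsilon)\OPT(I) + \bigO{1/\epsilon^4}$. Summing the three segments yields the required bound.

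\textbf{Online mode.} Here $\SIZE(I_L(t)) \geq \frac{4 \omega h_B}{\epsilon}(h_B+1)$, so Theorem~\ref{theo:heightContainerPackingAfterImprove} applies and the height $h'$ of the container packing maintained by Algorithms~\ref{alg:insertBigItem} and~\ref{alg:insertFlat} satisfies $h' \leq (1+2\Delta)\OPT(I_L) + 2(1+\epsilon)z + m = (1+\bigO{\epsilon})\OPT(I_L) + \mathrm{poly}(1/\epsilon)$. Algorithm~\ref{alg:insertNarrowItem} only touches this packing to insert narrow items. By Lemma~\ref{lemma:finalPackingHeight} the final height is
\[
h_\mathit{final} \leq \max\bigl\{h',\;(1+\epsilon')\SIZE(I) + \bigO{\omega/\epsilon^3}\bigr\},
\]
with $\epsilon' \in \bigO{\epsilon}$. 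Since $\SIZE(I) \leq \OPT(I)$ and $\OPT(I_L) \leq \OPT(I)$, both branches of the maximum are of the desired form $(1+\bigO{\epsilon})\OPT(I) + \mathrm{poly}(1/\epsilon)$.

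\textbf{The delicate point.} The only subtle place is the transition between the two modes: at the time step at which $\SIZE(I_L)$ first exceeds the threshold, the semi-online packing has to be replaced by the container packing of Section~\ref{sec:containerPacking}. One should verify that this one-time rebuild can be carried out so that the invariants of Definition~\ref{def:invariantProperties} are satisfied and the resulting packing already enjoys the height bound of Theorem~\ref{theo:heightContainerPackingAfterImprove}; this is done exactly as in the offline construction sketched at the end of Section~\ref{sec:interimConclusion} (widest items first, block $B$ empty) applied to the bounded instance $I_L(t)$, combined with reinserting each narrow item afterwards by Algorithm~\ref{alg:insertNarrowItem}. Because $\SIZE(I_L(t))$ at the switching moment is still in $\mathrm{poly}(1/\epsilon)$, the rebuild is absorbed into the additive term and does not affect the multiplicative factor. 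With this in place, combining the two mode-wise bounds concludes the proof.
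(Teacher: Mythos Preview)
Your proof follows essentially the same route as the paper's: case-split on the mode, use Lemma~\ref{lemma:onlyNarrowItems} for the shelf packing in the semi-online branch, and combine Theorem~\ref{theo:heightContainerPackingAfterImprove} with Lemma~\ref{lemma:finalPackingHeight} in the online branch. The paper's own proof is slightly terser (it simply asserts $H=\bigO{\omega h_B^2/\epsilon}$ and does not discuss the mode transition), but the logical structure is identical.

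One small technical slip: the inequality $\OPT(I_L(t)) \leq \SIZE(I_L(t))+1$ you invoke in the semi-online case is false for strip packing in general (take many items of width $\tfrac12+\epsilon$ and height~$1$). What you need instead is a bound like $\OPT(I_L(t)) \leq 2\,\SIZE(I_L(t))+1$, which follows e.g.\ from Steinberg's theorem or from NFDH; this still gives $H \in \mathrm{poly}(1/\epsilon)$ and the rest of your argument goes through unchanged. Your added paragraph on the mode transition is a legitimate point that the paper's proof leaves implicit.
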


\begin{proof}
	In the semi-online mode, the packing consists of two segments
	of constant height 
	$\omega y = \bigO{\log \frac{1}{\epsilon}}$ and
	$H = \bigO{\frac{\omega {h_B}^2}{\epsilon}}$
	(see Figure\nobreakspace \ref {fig:semionlinePacking})
	and the shelf packing of height $h'$.
	By Lemma\nobreakspace \ref {lemma:onlyNarrowItems},
	$h' \leq (1+\epsilon) \OPT(I_N) + \bigO{1 / \epsilon^4}$.
	Therefore, the total packing height is at most
	$(1+\epsilon) \OPT(I_N) + \bigO{1 / \epsilon^4} + \bigO{\log \frac{1}{\epsilon}}
	+ \bigO{\frac{\omega {h_B}^2}{\epsilon}}
	\leq (1+\epsilon) \OPT(I) + \mathrm{poly}(1/\epsilon)$.
	
	In the online mode, Lemma\nobreakspace \ref {lemma:finalPackingHeight} bounds the final
	packing height after insertion of narrow items with
	$
	h_\mathit{final} \leq \max \left\{ h', 
	(1+\epsilon') \SIZE(I)+ \bigO{\omega / \epsilon^3} \right\} $,
	where $h'$ is the height of the container packing
	and $\epsilon' \in \bigO{\epsilon}$.
	In the case $h_{final} = h'$, we need the result from Theorem\nobreakspace \ref {theo:heightContainerPackingAfterImprove}.
	Equation\nobreakspace \textup {(\ref {eq:heightContainerPackingAfterImprove})} states
	$
	h' \leq (1+2\Delta) \OPT(I_L) + 2(1+\epsilon)z + m
	$,
	where $\Delta = \bigO{\epsilon}$, $z=\bigO{1 / \epsilon^2}$, and
	$m = \bigO{\frac{1}{\epsilon} \log \frac{1}{\epsilon}}$.
	With $\OPT(I_L) \leq \OPT(I)$, the claim follows.
	In the other case $h_{final} > h'$, the claim follows directly
	by $\SIZE(I) \leq \OPT(I)$ and 
	$\omega = \bigO{\frac{1}{\epsilon} \log \frac{1}{\epsilon}}$.	
\end{proof}

\section{Migration Analysis}
\label{sec:migrationAnalysis}
%auto-ignore
% !TeX root = ../main.tex

Let $\Repack(t)$ denote the total size of repacked items at time 
$t \geq 1$, \ie at the arrival of item $i_t$.
Recall out notion of amortized migration: While in \cite{sanders} 
for the migration factor $\mu$ holds $\Repack(t) \leq \mu \SIZE(i_t)$,
we say that an algorithm has amortized migration factor $\mu$ if
$\sum_{j=1}^{t} \Repack(j) \leq \mu \SIZE(I(t))$,
where $I(t) = \{i_1, \ldots, i_t \}$.
A useful interpretation of amortized migration is the notion of \textit{repacking potential},
also used by Skutella and Verschae in \cite{skutellaVerschaeJournal}.
Each arriving item $i_t$ charges a global repacking budget with its repacking potential $\mu \SIZE(i_t)$.
Repacking a set of items $R$ costs $\SIZE(R)$ and is paid from the budget.

If we waant to show that $\mu$ is the amortized migration factor,
it suffices to show that the budget is non-negative at each time:

\begin{lemma}
	\label{lemma:amortizedAnalysisPotentialFunction}
	Let $\Phi(t)$ denote the repacking budget at time $t$ and assume
	$\Phi(0) = 0$.
	If $\Phi(t) \geq 0$ at each time $t \geq 1$, it holds that
	$\sum_{j=1}^{t} \Repack(j) \leq \mu \SIZE(I(t))$.
\end{lemma}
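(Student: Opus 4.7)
The plan is to unroll the one-step recurrence defining $\Phi$ and then invoke the hypothesis $\Phi(t) \geq 0$. Formalizing the informal description of the budget, at time $t$ the arriving item $i_t$ first credits the budget by $\mu \SIZE(i_t)$, after which the repacking operations performed during step $t$ withdraw $\Repack(t)$. This gives the recurrence
\[
\Phi(t) = \Phi(t-1) + \mu \SIZE(i_t) - \Repack(t)
\]
for every $t \geq 1$, which is the only nontrivial modeling step in the proof.

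Next, I would iterate this recurrence starting from $\Phi(0) = 0$. A straightforward induction (or telescoping sum) yields
\[
\Phi(t) = \sum_{j=1}^{t} \bigl( \mu \SIZE(i_j) - \Repack(j) \bigr)
       = \mu \sum_{j=1}^{t} \SIZE(i_j) - \sum_{j=1}^{t} \Repack(j).
\]
Using $I(t) = \{i_1, \ldots, i_t\}$ and the fact that $\SIZE$ is additive over disjoint sets, the first sum equals $\mu \SIZE(I(t))$.

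Finally, I would apply the assumption $\Phi(t) \geq 0$ to the identity above, obtaining
\[
\mu \SIZE(I(t)) - \sum_{j=1}^{t} \Repack(j) \geq 0,
\]
which rearranges to the claimed bound $\sum_{j=1}^{t} \Repack(j) \leq \mu \SIZE(I(t))$. There is no real obstacle here; the only point that needs care is pinning down the recurrence for $\Phi$ consistently with the informal notion of the \enquote{repacking budget}, since once that bookkeeping is fixed the result follows by a one-line telescoping argument.
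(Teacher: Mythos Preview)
Your proof is correct and matches the paper's approach: the paper does not spell out a proof but simply notes that the lemma follows by a standard potential-function argument (citing \cite[Sec.~17.3]{cormen}), and the recurrence $\Phi(t) = \Phi(t-1) + \mu \SIZE(i_t) - \Repack(t)$ you use is exactly the one the paper fixes in Section~\ref{sec:migrationFactorBigFlat}. Your telescoping argument is precisely that standard argument made explicit.
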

%\begin{proof}
%	Each arriving item $i_t$ adds $\mu \SIZE(i_t)$ to the budget, while
%	repacking decreases the budget by $\Repack(t)$. Therefore,
%	$\Phi(t) = \Phi(t-1) + \mu \SIZE(i_t) - \Repack(t)$
%	or equivalently 
%	$\Repack(t) = \Phi(t-1) - \Phi(t) + \mu \SIZE(i_t)$.
%	Thus:
%	\begin{multline}
%		\sum_{j=1}^{t} \mathrm{Repacking(j)}
%		=  \sum_{j=1}^{t} \Phi(j-1) - \Phi(j) + \mu \SIZE(i_j) \\
%		=     \left( \sum_{j=1}^{t} \Phi(j-1) - \Phi(j) \right) 
%		+ \left( \sum_{j=1}^{t} \mu \SIZE(i_j) \right)
%		= \Phi(0) - \Phi(t) + \mu \SIZE(I(t)) \,.
%	\end{multline}
%	%
%	By assumption, $\Phi(0) = 0$, and if $\Phi(t) \geq 0$, we have the claim.
%\end{proof}
The proof of Lemma \ref{lemma:amortizedAnalysisPotentialFunction} follows by a standard argument
for potential functions, see \cite[Sec. 17.3]{cormen}.

\subsection{Repacking of Operations}
\label{sec:repackingOperations}
As the first step we consider the repacking performed by the two major operations \textsc{Shift}
and \textsc{ShiftA} used for the insertion of a big or flat item.
In the following analysis we use the values $\omega, d, q, h_B, p$ as defined
in previous chapters. See Table\nobreakspace \ref {tab:boundsForValues} for a summary.

\subsubsection{Shift}
In general, shifting an item $i_t$ into a group is followed by a sequence of 
shift operations and thus the repacking performed
in each shift operation sums up to the total repacking.
The maximum amount of repacking occurs in a sequence beginning
in group $(l,B,q(l,B))$ and ending in group $(l,A,-1)$.
Therefore, we consider the sequence of shift operations
$$
\textsc{Shift}(g^0,S^0), \textsc{Shift}(g^1,S^1), \ldots, 
\textsc{Shift}(g^{d-1},S^{d-1}), \textsc{Shift}(g^d,S^d)
$$
with $S^0=\{i_t\}$, $g^0=(l,B,q(l,B))$, $g^{d-1}=(l,A,0)$, and $g^d=(l,A,-1)$. 
See Figure\nobreakspace \ref {fig:shiftSequenceRepacking}.
Like in Section\nobreakspace \ref {sec:shiftSequence}, let $S^j$ and $S_{out}^j$ be the
sets $S$, $S_{out}$ for the call \textsc{Shift}$(g^j,S^j)$.
For the maximum amount of repacking we can assume
\begin{align}
\label{eq:estimateSoutEQ}
h(S^j) = h(S_{out}^{j-1}) \leq h(S_{out}^j) \,.
\end{align}

\begin{figure}
	\centering
	\includegraphics[width=\textwidth]{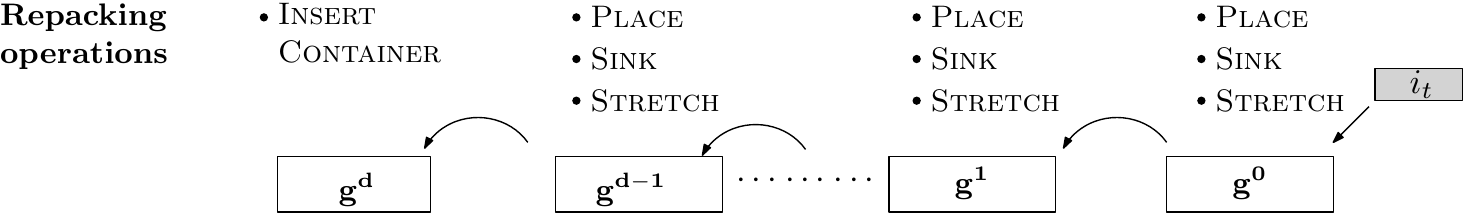}
	\caption{Repacking in a maximum shift sequence}
	\label{fig:shiftSequenceRepacking}
\end{figure}

\begin{lemma}
	\label{lemma:shiftRepacking}
	The total repacking in the above defined shift sequence is at most
	$\bigO{\frac{q h_B d^2}{\epsilon}}$.
\end{lemma}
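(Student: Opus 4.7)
The plan is to bound the total repacking by summing the contributions of each individual call in the shift sequence, exploiting that the sets $S_{out}^j$ grow only linearly in $j$. For a single call $\textsc{Shift}(g^j,S^j)$ I identify three sources of migration: (i) the items in $S_{out}^j$ moving out of group $g^j$; (ii) the \textsc{Sink} operations on containers from which those items are removed; and (iii) the \textsc{Place} operation, which internally calls \textsc{Stretch} on each affected level and may force repositioning of narrow items inside the level's D-container.

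The first step is to use Lemma~\ref{lemma:hSout} together with $h(S^0)=h(i_t)\leq 1$ to bound $h(S_{out}^j)\leq j+2$. Since every big item has height at least $\epsilon$, and for flat items \textsc{Shift} is invoked (via line~\ref{line:partitionShiftFlat} of Algorithm~\ref{alg:insertFlat} and inside Algorithm~\ref{alg:placeFlat}) only on batches of total height at most~$1$, the number of distinct containers a single shift call touches scales like $\bigO{h(S_{out}^j)/\epsilon}=\bigO{(j+2)/\epsilon}$. In this way, flat items with arbitrarily small height are handled in the same way as big items.

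The second step is to bound the per-container cost. A \textsc{Sink} on one container contributes at most $h_B\cdot w(c)\leq h_B$ of migration, since it only reshuffles items already inside. A \textsc{Stretch} on the enclosing level, however, must absorb the narrow items residing in the level's D-container; by Lemma~\ref{lemma:narrowItemsBoundRepacking} together with the discussion in Section~\ref{sec:stretchNarrowItems}, this contributes $\bigO{q h_B}$ per stretched level. Hence the total cost of the $j$-th shift call is $\bigO{(j+2)\,q h_B/\epsilon}$.

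Summing over $j=0,\ldots,d$ yields
\begin{equation*}
\sum_{j=0}^{d} \bigO{\frac{(j+2)\,q h_B}{\epsilon}} \;=\; \bigO{\frac{q h_B d^2}{\epsilon}},
\end{equation*}
which is the claimed bound. The main obstacle I expect is the uniform treatment of big and flat items in the first step: for flat items the estimate $\bigO{(j+2)/\epsilon}$ on the number of affected containers does not come from a lower bound on item heights, and one has to argue via the sorted flat-stack structure and the batching into height-at-most-$1$ subsets that a single shift cannot cascade repacking over more than the required number of containers. Once this is in place, the remaining per-call accounting and the geometric-style summation are routine.
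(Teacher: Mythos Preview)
Your approach mirrors the paper's proof closely: decompose into per-call costs for \textsc{Sink}, \textsc{Place}, and \textsc{Stretch}, bound each by $\bigO{h(S_{out}^j)\,q h_B/\epsilon}$, and sum over $j$ using Lemma~\ref{lemma:hSout} and the Gauss sum. Your identification of the flat-item issue and its resolution via the sorted flat-stack (no \textsc{Sink} needed when widest flat items are removed) and the height-$1$ batching inside \textsc{Place} (Algorithm~\ref{alg:placeFlat}) is exactly what the paper does.

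There is one genuine omission: you do not account for \textsc{InsertContainer} (Algorithm~\ref{alg:insertContainer}), which is invoked once at the end of the sequence when the shift terminates in mode ``new container'' and a fresh container is opened. This call triggers \textsc{Improve}, which repacks $p$ levels of height $h_B$, plus the narrow items displaced from those levels (Lemma~\ref{lemma:repackingNarrowItemsImprove}), for a total cost of $\bigO{pq/\epsilon}$. The paper then observes that $p=\bigO{d^2}$ (Table~\ref{tab:boundsForValues}), so this term is dominated by the $\bigO{q h_B d^2/\epsilon}$ you already obtained; but it must be accounted for explicitly, since without the comparison $p=\bigO{d^2}$ the bound would not close.
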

\begin{proof}
	Repacking in the \textsc{Shift} operation (Algorithm\nobreakspace \ref {alg:shift}) is performed by the
	auxiliary operations \textsc{Sink}, \textsc{Place}, \textsc{Stretch}, and \textsc{InsertContainer}.
	Note that the repacking of the first three operations depends 
	on their position in the shift sequence.
	An operation is at \textit{position} $j$ if it is performed as part of 
	\textsc{Shift}$(g^j,S^j)$.
	First we analyze the repacking of the operations 
	\textsc{Sink}, \textsc{Place}, \textsc{Stretch} at position $j$ 
	with $0 \leq j \leq d$.
	
	\paragraph*{Sink}
	The \textsc{Sink} operation is necessary for each container
	that contains a big item from $S_{out}^j$. 
	Since each big item has minimum height $\epsilon$, 
	the number of big items in $S_{out}^j$ is at most 
	$\floor{h(S_{out}^j) / \epsilon}$.
	Hence, the same bound holds for the number of affected containers.
	Flat items in $S_{out}^j$ do not cause sinking (see Section\nobreakspace \ref {sec:insertFlat}).
	The repacking for each sinked container is at most $h_B$, thus we get
	\begin{align}
	\label{eq:repackShiftSink}
	\mathit{Repack\_Sink}(j) 
	\leq \floor{\frac{h(S_{out}^j)}{\epsilon}} h_B
	\leq h(S_{out}^j) \frac{h_B}{\epsilon} \,.
	\end{align}
	
	\paragraph*{Place (without Stretch)}
	Remember that the placing differs depending on the item type:
	Big items get placed via Algorithm\nobreakspace \ref {alg:placeBig} without repacking,
	while placing flat items via Algorithm\nobreakspace \ref {alg:placeFlat} requires resorting 
	of containers.
	For a worst-case analysis we assume $S^j$ to consist of flat items solely.
	When placing the set $S^j$ with Algorithm\nobreakspace \ref {alg:placeFlat}, the repacking size
	is at most $n h_B$, where $n$ is the size of the partition in 
	Line\nobreakspace \ref {line:partitionShiftFlat}.
	A partition into at most $\ceil{h(S^j) / (1-\epsilon)} + 1$ sets can be found 
	greedily. Using Equation\nobreakspace \textup {(\ref {eq:estimateSoutEQ})} gives
	\begin{align}
	\label{eq:repackShiftPlace}
	\mathit{Repack\_Place} (j)
	\leq \left(\ceil{\frac{h(S^j)}{1-\epsilon}} + 1\right) h_B
	\leq h(S_{out}^j) \frac{h_B}{1-\epsilon} + 2 h_B\,.
	\end{align}
	
	\paragraph*{Stretch}
	First, we bound the number of containers where items from $S^j$ could be placed.
	Let $S^j$ be partitioned into $S^j_b$ and $S^j_f$, the sets of big and flat items.
	Items from $S^j_b$ can be placed in at most $\floor{h(S^j_b) / \epsilon}$ 
	containers since each big item as minimum height $\epsilon$.
	As analyzed in the previous paragraph, flat items get placed in at most
	$\frac{h(S^j_f)}{1-\epsilon} + 2$ containers.
	With Equation\nobreakspace \textup {(\ref {eq:estimateSoutEQ})} we get
	$$
	\floor{\frac{h(S^j_b)}{\epsilon}} + \frac{h(S^j_f)}{1-\epsilon} + 2
	\leq \frac{h(S^j_b)}{\epsilon} + \frac{h(S^j_f)}{\epsilon} + 2
	=   \frac{h(S^j)}{\epsilon} + 2
	\leq \frac{h(S_{out}^j)}{\epsilon} + 2
	$$
	as the total number of containers where new items may be placed.
	Each placing in a container is followed by a \textsc{Stretch} operation, 
	which repacks containers of one level. In addition, a D-container with
	narrow items has to be repacked. Using the result from 
	Lemma~\ref{lemma:narrowItemsBoundRepacking}, the repacking for one stretched level
	is bounded by $h_B + q h_B = \bigO{q h_B}$.
	\begin{align}
	\label{eq:repackShiftStretch}
	\mathit{Repack\_Stretch}(j)
	&\leq \left( \frac{h(S_{out}^j)}{\epsilon} + 2 \right) \bigO{q h_B}
	= h(S_{out}^j) \cdot \bigO{\frac{q h_B}{\epsilon}} + \bigO{q h_B}
	\end{align}
	
	\paragraph*{InsertContainer}
	\MakeUppercase Corollary\nobreakspace \ref {lemma:1-2bins} states that the total height
	of items that have to be packed into new containers is at most
	$h(S^0) + h_B - 2 \leq h_B - 1$. 
	Therefore, exactly one container gets inserted.
	The insertion of a container via Algorithm\nobreakspace \ref {alg:insertContainer}
	makes use of \textsc{Improve}, which repacks $p$ levels.
	In addition, according to Lemma\nobreakspace \ref {lemma:repackingNarrowItemsImprove},
	narrow items of maximum size 
	$\bigO{p / \epsilon} + \lambda$ have to be reinserted.
	With Lemma~\ref {lemma:narrowItemsBoundRepacking}, at most
	$\frac{\bigO{p / \epsilon} + \lambda}{h_B + \epsilon}
	= \bigO{\frac{p}{\epsilon h_B}}$ times $q$ levels get aligned.
	Using $h_B = \bigO{q}$ gives:
	\begin{equation}
	\label{eq:repackShiftInsCon}
	\mathit{Repack\_InsCon} 
	\leq p h_B + \bigO{\frac{p}{\epsilon h_B}} q h_B
	= \bigO{p q} + \bigO{\frac{p q }{\epsilon}}
	=    \bigO{\frac{p q }{\epsilon}}
	\end{equation}

	\paragraph*{Conclusion}
	Sinking, placing, and stretching occurs in the first $d$ of the $d+1$ shift steps.
	With Equations\nobreakspace  \textup {(\ref {eq:repackShiftSink})} to\nobreakspace  \textup {(\ref {eq:repackShiftStretch})} ,
	this repacking is at most of size:
	
	\begin{equation}
	\label{eq:migrationSum}
	\begin{aligned}[b]
	&\quad \sum_{j=0}^{d-1} \mathit{Repack\_Sink}(j) 
	+ \mathit{Repack\_Place}(j)
	+ \mathit{Repack\_Stretch}(j) \\
	&\leq \sum_{j=0}^{d-1} 
	\left( h(S_{out}^j) \frac{h_B}{\epsilon} \right)
	+ \left( h(S_{out}^j) \frac{h_B}{1-\epsilon} + 2 h_B \right)
	+ \left( h(S_{out}^j) \cdot \bigO{\frac{q h_B}{\epsilon}} + \bigO{q h_B} \right) \\
	&=  \left( \frac{h_B}{\epsilon} + \frac{h_B}{1-\epsilon} 
	+ \bigO{\frac{q h_B}{\epsilon}} \right)
	\left(\sum_{j=0}^{d-1} h(S_{out}^j)\right)
	+ d \left( 2 h_B + \bigO{q h_B} \right)
	\\
	&=  \bigO{\frac{q h_B}{\epsilon}} \left( \sum_{j=0}^{d-1} h(S_{out}^j) \right) + d \cdot \bigO{q h_B}
	\\
	&\overset{(*)}{=}  \bigO{\frac{q h_B d^2}{\epsilon}} + \bigO{d q h_B} \\
	&=  \bigO{\frac{q h_B d^2}{\epsilon}}
	\end{aligned}
	\end{equation}
	where ($*$) holds since
	$h(S_{out}^j) \leq j + 3 + \epsilon$ for 
	$h(S^0) = h(i_t) \leq 1$ (Lemma\nobreakspace \ref {lemma:hSout}) and with the Gauss sum
	$
	\sum_{j=0}^{d-1} h(S_{out}^j) 
	\leq \sum_{j=0}^{d-1} j + 3 + \epsilon
	\leq \sum_{j=0}^{d-1} \bigO{j}
	= \bigO{d^2}
	$
	follows.
	
	Additionally, the repacking due to \textsc{InsertContainer} is by Equation\nobreakspace \textup 
	{(\ref{eq:repackShiftInsCon})} at most $\bigO{\frac{p q}{\epsilon}}$.
	As $p = \bigO{d^2}$ (see Table\nobreakspace \ref {tab:boundsForValues}), the 
	first three operations dominate the total repacking.
\end{proof}

\begin{table}
	\caption{Variables used for the migration analysis}
	\label{tab:boundsForValues}
	\centering
	\def\arraystretch{1.2}
	\begin{tabular}{@{}llll@{}}
		Symbol & Meaning & Asymptotic bound & Reference \\
		\midrule
		$\omega$ & \# categories 
		& $\bigO{\log \frac{1}{\epsilon}}$
		& Lemma\nobreakspace \ref {lemma:NumberOfGroups} \\
		$d$      & max. length of shift sequence 
		& $\bigO{\frac{1}{\epsilon} \log \frac{1}{\epsilon}}$
		& Lemma\nobreakspace \ref {lemma:NumberOfGroups} \\
		$q$		 & \# aligned levels in Algorithm\nobreakspace \ref {alg:insertNarrowItem}
		& $\bigO{1 / \epsilon^2}$
		& Lemma\nobreakspace \ref {lemma:narrowItemsBoundRepacking} \\
		$h_B$	 & container height 
		& $\bigO{1 / \epsilon^2}$
		& Section\nobreakspace \ref {sec:approxGuarantee} \\
		$p$	 & \# repacked levels by \textsc{Improve}
		& $\bigO{\frac{1}{\epsilon^2} \log \frac{1}{\epsilon}}$
		& Theorem\nobreakspace \ref {theo:improveApplication}	
	\end{tabular}
\end{table}

\subsubsection{ShiftA}
\label{sec:migrationShiftA}

The steps performed in Lines~\ref{line:shiftAdefS}-\ref{line:shiftAPlace} of Algorithm\nobreakspace \ref {alg:shiftA} 
(\textsc{ShiftA}) are analog to Lines~\ref{line:shiftWidest}-\ref{line:shiftPlace} of 
Algorithm~\ref{alg:shift} (\textsc{Shift}), 
except for the height of the shifted items. 
Therefore, we reduce these parts of \textsc{ShiftA} to \textsc{Shift},
in order to make use of 
Equations\nobreakspace  \textup {(\ref {eq:repackShiftSink})} to\nobreakspace  \textup {(\ref {eq:repackShiftStretch})} .

\begin{lemma}
	\label{lemma:repackingShiftA}
	The total repacking performed in one call of Algorithm\nobreakspace \ref {alg:shiftA}
	is at most $\bigO{\frac{q {h_B}^2 \omega }{\epsilon^2}}$.
\end{lemma}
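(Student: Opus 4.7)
The plan is to decompose Algorithm~\ref{alg:shiftA} into three pieces whose migration costs can be bounded separately and combined: (i)~the \textsc{Shift}-analogous work inside the for-loop, (ii)~the insertion of $2^l$ new containers via Algorithm~\ref{alg:insertContainerShiftA}, and (iii)~the deletion of up to $\lceil 2^l + 1 \rceil$ containers from $(l,B,q(l,B))$ via Algorithm~\ref{alg:deleteContainerShiftA}.

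For part~(i), I would reuse Equations~\textup{(\ref{eq:repackShiftSink})}, \textup{(\ref{eq:repackShiftPlace})}, and \textup{(\ref{eq:repackShiftStretch})} from the proof of Lemma~\ref{lemma:shiftRepacking}: each iteration of the for-loop performs exactly the same pattern \textsc{WidestItems}/\textsc{Sink}/\textsc{Place}/\textsc{Stretch} as a single step of \textsc{Shift}, so the only adjustment concerns the height $h(S)$ of the moved set. Combining the definition of $u$ in Line~\ref{line:shiftAdefU} with invariant~\invE, one obtains $h(S) \leq (2^l+1)(h_B-1)$ in iteration $i=0$ (because $g_0$ is being enlarged by $2^l$ containers to act as a new block-$A$ group) and $h(S) \leq h_B - 1$ in every iteration $i \geq 1$. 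Substituting into the dominant \textsc{Stretch} cost $h(S) \cdot \bigO{qh_B/\epsilon}$ yields $\bigO{qh_B^2/\epsilon}$ per iteration for $i \geq 1$, and $\bigO{2^l q h_B^2 / \epsilon} = \bigO{q h_B^2/\epsilon^2}$ for the single iteration $i=0$, using $2^l \leq 1/\epsilon$. Since the for-loop runs $q(l,B) \leq \bigO{\omega/\epsilon}$ times by Lemma~\ref{lemma:NumberOfGroups}, the total for part~(i) is $\bigO{qh_B^2 \omega / \epsilon^2}$, which absorbs the one-time $i=0$ contribution.

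For parts~(ii) and (iii), I would argue analogously to Equation~\textup{(\ref{eq:repackShiftInsCon})}: each such container manipulation is essentially one call to \textsc{Improve} followed by reinsertion of at most $\bigO{p/\epsilon}+\lambda$ narrow items, for a cost of $\bigO{pq/\epsilon}$ per call. Algorithm~\ref{alg:insertContainerShiftA} is invoked exactly once (all $2^l$ new containers share a single level and a single \textsc{Improve} call), and Algorithm~\ref{alg:deleteContainerShiftA} is invoked at most $\bigO{2^l} = \bigO{1/\epsilon}$ times by Equation~\textup{(\ref{eq:delContainersToRepairE})}. Their joint contribution $\bigO{pq/\epsilon^2}$ is dominated by $\bigO{qh_B^2\omega/\epsilon^2}$, since $p = \bigO{(1/\epsilon^2)\log(1/\epsilon)}$ while $h_B^2\omega = \bigO{(1/\epsilon^4)\log(1/\epsilon)}$.

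The main obstacle is tracking $h(S)$ correctly in the first iteration: the asymmetry between block~$B$ (with $2^l(k-1)$ containers per group) and block~$A$ (with $2^lk$) inflates $h(S_0)$ by a factor of roughly $2^l$ compared to any subsequent iteration, and one must verify that this enlarged shift step still only sinks $\bigO{h_B/\epsilon}$ containers and stretches $\bigO{h_B/\epsilon}$ levels, rather than somehow multiplying the per-iteration bound by $2^l$ uniformly. The resolution is precisely that $2^l \leq 1/\epsilon$, which provides exactly the slack needed for the enlarged $i=0$ cost to be dominated by the sum over $i \geq 1$ iterations.
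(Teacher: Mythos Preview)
Your proposal is correct and follows essentially the same route as the paper: the same three-way decomposition into the for-loop (bounded via Equations~\textup{(\ref{eq:repackShiftSink})}--\textup{(\ref{eq:repackShiftStretch})} with $h(S^0)\leq(2^l+1)(h_B-1)$ and $h(S^j)\leq h_B-1$ for $j\geq 1$), the single \textsc{Improve} call for insertion, and the $\bigO{2^l}$ deletions, leading to the same dominant term $\bigO{qh_B^2\omega/\epsilon^2}$. One small point: Algorithm~\ref{alg:deleteContainerShiftA} does not actually invoke \textsc{Improve} (it only swaps a pattern), and the paper also separately accounts for emptying the $\lceil 2^l+1\rceil$ containers of group $(l,B,q(l,B))$ at cost $\bigO{h_B/\epsilon}$ before deletion---both are negligible against your stated bound, so this does not affect the argument.
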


\begin{proof}
	After the for-loop in Algorithm\nobreakspace \ref {alg:shiftA}, there are three more sources
	of repacking: First, the $2^l$ new containers get inserted with
	Algorithm\nobreakspace \ref {alg:insertContainerShiftA}.
	Then, some items of group $(l,B,q(l,B))$ might be moved to other containers
	within the group. Finally, some containers may get deleted by
	Algorithm\nobreakspace \ref {alg:deleteContainerShiftA}.
	First, we consider the steps analog to \textsc{Shift}.
	
	\paragraph*{Sink, Place, Stretch}
	Let $b = q(l,B)$. 
	For each iteration $j \leq b$ of the for-loop in Algorithm\nobreakspace \ref {alg:shiftA}, let
	$S^j$ be the set $S$ defined in Line\nobreakspace \ref {line:shiftAdefS}.
	Repacking operations are sinking the containers with items from
	$S^j$ (in group $g_{j+1}$), placing $S^j$ (in group $g_j$) and
	stretching of levels affected by the placing of $S^j$.
	Therefore, we can apply 
	Equations\nobreakspace  \textup {(\ref {eq:repackShiftSink})} to\nobreakspace  \textup {(\ref {eq:repackShiftStretch})} 
	if we set $S_{out}^j = S^j$.
	With the definition of $u$ in Line\nobreakspace \ref {line:shiftAdefU} of Algorithm\nobreakspace \ref {alg:shiftA}
	we get
	\begin{multline*}
	h(S^0) \leq s_{min} + 1
	=     u - h(g_0)
	\overset{\text{\invE}}{\leq} 2^l k (h_B - 1) - ((2^l (k-1) -1) (h_B - 1)  \\
	=    (2^l + 1) (h_B - 1)
	\end{multline*}
	and for $1 \leq j < b$
	\begin{multline*}
	h(S^j)  \leq s_{min} + 1
	=    u - h(g_0)
	\overset{\text{\invE}}{\leq} 2^l (k-1) (h_B - 1) - ((2^l (k-1) -1) (h_B - 1)
	\\
	=    h_B - 1 \,.
	\end{multline*}
	Since $b \leq q(l,A) + q(l,B) = \bigO{\frac{\omega}{\epsilon}}$ (by Lemma\nobreakspace \ref {lemma:NumberOfGroups}), it follows
	$2^l + b = \bigO{\frac{\omega}{\epsilon}}$, and thus
	$
	\sum_{j=0}^{b-1} h(S^j) 
	\leq (2^l + 1) (h_B - 1) + (b-1) (h_B - 1 )
	=    (2^l + b) (h_B - 1)
	= \bigO{\frac{\omega h_B}{\epsilon}} \,.
	$
	Analogously to Lemma\nobreakspace \ref {lemma:shiftRepacking} 	we get
	\begin{equation}
	\label{eq:migrationSumShiftA}
	\begin{aligned}[b]
	&\quad \sum_{j=0}^{b-1} \mathit{Repack\_Sink}(j) 
	+ \mathit{Repack\_Place}(j)
	+ \mathit{Repack\_Stretch}(j) \\
	&=  \bigO{\frac{q h_B}{\epsilon}} \left( \sum_{j=0}^{b-1} h(S^j) \right) + \bigO{b q h_B}
	~~=  \bigO{\frac{q {h_B}^2 \omega}{\epsilon^2}} \,.
	\end{aligned}
	\end{equation}
	
	\paragraph*{Repacking of $q(l,B)$}
	At most $\ceilS{\frac{\Delta}{h_B - 1}}$ containers have to be emptied
	and by Equation\nobreakspace \textup {(\ref {eq:delContainersToRepairE})} this term is at most
	$\ceilS{2^l + 1}$. Therefore,
	\begin{align}
	\label{eq:repackShiftAEmptyContainer}
	\mathit{Repack\_{EmptyCon}} \leq \ceil{2^l + 1} h_B
	= \bigO{\frac{h_B}{\epsilon}} \,.
	\end{align}
	
	\paragraph*{Insert / Delete Containers}
	Algorithm\nobreakspace \ref {alg:insertContainerShiftA} needs one call of \textsc{Improve},
	analogously to the proof for \textsc{Shift} (Equation\nobreakspace \textup {(\ref {eq:repackShiftInsCon})})
	the size of repacking is at most 
	$\bigO{\frac{p q }{\epsilon}}$.
	Before containers get deleted, contained items are repacked into other
	containers, which is already treated above.
	There are at most $\ceilS{2^l + 1}$ containers that may get deleted
	(see above), each of them requires a single call of Algorithm\nobreakspace \ref {alg:deleteContainerShiftA}.
	\begin{align}
	\label{eq:repackShiftAInsDelCon}
	\mathit{Repack\_InsDelCon} \leq 
	\bigO{\frac{p q }{\epsilon}} \ceil{2^l + 1}
	= \bigO{\frac{p q}{\epsilon^2}} \,.
	\end{align}
	
	\paragraph{Conclusion} Using 
	Equations\nobreakspace  \textup {(\ref {eq:migrationSumShiftA})} to\nobreakspace  \textup {(\ref {eq:repackShiftAInsDelCon})} ,
	we notice that the operations \textsc{Sink}, \textsc{Place}, and
	\textsc{Stretch} dominate the overall repacking:
	$$
	\bigO{\frac{q {h_B}^2 \omega}{\epsilon^2}} + \bigO{\frac{h_B}{\epsilon}} 
	+ \bigO{\frac{p q}{\epsilon^2}}
	= \bigO{\frac{q {h_B}^2 \omega}{\epsilon^2}} \,.
	$$
\end{proof}

\subsection{Putting it together}
Now we are able to analyze the amortized migration factor of the presented
algorithm. Let $i_t$ be the item inserted at time $t \geq 1$.
For the amortized analysis we use the notion of repacking potential, thus let
$\Phi(t) = \Phi(t-1) + \mu \SIZE(i_t) - \Repack(t)$ be the total budget
at time $t$ and assume $\Phi(0)=0$.

\subsubsection{Insertion of a Big or Flat Item}
\label{sec:migrationFactorBigFlat}

Let $\varGamma$ be the maximum total size of repacking in a shift sequence.
By Lemma\nobreakspace \ref {lemma:shiftRepacking}, 
$\varGamma = \bigO{\frac{q h_B d^2}{\epsilon}}$.
First we need the following result, regarding the amount of repacking caused by the 
block balancing algorithm:

\begin{lemma}
	\label{lemma:blockBalancingConclusion}
	The block balancing procedure (Algorithm\nobreakspace \ref {alg:blockBalancing}) for a set of items $M$
	causes repacking of size at most 
	$\bigO{\frac{q {h_B} \omega }{\epsilon^2}} \SIZE(M)$.
\end{lemma}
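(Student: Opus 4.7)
The plan is to reduce this lemma to a simple composition of two already-established bounds: Lemma~\ref{lemma:repackingShiftA}, which controls the repacking performed by a single invocation of \textsc{ShiftA}, and Lemma~\ref{lemma:movedGroupsBlockBalancing}, which controls how many such invocations are triggered in total while the items of $M$ arrive.

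First I would observe that Algorithm~\ref{alg:blockBalancing} performs only two kinds of actions: the relabeling of block $A$ to block $B$ (a purely notational step, with zero repacking cost, invoked at most once per iteration of the for-loop when block $B$ is found to be empty), and a sequence of $d$ calls of \textsc{ShiftA} on suitable categories $l$. Therefore the entire repacking attributable to one call of the block-balancing algorithm on an inserted item $i$ is the sum of the repacking performed by its $d_i$ underlying \textsc{ShiftA} calls. Applying Lemma~\ref{lemma:repackingShiftA} to each of them bounds this repacking by $d_i \cdot \bigO{\frac{q h_B^2 \omega}{\epsilon^2}}$.

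Next I would sum over all items $i \in M$ and apply Lemma~\ref{lemma:movedGroupsBlockBalancing}, which gives
\[
\sum_{i \in M} d_i \;\leq\; \frac{8+\epsilon}{2 h_B}\,\SIZE(M) + 1 .
\]
Multiplying by the per-call bound yields a total repacking of at most
\[
\bigO{\tfrac{q h_B^2 \omega}{\epsilon^2}} \cdot \left(\tfrac{8+\epsilon}{2 h_B}\SIZE(M) + 1\right)
\;=\; \bigO{\tfrac{q h_B \omega}{\epsilon^2}}\SIZE(M) \;+\; \bigO{\tfrac{q h_B^2 \omega}{\epsilon^2}} ,
\]
i.e.\ exactly the first summand plus a constant (in $\SIZE(M)$) overhead coming from the rounding-up in Lemma~\ref{lemma:movedGroupsBlockBalancing}.

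The main obstacle, and the only non-mechanical step, is to absorb this additive $h_B$-factor into the $\SIZE(M)$-linear leading term so as to match the clean statement of the lemma. I expect to handle this by exploiting the regime in which block balancing is actually used: the lemma is only ever invoked in the online mode, where by Section~\ref{sec:approxGuarantee} we assume $\SIZE(I_L) \geq \frac{4\omega h_B}{\epsilon}(h_B+1)$, so a single unit of the ceiling's slack can be charged against any item whose size is at least a constant depending on $1/\epsilon$; equivalently, in the amortized framework of Section~\ref{sec:migrationAnalysis} this constant overhead is a one-time cost paid from the repacking budget accumulated by earlier arrivals, and therefore does not affect the per-$M$ bound once stated asymptotically as $\bigO{\frac{q h_B \omega}{\epsilon^2}}\SIZE(M)$.
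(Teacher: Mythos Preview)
Your approach is exactly the paper's: invoke Lemma~\ref{lemma:movedGroupsBlockBalancing} for the number of \textsc{ShiftA} calls, Lemma~\ref{lemma:repackingShiftA} for the cost of each, and multiply. The paper's proof is in fact shorter than yours---it writes
\[
\bigO{\tfrac{q h_B^2 \omega}{\epsilon^2}}\left(\tfrac{8+\epsilon}{2 h_B}\SIZE(M)+1\right)=\bigO{\tfrac{q h_B \omega}{\epsilon^2}}\SIZE(M)
\]
and stops, silently absorbing the additive $+1$ into the asymptotic notation. You are right to flag that this step is not literally justified without a lower bound on $\SIZE(M)$; the paper simply does not address it. Your proposed resolution (charging the one-time overhead against the minimum size of the items that actually trigger block balancing, or against the amortization budget) is the correct way to close the gap, and it is consistent with how the lemma is used downstream in Lemmas~\ref{lemma:migrationFactorBigItem} and~\ref{lemma:migrationFactorFlatItem}.
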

\begin{proof}
	Lemma\nobreakspace \ref {lemma:movedGroupsBlockBalancing} states
	that the total number of moved groups for the insertion of $M$
	is at most $\frac{8+\epsilon}{2 h_B} \SIZE(M) + 1$.
	According to Lemma\nobreakspace \ref {lemma:repackingShiftA},
	moving one single group via \textsc{ShiftA} causes repacking
	of size $\bigO{\frac{q {h_B}^2 \omega }{\epsilon^2}}$.
	That is, the total repacking is at most
	$\bigO{\frac{q {h_B}^2 \omega }{\epsilon^2}} 
	\left(\frac{8+\epsilon}{2 h_B} \SIZE(M) + 1 \right) \allowbreak
	= \bigO{\frac{q {h_B} \omega }{\epsilon^2}} \SIZE(M)$.
\end{proof}

Since big items have a minimum size we immediately get the following result even without
amortization:

\begin{lemma}
	\label{lemma:migrationFactorBigItem}
	Inserting a big item via Algorithm\nobreakspace \ref {alg:insertBigItem} has 
	migration factor $\mu = \bigO{\frac{q h_B d^2}{\epsilon^3}}$.
\end{lemma}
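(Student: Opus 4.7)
The plan is to bound the total repacking $\Repack(t)$ caused by Algorithm \ref{alg:insertBigItem} and then divide by $\SIZE(i_t)$, exploiting the fact that a big item has the minimum size $\SIZE(i_t) \geq \epsilon^2$. Since the claim is about the (non-amortized) migration factor, no potential-function argument is needed here; everything will be paid for by the single arriving big item.

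First I would decompose Algorithm \ref{alg:insertBigItem} into its two repacking-producing phases: the \textsc{Shift} call with $S^0 = \{i_t\}$, and the subsequent call of \textsc{BlockBalancing}. For the \textsc{Shift} phase, I apply Lemma \ref{lemma:shiftRepacking}, which bounds the total repacking in any shift sequence by $\varGamma = \bigO{\frac{q h_B d^2}{\epsilon}}$. For the \textsc{BlockBalancing} phase, I apply Lemma \ref{lemma:blockBalancingConclusion} with $M = \{i_t\}$, giving repacking at most $\bigO{\frac{q h_B \omega}{\epsilon^2}} \SIZE(i_t)$. Summing these,
\[
\Repack(t) \;\leq\; \bigO{\tfrac{q h_B d^2}{\epsilon}} + \bigO{\tfrac{q h_B \omega}{\epsilon^2}} \SIZE(i_t).
\]

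Next I would use the fact that $i_t$ is big, so $h(i_t) \geq \epsilon$ and $w(i_t) \geq \epsilon$, giving $\SIZE(i_t) \geq \epsilon^2$. Dividing the first term by this lower bound yields a contribution to the migration factor of $\bigO{\frac{q h_B d^2}{\epsilon^3}}$, while the second term contributes $\bigO{\frac{q h_B \omega}{\epsilon^2}}$. Using the asymptotic bounds from Table \ref{tab:boundsForValues}, namely $d = \bigO{\tfrac{1}{\epsilon}\log \tfrac{1}{\epsilon}}$ and $\omega = \bigO{\log \tfrac{1}{\epsilon}}$, one checks that $d^2/\epsilon$ dominates $\omega$, so the first term dominates and the overall migration factor is $\mu = \bigO{\tfrac{q h_B d^2}{\epsilon^3}}$.

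There is no real obstacle here; the lemma is essentially a bookkeeping corollary of Lemmas \ref{lemma:shiftRepacking} and \ref{lemma:blockBalancingConclusion}. The only subtlety worth spelling out is why the additive (non-amortized) $\varGamma$ term from the shift sequence can safely be charged to the single item $i_t$ — this is exactly where the big-item assumption $\SIZE(i_t) \geq \epsilon^2$ enters, and it is also the reason flat items (treated separately) require true amortization rather than the clean per-item bound obtained here.
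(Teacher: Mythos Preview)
Your proof is correct and follows essentially the same approach as the paper: decompose into the \textsc{Shift} cost (bounded by $\varGamma$ via Lemma~\ref{lemma:shiftRepacking}) and the \textsc{BlockBalancing} cost (via Lemma~\ref{lemma:blockBalancingConclusion} with $M=\{i_t\}$), observe that the latter is dominated by the former, and divide by the minimum size $\epsilon^2$ of a big item. The only cosmetic difference is that the paper uses $\SIZE(i_t)\leq 1$ to drop the $\SIZE(i_t)$ factor from the block-balancing bound before comparing with $\varGamma$, whereas you carry it through and compare after dividing; the arithmetic is the same.
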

\begin{proof}
	The first repacking source in Algorithm\nobreakspace \ref {alg:insertBigItem} is \textsc{Shift},
	whose repacking size is at most $\varGamma$.
	Additionally, the block balancing (Algorithm\nobreakspace \ref {alg:blockBalancing}) causes
	repacking. Considering a single big item $i_t$, 
	Lemma\nobreakspace \ref {lemma:blockBalancingConclusion} with $M= \{i_t\}$ bounds the repacking
	size with $\bigO{\frac{q {h_B} \omega }{\epsilon^2}}$. 
	Since this term is dominated by $\varGamma$, the
	total repacking in Algorithm\nobreakspace \ref {alg:insertBigItem} is $\bigO{\varGamma}$.
	Each big item has minimum size $\epsilon^2$, thus
	$ \mu = \frac{\bigO{\varGamma}}{\epsilon^2} 
	= \bigO{\frac{q h_B d^2}{\epsilon^3}}$.
\end{proof}

The insertion of a flat item is closely related to the procedure for big items.
But since flat items get buffered before the \textsc{Shift} operation, here
the amortized analysis comes into play.

\begin{lemma}
	\label{lemma:migrationFactorFlatItem}
	Inserting a flat item via Algorithm\nobreakspace \ref {alg:insertFlat} has amortized migration
	factor $\mu = \bigO{\frac{q h_B d^2}{\epsilon^4}}$.
\end{lemma}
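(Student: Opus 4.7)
The plan is to apply the repacking-potential framework of Lemma \ref{lemma:amortizedAnalysisPotentialFunction}: set $\Phi(0)=0$ and $\Phi(t)=\Phi(t-1)+\mu\SIZE(i_t)-\Repack(t)$, and charge all migration caused by Algorithm \ref{alg:insertFlat} to the credits $\mu\SIZE(i_j)$ contributed by previously buffered flat items of the overflowing category, aiming to show $\Phi(t)\geq 0$ at every $t$. If $i_t$ is placed directly in the F-buffer then $\Repack(t)=0$ and the potential only grows, so I focus on the \textbf{else}-branch, where the category-$l$ level of the F-buffer cannot absorb $i_t$.

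In the overflow case I would bound $\Repack(t)$ from above by decomposing it into the two sources of migration in Algorithm \ref{alg:insertFlat}. First, the invocations of \textsc{Shift} in Line 10. Because the sets $S=\{i\in B(l)\mid g_j\text{ suitable for }i\}$ computed in different iterations of the outer loop (Line \ref{line:shiftFlatGroupLoop}) are pairwise disjoint subsets of $B(l)$, their total height sums to $h(B(l))\leq 17/\epsilon$; together with the $+1$ from rounding up the partition size in each of the at most $O(\omega/\epsilon)$ outer iterations, the total number of \textsc{Shift} calls is $O(h(B(l))/(1-\epsilon))+O(\omega/\epsilon)=O(\omega/\epsilon)$. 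By Lemma \ref{lemma:shiftRepacking}, each \textsc{Shift} call induces a shift sequence of cost at most $\Gamma=\bigO{qh_Bd^2/\epsilon}$, for a combined $\bigO{\omega qh_Bd^2/\epsilon^2}$. Second, the single call to Algorithm \ref{alg:blockBalancing} with $M=B(l)\cup\{i_t\}$, whose cost is at most $\bigO{qh_B\omega/\epsilon^2}\SIZE(M)+\bigO{qh_B^2\omega/\epsilon^2}$ (Lemma \ref{lemma:blockBalancingConclusion}); since $h_Bd^2$ dominates $h_B^2$ by the bounds in Table \ref{tab:boundsForValues}, the shift term swallows everything and I obtain $\Repack(t)=\bigO{\omega qh_Bd^2/\epsilon^2}+\bigO{qh_B\omega/\epsilon^2}\SIZE(M)$.

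For the amortization I need a matching lower bound on $\SIZE(B(l))$. The overflow condition for an item $i_t$ with $h(i_t)<\epsilon$ forces each of the $2^l$ slots of height $y=\card{G}\epsilon^2+\epsilon$ in the category-$l$ level to currently hold items of total height strictly greater than $y-\epsilon=\card{G}\epsilon^2$. Every buffered item has width greater than $2^{-(l+1)}$ by \invA, so summing $h(i)w(i)$ over the $2^l$ slots gives $\SIZE(B(l))>2^l\cdot\card{G}\epsilon^2\cdot 2^{-(l+1)}=\card{G}\epsilon^2/2\geq \omega\epsilon^2/2$, where in the last inequality I use that in the online regime $\card{G}\geq\omega$ (each non-empty category contributes at least one group). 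Feeding this into $\mu\SIZE(M)\geq\Repack(t)$ yields $\mu\cdot\omega\epsilon^2/2\geq\bigO{\omega qh_Bd^2/\epsilon^2}$, from which $\mu=\bigO{qh_Bd^2/\epsilon^4}$ as claimed.

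The step I expect to be the main obstacle is getting the two $\omega$-factors to cancel: the outer loop of Algorithm \ref{alg:insertFlat} traverses up to $\Theta(\omega/\epsilon)$ groups, which pushes an extra $\omega$ into $\Repack(t)$, and without the sharp lower bound $\SIZE(B(l))=\Omega(\omega\epsilon^2)$ derived from the full slot structure of the F-buffer together with $\card{G}\geq\omega$, one only recovers a bound weaker than the one claimed. Careful use of the definitions $y=\card{G}\epsilon^2+\epsilon$ and of invariant \invA, and of the disjointness of the sets $S^j$ across outer iterations, is therefore essential.
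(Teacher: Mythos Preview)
Your overall framework—the repacking-potential accounting, bounding the number of \textsc{Shift} calls via the outer loop over the groups of category $l$, and observing that \textsc{BlockBalancing} is dominated—matches the paper's proof. Your counting of \textsc{Shift} calls through the disjointness of the sets $S$ across outer iterations is in fact slightly sharper than the paper's, which simply multiplies $\card{G(l)}\leq\card{G}$ by the maximal partition size $n=\bigO{1/\epsilon}$ from Lemma~\ref{lemma:sizeShiftPartitionFlat}.

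There is, however, a genuine gap, precisely at the step you yourself flag as the crux. Your justification for $\card{G}\geq\omega$—``each non-empty category contributes at least one group''—only yields $\card{G}\geq\card{W}$, and nothing in the setting forces all $\omega$ possible width categories to be non-empty; the items in $I_L$ may well occupy a single category. The paper sidesteps this detour entirely: it keeps the factor $\card{G}$ on \emph{both} sides—$\SIZE(B(l))\geq\tfrac12\card{G}\epsilon^2$ from the slot height $y=\card{G}\epsilon^2+\epsilon$, and $\Repack(t)\leq\bigO{1/\epsilon}\,\card{G}\,\varGamma$ from at most $\card{G(l)}\leq\card{G}$ outer iterations each spawning $n=\bigO{1/\epsilon}$ \textsc{Shift} calls—so that $\card{G}$ cancels and $\mu=\bigO{\varGamma/\epsilon^3}=\bigO{qh_Bd^2/\epsilon^4}$ drops out with no side estimate on $\card{G}$ required. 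Your argument is repaired the same way: cancel $\card{G}$ against $\card{G}$ rather than passing through $\omega$.
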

\begin{proof}
	Let $i_t$ be the inserted item of category $l$ and let 
	$\mu = \bigO{\frac{1}{\epsilon^3}} \varGamma$.
	If $i_t$ can be inserted without overflowing the F-buffer, no repacking is performed.
	Now suppose that an overflow occurs. Then, each of the $2^l$ slots contains items
	of width at least $2^{-(l+1)}$ and total height at least $y - \epsilon = \card{G} \epsilon^2$.
	Consequently, all slots of category $l$ contain items of total size at least
	$2^l 2^{-(l-1)} \card{G} \epsilon^2 = \frac{1}{2} \card{G} \epsilon^2$. 
	None of those items used its repacking potential so far, thus 
	$\Phi(t-1) \geq \mu \frac{1}{2} \card{G} \epsilon^2
	= \bigO{1 / \epsilon} \card{G} \varGamma$.	
	
	The repacking in Algorithm\nobreakspace \ref {alg:insertFlat} is due to \textsc{Shift}, which is 
	called $\card{G(l)} n$ times, where $\card{G(l)} \leq \card{G}$ 
	denotes the number of groups of category $l$ and $n$ the size of the partition
	(see Line\nobreakspace \ref {line:partitionShiftFlat}).
	By \MakeUppercase Lemma\nobreakspace \ref {lemma:sizeShiftPartitionFlat}, $n \leq \bigO{1 / \epsilon}$.
	Like in the case of big items (Lemma\nobreakspace \ref {lemma:migrationFactorBigItem}), the repacking
	of \textsc{ShiftA} is dominated by the repacking for \textsc{Shift}.
	Hence, $\Repack(t) \leq \bigO{1 / \epsilon} \card{G} \varGamma$
	and $\Phi(t) \geq 0$.
	
\end{proof}

\subsubsection{Insertion of a Narrow Item}

\begin{lemma}
	\label{lemma:migrationFactorNarrow}	
	The insertion of a narrow item with Algorithm\nobreakspace \ref {alg:insertNarrowItem} has
	amortized migration factor $\mu= \bigO{q} = \bigO{1 / \epsilon^2}$.
\end{lemma}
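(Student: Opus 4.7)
The plan is to apply the potential-function framework of Lemma~\ref{lemma:amortizedAnalysisPotentialFunction} to the migration performed exclusively by Algorithm~\ref{alg:insertNarrowItem}; repacking of narrow items caused by other algorithms (\eg by \textsc{Stretch} inside a \textsc{Shift}, as discussed in Section~\ref{sec:stretchNarrowItems}) has already been absorbed into Lemmas~\ref{lemma:migrationFactorBigItem} and~\ref{lemma:migrationFactorFlatItem}. Since no migration is performed in the direct-placement branch (Line~\ref{line:narrowItemDirectPlace}), the entire cost sits in the else branch, which I will call a \emph{trigger event}. By Lemma~\ref{lemma:characteristicsOfInsNarrowItem}, at a trigger event the algorithm aligns at most $q$ levels, removes items from the corresponding D-containers (each with load at most $h_B$), empties the N-buffer (of area at most $h_B$), and reinserts everything along with $i_t$. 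Together with the fact that aligning a single level rearranges items of total size at most $h_B$, the migration at a trigger is $\Repack(t) \leq \bigO{q h_B}$.

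The central step is the claim that between two consecutive trigger events, fresh narrow items of total size $\Omega(h_B)$ must arrive. By inspection of all insertion procedures (Algorithms~\ref{alg:insertBigItem}, \ref{alg:insertFlat}, \ref{alg:insertNarrowItem} and their subroutines), the only place that writes to the N-buffer is Line~\ref{line:narrowItemDirectPlace} on a fresh narrow arrival: \textsc{Shelf-First-Fit} never uses the N-buffer, and the reinsertion in Line~\ref{line:reinsertNarrowItems} goes through \textsc{Shelf-First-Fit} as well. Since the N-buffer is emptied at every trigger (Line~\ref{line:emptyNbuffer}), all items present in it at the next trigger time $t$ are narrow items that arrived in the interval $[t_0+1,\, t-1]$, where $t_0$ is the previous trigger time. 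The trigger condition forces the buffer to be \enquote{full} for $i_t$ of category $r$: no existing shelf can accommodate $i_t$ and no new shelf of height $(1-\epsilon)^{r-1} \leq 1$ fits, so the total shelf height in the buffer is at least $h_B - 1$. Applying the density analysis from the proof of Lemma~\ref{lemma:onlyNarrowItems} with $\alpha = \epsilon$---dense shelves have item-to-area density at least $(1-\epsilon)^2$, and the total sparse-shelf height is at most $\lambda = \bigO{1/\epsilon}$---the items in the buffer have total size at least $(1-\epsilon)^2 (h_B - 1 - \lambda) = \Omega(h_B)$, where the last equality uses $h_B = \Theta(1/\epsilon^2)$.

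To conclude, set $\mu = c\,q$ for a sufficiently large constant $c$ and let $\Phi(t)$ be the narrow-item budget in the sense of Lemma~\ref{lemma:amortizedAnalysisPotentialFunction}, restricted to contributions from narrow arrivals and narrow-item repacking performed by Algorithm~\ref{alg:insertNarrowItem}. By induction on trigger events, between consecutive triggers at $t_0$ and $t$ the budget grows by at least $\mu \cdot \Omega(h_B) = \Omega(c q h_B)$ from the fresh narrow arrivals, while the trigger at $t$ consumes at most $\bigO{q h_B}$; for $c$ chosen large enough the net change is non-negative, so $\Phi(t) \geq \Phi(t_0) \geq 0$ with base case $\Phi(0) = 0$. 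Lemma~\ref{lemma:amortizedAnalysisPotentialFunction} then yields the claimed amortized migration factor $\mu = \bigO{q} = \bigO{1/\epsilon^2}$.

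The main obstacle is the density step of the second paragraph: translating the buffer's \enquote{shelf height at least $h_B - 1$} into the stronger \enquote{item size at least $\Omega(h_B)$}. This works only because the sparse-shelf waste $\lambda = \bigO{1/\epsilon}$ is dominated by $h_B = \Theta(1/\epsilon^2)$, and because the density factor $(1-\epsilon)^2$ is bounded away from zero in our parameter regime $\epsilon \in (0, 1/4]$. A secondary point to verify carefully is the inspection step that no operation outside Line~\ref{line:narrowItemDirectPlace} ever places an item into the N-buffer; without this there would be no clean bijection between trigger events and $\Omega(h_B)$-chunks of narrow arrivals, and the charging argument could double-count.
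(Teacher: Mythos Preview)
Your proof is correct and follows essentially the same approach as the paper: both use the potential-function framework, argue that a full N-buffer at a trigger event certifies $(1-\epsilon)^2(h_B-1-\lambda)=\Omega(h_B)$ worth of fresh narrow arrivals since the previous trigger, bound the repacking at a trigger by $\bigO{q h_B}$, and conclude $\mu=\bigO{q}$. Your version is more explicit about verifying that only Line~\ref{line:narrowItemDirectPlace} writes to the N-buffer and about the inductive structure over trigger events, but the underlying argument is the same as the paper's.
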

\begin{proof}
	Let $\mu = \frac{(q+1) h_B}{(1-\epsilon)^2 (h_B - 1 - \lambda)}$.
	If $i_t$ can be placed in the N-buffer or via \textsc{Shelf-First-Fit},
	no repacking is performed.
	Now assume that $i_t$ can not be placed in the N-buffer. 
	Recall that the N-buffer has width 1 and height $h_B$, where sparse shelfs have
	total height at most $\lambda$.
	Thus, a full N-buffer contains
	items of total size at least $(1-\epsilon)^2 (h_B - 1 - \lambda)$.
	All buffer items and $i_t$ did not use their repacking potential so far, thus
	$\Phi(t) \geq (1-\epsilon)^2 (h_B - 1 - \lambda) \mu - \Repack(t)
	= (q+1) h_B - \Repack(t)$.
	
	On a buffer overflow, Algorithm\nobreakspace \ref {alg:insertNarrowItem} aligns at most $q$ levels
	of height $h_B$. Repacking the items from the buffer into the aligned levels
	(respectively on top of the packing) causes additional repacking of 
	size at most $h_B$.
	Hence, $\Repack(t) \leq (q+1) h_B$ and thus $\Phi(t) \geq 0$.
	As $\lambda \leq 1 / \epsilon^2$ and therefore
	$(h_B - 1 - \lambda) = \Omega (h_B)$,
	it holds that $\mu = \bigO{q}$.
\end{proof}

\subsubsection{Migration Factor of the Overall Algorithm}
Recall that the overall Algorithm\nobreakspace \ref {alg:insertGeneralSetting} has two modes
(semi-online and online).
In the following lemma we consider the semi-online mode.

\begin{lemma}
	\label{lemma:migrationSemiOnline}
	In the semi-online mode, Algorithm\nobreakspace \ref {alg:insertGeneralSetting} has 
	amortized migration factor 
	$	\mu = \bigO{\frac{\omega {h_B}^2}{\epsilon^3}}$.
\end{lemma}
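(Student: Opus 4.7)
The plan is to apply the potential-function framework of Lemma~\ref{lemma:amortizedAnalysisPotentialFunction}, bounding the number and cost of Kenyon--R{\'e}mila (K\&R) repacks and then charging them against item sizes. First observe that in the semi-online mode each K\&R call repacks $I_L(t)$ together with the contents of the F-buffer. The former has total size at most $\tfrac{4\omega h_B}{\epsilon}(h_B+1) = \bigO{\omega h_B^2/\epsilon}$ by the branch condition, while the F-buffer region has area $\omega y = \omega(2+\epsilon) = \bigO{\omega}$ and so contains items of total size $\bigO{\omega}$. Hence a single K\&R repack costs at most $\bigO{\omega h_B^2/\epsilon}$.

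Next I would count K\&R repacks by trigger type. Narrow items never trigger a repack, so they contribute nothing. Each big item triggers exactly one repack; since big items have minimum size $\epsilon^2$, the number of big-item-induced repacks up to time $t$ is at most $\SIZE(I_L^b(t))/\epsilon^2$, where $I_L^b(t)$ is the set of big items arrived so far. A flat item triggers a repack only when the F-buffer overflows at its own category~$l$; at that moment every one of the $2^l$ slots of category~$l$ must already contain items of width at least $2^{-(l+1)}$ and total height at least $y - \epsilon = 2$, so the category alone holds items of total size at least $2^l \cdot 2^{-(l+1)} \cdot 2 = 1$. Since every previous K\&R repack has emptied the F-buffer, these flat items all arrived since the previous repack; therefore the number of overflow-induced repacks up to time~$t$ is at most $\SIZE(I_L^f(t))$, where $I_L^f(t)$ is the set of flat items arrived so far.

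Summing both contributions, the cumulative repacking up to time~$t$ is at most
\[
	\bigO{\omega h_B^2/\epsilon} \cdot \left(\SIZE(I_L^b(t))/\epsilon^2 + \SIZE(I_L^f(t))\right)
	\;\leq\; \bigO{\omega h_B^2/\epsilon^3} \cdot \SIZE(I(t)).
\]
Equivalently, setting $\mu = \Theta(\omega h_B^2/\epsilon^3)$ and $\Phi(t) = \Phi(t-1) + \mu \SIZE(i_t) - \Repack(t)$ with $\Phi(0)=0$, the above counting shows $\Phi(t)\geq 0$ at every step, and Lemma~\ref{lemma:amortizedAnalysisPotentialFunction} yields the claimed amortized migration factor.

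The main obstacle is the flat-item bookkeeping: one must carefully use the choice $y = 2+\epsilon$ so that overflow at category~$l$ certifies at least one unit of accumulated unspent size (independent of $l$, with the two factors $2^l$ and $2^{-(l+1)}$ cancelling), and this argument must be preserved under arbitrary interleavings of big and flat arrivals --- in particular one has to note that a big-item-induced K\&R call also empties the F-buffer, which restarts the flat-item accumulation without breaking the amortization. The remaining ingredients --- the per-repack cost bound and the direct charge against the $\epsilon^2$ minimum size of big items --- are routine.
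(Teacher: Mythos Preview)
Your proof is correct and follows essentially the same approach as the paper's own argument: narrow items never repack; each big-item insertion triggers a K\&R call of cost at most $\SIZE(I_L(t)) \leq \tfrac{4\omega h_B}{\epsilon}(h_B+1)$ and is paid by its minimum size $\epsilon^2$; and a flat-item overflow at category~$l$ is witnessed by at least $2^l\cdot 2^{-(l+1)}\cdot(y-\epsilon)=1$ units of fresh flat-item size in the buffer, giving the same amortized factor via the potential framework. The only presentational difference is that the paper verifies $\Phi(t)\geq 0$ case-by-case at each step, whereas you aggregate repack counts globally; the underlying charging scheme is identical.
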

\begin{proof}
	Let $\mu = \frac{4 \omega h_B}{\epsilon^3} (h_B + 1)$.
	For flat and narrow items we again use the notion of repacking potential.

	\paragraph*{Case: $\mathbf{i_t}$ is big}
	Since the size of the instance is bounded from above,
	$\Repack(t) \allowbreak \leq \SIZE(I_L(t)) \allowbreak \leq \frac{4 \omega h_B}{\epsilon} (h_B + 1)$.
	With the minimum size $\epsilon^2$ of big items, the migration factor
	$\mu = \bigO{\frac{\omega {h_B}^2}{\epsilon^3}}$ follows.
	
	\paragraph*{Case: $\mathbf{i_t}$ is flat}
	When $i_t$ can be placed without buffer overflow, no repacking is performed.
	Now suppose that $i_t$ can not be placed in any buffer segment of category $l$.
	Then, each of the $2^l$ slots of category $l$ is filled with items of width
	at least $2^{-(l+1)}$ and total height greater than $y - \epsilon = 2$,
	hence there are items of total size at least
	$2^l (2^{-(l+1)} \cdot 2) = 1$.
	None of the items from the buffer used its repacking potential so far,
	thus $\Phi(t) \geq 1 \cdot \mu - \Repack(t)$.
	We have 
	$\Repack(t) \leq \SIZE(I_L(t)) 
	\leq \frac{4 \omega h_B}{\epsilon} (h_B + 1) \leq \mu$ and 
	thus $\Phi(t) \geq 0$.
	After the repacking, all slots of each category in the buffer are empty.
	
	\paragraph*{Case: $\mathbf{i_t}$ is narrow}
	The narrow item gets placed into an existing shelf or into a new shelf
	on top of the packing. In both cases, no repacking is performed and
	thus nothing is to show.
\end{proof}

The next theorem restates the main result if this work (Thereom~\ref{abs:theo:mainResult}) in more precise
terms.

\begin{theorem}
	\label{theo:overallMigrationFactor}
	Algorithm\nobreakspace \ref {alg:insertGeneralSetting} is a robust AFPTAS for the online strip packing
	problem with amortized migration factor
	$ \mu = \bigO{\frac{1}{\epsilon^{10}} 
		\left( \log \frac{1}{\epsilon} \right)^2 }\,. $
\end{theorem}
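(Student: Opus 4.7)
The plan is to assemble this theorem from the lemmas already established in Sections~\ref{sec:narrowItems} and \ref{sec:migrationAnalysis}: the approximation ratio and running-time part are essentially a restatement of Theorem~\ref{lemma:packingHeightOverallAlgo} together with the running-time discussion in the proof of Theorem~\ref{theo:heightContainerPackingAfterImprove}, so the substantive work is only to combine the per-item migration bounds and substitute the asymptotic values from Table~\ref{tab:boundsForValues}. First I would invoke Theorem~\ref{lemma:packingHeightOverallAlgo} to get a packing of height $(1+\bigO{\epsilon})\OPT(I) + \mathrm{poly}(1/\epsilon)$, which yields the AFPTAS (hence also robust) guarantee with asymptotic ratio $1+\bigO{\epsilon}$. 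The running time is polynomial in $|I(t)|$ and $1/\epsilon$ because the dominant subroutine is \textsc{Improve}, applied to an LP with $\bigO{\tfrac{1}{\epsilon}\log\tfrac{1}{\epsilon}}$ rows, as already noted at the end of the proof of Theorem~\ref{theo:heightContainerPackingAfterImprove}.

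For the migration bound I would use the potential-function view of Lemma~\ref{lemma:amortizedAnalysisPotentialFunction}: if the per-item amortized bound $\mu_{i_t}$ in each of the three item categories and each of the two modes is at most some global $\mu$, then choosing this $\mu$ as the algorithm's migration factor keeps the potential non-negative at every step, hence by Lemma~\ref{lemma:amortizedAnalysisPotentialFunction} the total repacking up to time $t$ is bounded by $\mu\cdot\SIZE(I(t))$. It then suffices to take the maximum over the cases.

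In the semi-online mode Lemma~\ref{lemma:migrationSemiOnline} gives $\mu = \bigO{\omega h_B^2/\epsilon^3}$. In the online mode Lemmas~\ref{lemma:migrationFactorBigItem}, \ref{lemma:migrationFactorFlatItem}, and \ref{lemma:migrationFactorNarrow} give, respectively, $\bigO{q h_B d^2/\epsilon^3}$ for big items, $\bigO{q h_B d^2/\epsilon^4}$ for flat items, and $\bigO{1/\epsilon^2}$ for narrow items. Substituting the asymptotic bounds from Table~\ref{tab:boundsForValues}, namely $\omega = \bigO{\log\tfrac{1}{\epsilon}}$, $d = \bigO{\tfrac{1}{\epsilon}\log\tfrac{1}{\epsilon}}$, $q = \bigO{1/\epsilon^2}$ and $h_B = \bigO{1/\epsilon^2}$, one sees that the semi-online bound becomes $\bigO{\tfrac{1}{\epsilon^7}\log\tfrac{1}{\epsilon}}$ and the online bounds become $\bigO{\tfrac{1}{\epsilon^9}(\log\tfrac{1}{\epsilon})^2}$ for big items, $\bigO{\tfrac{1}{\epsilon^{10}}(\log\tfrac{1}{\epsilon})^2}$ for flat items, and $\bigO{1/\epsilon^2}$ for narrow items.

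The flat case therefore dominates, yielding the claimed overall amortized migration factor $\mu = \bigO{\tfrac{1}{\epsilon^{10}}(\log\tfrac{1}{\epsilon})^2}$. I do not expect any real obstacle here: all the conceptual work has been done in Sections~\ref{sec:OperationsBigFlat}--\ref{sec:migrationAnalysis}, and the only point requiring a little care is to make sure the three per-item potentials (for flat items in the F-buffer, for narrow items in the N-buffer, and for the semi-online buffer) are consistent with a single global $\mu$, which is automatic because each lemma proves the non-negativity of $\Phi$ using only the size charged to the currently inserted item.
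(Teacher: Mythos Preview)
Your proposal is correct and follows essentially the same approach as the paper: invoke Theorem~\ref{lemma:packingHeightOverallAlgo} for the approximation guarantee, defer the running time to the \textsc{Improve} analysis in Theorem~\ref{theo:heightContainerPackingAfterImprove}, and then take the maximum of the per-case migration bounds from Lemmas~\ref{lemma:migrationFactorBigItem}--\ref{lemma:migrationSemiOnline} after plugging in the asymptotics from Table~\ref{tab:boundsForValues}. The paper's proof is slightly terser (it omits the explicit potential-function justification you give), but the structure and the case-by-case bounds are identical, with the flat-item case dominating.
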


\begin{proof}
	The approximation guarantee is already shown in Theorem\nobreakspace \ref {lemma:packingHeightOverallAlgo},
	and the running time is clearly dominated
	by \textsc{Improve}, see the proof of Theorem\nobreakspace \ref {theo:heightContainerPackingAfterImprove}.
	The migration factor is analyzed in 
	Lemma\nobreakspace \ref {lemma:migrationSemiOnline} for the semi-online mode
	and in Lemmas\nobreakspace  \ref {lemma:migrationFactorBigItem} to\nobreakspace  \ref {lemma:migrationFactorNarrow} for the online mode. Using the asymptotic bounds for the values $\omega, q, d, h_B$
	given in Table\nobreakspace \ref {tab:boundsForValues} we get the following
	amortized migration factors:
	
	\begin{tabbing}
		\hspace{5cm}\=\hspace{4cm}\=\kill
		Semi-online mode	\>  
		$\displaystyle{\bigO{\frac{1}{\epsilon^7} \log \frac{1}{\epsilon}}}$\> 
		Lemma\nobreakspace \ref {lemma:migrationSemiOnline}
		\\[1em]
		Online mode, big item	\>  
		$\displaystyle{\bigO{\frac{1}{\epsilon^9} 
				\log \left( \frac{1}{\epsilon} \right)^2}}$\> 
		Lemma\nobreakspace \ref {lemma:migrationFactorBigItem}
		\\[1em]
		Online mode, flat item	\>  
		$\displaystyle{\bigO{\frac{1}{\epsilon^{10}} 
				\log \left( \frac{1}{\epsilon} \right)^2}}$\> 
		Lemma\nobreakspace \ref {lemma:migrationFactorFlatItem}
		\\[1em]
		Online mode, narrow item	\> 
		$\displaystyle{\bigO{\frac{1}{\epsilon^2}}}$\> 
		Lemma\nobreakspace \ref {lemma:migrationFactorNarrow}
	\end{tabbing} 
	Thus, the insertion of a flat item via Algorithm\nobreakspace \ref {alg:insertFlat} 
	dominates the overall migration factor.
\end{proof}

\paragraph*{Acknowledgement} 
We would like to thank Marten Maack for helpful discussions on the lower bound
proven in Section~\ref{abs:sec:lowerBound}.

\bibliography{literature}

\newpage

\appendix

\section{Proof of Theorem~\ref{theo:improveApplication}}
\label{app:improve}
%auto-ignore
% !TeX root = ../main.tex

In order to proof Theorem~\ref{theo:improveApplication}, we need two other
results from \cite{jansen2013robust}.
For a vector $x$, let $\nnz{x}$ denote the number of non-zero components.
\begin{theorem}[Theorem 8 from \cite{jansen2013robust}]
	\label{theo:improve2}
	Let $\delta > 0, \alpha \in \NN$, and $x$ be a solution of the LP with 
	\begin{align}
	\norm{x} &\leq (1+\delta) \LIN \,,		\label{eq:improve:theo2A} \\
	\norm{x} &\geq 2 \alpha \left( \tfrac{1}{\delta}+1 \right ) \,. \label{eq:improve:theo2B}
	\intertext{Let $y$ be an integral solution of the LP and $D \geq \delta \LIN$ such that}
	\norm{y} &\leq \LIN+ 2D	 \,, \label{eq:improve:theo2C} \\
	\norm{y} &\geq (m+2) \left(\tfrac{1}{\delta}+2 \right) \,. \label{eq:improve:theo2D}
	\intertext{Further,}
	\nnz{x} &= \nnz{y} \,, \label{eq:improve:theo2E} \\
	x_i &\leq y_i \hspace*{12pt} \forall 1 \leq i \leq n \,. \label{eq:improve:theo2F}
	\end{align}
	The algorithm \textsc{Improve}($\alpha$,$x$,$y$) then returns a
	fractional solution $x'$ with
	$\norm{x'} \leq (1+\delta) \LIN - \alpha$ and an integral solution
	$y'$ where one of the two properties hold:
	$\norm{y'} = \norm{y} - \alpha$ or $\norm{y'} = \norm{x'} + D$.
	Further, $\nnz{x'},\nnz{y'} \leq D$, and
	the	distance between $y'$ and $y$ is bounded by 
	$\norm{y'-y} \in \bigO{\frac{m+\alpha}{\delta}}$.
\end{theorem}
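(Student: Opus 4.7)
My strategy is to invoke a sparse LP solver of Plotkin--Shmoys--Tardos / multiplicative-weights type on a \emph{reduced} covering LP whose optimum is at most $\LIN-\alpha$, and then to round the resulting sparse fractional solution $x'$ into an integral $y'$ by modifying only few coordinates of the existing $y$. Hypothesis \textup{(\ref{eq:improve:theo2B})}, $\norm{x}\geq 2\alpha(1/\delta+1)$, is present precisely so that $x$ has enough slack mass to be truncated by $\alpha$ units while remaining a warm start for the reduced instance, and hypothesis \textup{(\ref{eq:improve:theo2D})} plays the analogous role on the integral side. The constant $D\geq\delta\LIN$ will play two roles: it bounds the support size produced by the LP solver, and it provides the slack needed when we re-round from $x'$ in the second alternative for $y'$.

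\paragraph*{Main steps.}
First I would define the reduced covering LP by shaving $\alpha$ units from the right-hand side, equivalently by picking $\alpha$ columns in the support of $x$ and subtracting one unit from each coordinate; the truncated vector $\tilde x$ is feasible for the reduced LP and has $\norm{\tilde x}\leq(1+\delta)\LIN-\alpha$, so the reduced LP optimum is at most $\LIN-\alpha$. Second, I would run a block-solver based approximate LP algorithm (in the Plotkin--Shmoys--Tardos / Garg--K\"onemann sense) on the reduced LP, warm-started from $\tilde x$, until it returns a $(1+\delta)$-approximate $x'$ with $\norm{x'}\leq(1+\delta)\LIN-\alpha$ and support $\nnz{x'}\leq\bigO{m/\delta}\leq D$. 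Third, I would construct $y'$ from $y$ by coordinate-wise reconciliation: on indices where $x'_i>0$ and $y_i>0$ I round $x'_i$ up to $\lceil x'_i\rceil$, using $y_i$ as an upper bound thanks to \textup{(\ref{eq:improve:theo2F})}; on the symmetric difference of supports I apply a Karmarkar--Karp-style rounding of $x'$ restricted to the new columns (the $D$ slack absorbs the resulting additive error). Depending on how the shaved columns are handled, this yields either $\norm{y'}=\norm{y}-\alpha$ (when the $\alpha$ removed units are simply deleted from $y$) or $\norm{y'}=\norm{x'}+D$ (when $y$ is replaced from scratch by the Karmarkar--Karp rounding of $x'$), giving the two alternative conclusions of the theorem.

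\paragraph*{Main obstacle.}
The delicate point is the distance bound $\norm{y'-y}\in\bigO{(m+\alpha)/\delta}$ together with the joint sparsity $\nnz{x'}=\nnz{y'}\leq D$ and the monotonicity $x'_i\leq y'_i$. For this I would exploit the fact that the multiplicative-weights solver only ever activates $\bigO{m/\delta}$ distinct columns over its entire run, so the symmetric difference between the supports of $x$ and $x'$ is of that order and every such coordinate can shift $y$ by at most a constant; the $\alpha$ term in the distance bound is then accounted for by the shaving step. The real bookkeeping challenge is to make the rounding from $x'$ to $y'$ respect all three constraints simultaneously -- not exceed the fractional cap $x'_i\leq y'_i$, not introduce columns outside the support generated by the LP solver, and produce integer values summing to exactly $\norm{y}-\alpha$ or $\norm{x'}+D$ -- and I expect this case analysis on the columns in $\mathrm{supp}(x)\cap\mathrm{supp}(x')$, $\mathrm{supp}(x)\setminus\mathrm{supp}(x')$, and $\mathrm{supp}(x')\setminus\mathrm{supp}(x)$ to be where the proof spends most of its technical effort.
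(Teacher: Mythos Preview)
The paper does not prove this statement at all: Theorem~\ref{theo:improve2} is quoted verbatim as ``Theorem~8 from \cite{jansen2013robust}'' and used as a black box, together with Corollary~\ref{cor:improve} (Corollary~9 from the same reference), to derive Theorem~\ref{theo:improveApplication} in Appendix~\ref{app:improve}. There is therefore no proof in this paper against which your proposal can be compared; the authors simply import the result and its companion corollary from the Jansen--Klein robust bin packing paper.

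As to whether your sketch is a plausible reconstruction of the original proof: the broad shape is right --- \textsc{Improve} in \cite{jansen2013robust} is indeed built on an approximate LP solver of the max--min resource sharing type (in the Grigoriadis--Khachiyan / Plotkin--Shmoys--Tardos family) that produces sparse solutions with $\bigO{m/\delta}$ support, followed by a careful integral reconciliation with the old $y$. However, two of your concrete steps would not go through as stated. First, your ``shaving'' step of subtracting one unit from $\alpha$ coordinates of $x$ presupposes $x_i\geq 1$ on those coordinates, which need not hold for a fractional $x$; the actual argument does not form a reduced LP this way but rather runs a bounded number of improvement iterations on the original LP, each of which decreases $\norm{x}$ by a controlled amount, and the lower bound \textup{(\ref{eq:improve:theo2B})} is what guarantees that $\alpha$ such iterations are possible before the solution becomes too small relative to $\LIN$. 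Second, the alternative $\norm{y'}=\norm{x'}+D$ does not arise from a ``from scratch'' Karmarkar--Karp rounding of $x'$ but from the case where the coordinate-wise reconciliation already drives $\norm{y'}$ below $\norm{x'}+D$, at which point one simply stops; the two alternatives are two exit conditions of a single loop, not two separate rounding strategies. Your identification of the main obstacle --- simultaneously maintaining $x'_i\leq y'_i$, the common sparse support, and the distance bound --- is accurate, and that is indeed where the original proof spends its effort.
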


\begin{corollary}[Corollary 9 from \cite{jansen2013robust}]
	\label{cor:improve}
	Let $\delta' \geq \delta$ and $x$ be a solution of the LP with
	\begin{align}
	\norm{x} &= (1+\delta') \LIN  \,, \label{eq:improve:cor2G} \\
	\norm{x} &\geq 2 \alpha \left( \tfrac{1}{\delta}+1 \right) \,. \label{eq:improve:cor2H}
	\intertext{Further, for some $D \geq \delta' \LIN$ let}
	\norm{y} &\leq \LIN + 2D \,, \label{eq:improve:cor2I} \\
	\norm{y} &\geq (m+2) \left( \tfrac{1}{\delta}+2 \right) \,, \label{eq:improve:cor2J}
	\intertext{and it holds that}
	\nnz{x} &= \nnz{y} \,, \label{eq:improve:cor2K}\\
	x_i &\leq y_i \hspace*{12pt} \forall 1 \leq i \leq n \,. \label{eq:improve:cor2L}
	\end{align}
	Then, algorithm \textsc{Improve}($\alpha$,$x$,$y$) returns a fractional solution
	$x'$ with $\norm{x'} \leq \norm{x} - \alpha = (1+\delta') \LIN - \alpha$
	and an integral solution $y'$ where one of the two properties hold:
	$\norm{y'} = \norm{y} - \alpha$ or 
	$\norm{y'} = \norm{x} - \alpha + D$.
	Further, $\nnz{x'},\nnz{y'} \leq D$, and
	the distance between $y'$ and $y$ is bounded by 
	$\norm{y'-y} \in \bigO{\frac{m+\alpha}{\delta}}$.
\end{corollary}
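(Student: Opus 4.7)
The plan is to assemble Theorem~\ref{theo:overallMigrationFactor} from three already-proved components: the approximation guarantee, the running-time bound, and a case analysis of the amortized migration factor across the four behaviours of Algorithm~\ref{alg:insertGeneralSetting}. No new technical machinery is needed; the task is purely to combine and simplify.

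For approximation quality and running time, I would simply cite Theorem~\ref{lemma:packingHeightOverallAlgo}, which already delivers height $(1+\bigO{\epsilon})\OPT(I) + \mathrm{poly}(1/\epsilon)$, i.e.\ the asymptotic ratio $1+\bigO{\epsilon}$, and note that in both modes the per-arrival work is dominated by LP~\ref{lp:binpacking} and the \textsc{Improve} subroutine, whose running time is polynomial in $1/\epsilon$ and in $|I(t)|$ by the analysis given at the end of the proof of Theorem~\ref{theo:heightContainerPackingAfterImprove}. Robustness is then immediate, since none of the migration constants depend on $|I|$.

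The main content is the migration bound. The global potential $\Phi(t)$ of Lemma~\ref{lemma:amortizedAnalysisPotentialFunction} is additive: whenever each per-type insertion routine keeps its own local share of $\Phi$ non-negative, the aggregate $\Phi$ is also non-negative. Hence it suffices to take the maximum of the four previously established per-type factors — Lemma~\ref{lemma:migrationSemiOnline} for the semi-online mode, and Lemmas~\ref{lemma:migrationFactorBigItem}, \ref{lemma:migrationFactorFlatItem}, \ref{lemma:migrationFactorNarrow} for big, flat and narrow items in the online mode — and declare that common factor to be $\mu$.

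The remaining step is the asymptotic substitution using the values of $\omega,d,q,h_B,p$ collected in Table~\ref{tab:boundsForValues}. Plugging in $\omega = \bigO{\log(1/\epsilon)}$, $h_B = \bigO{1/\epsilon^2}$, $q = \bigO{1/\epsilon^2}$, and $d = \bigO{(1/\epsilon)\log(1/\epsilon)}$ yields $\bigO{\epsilon^{-7}\log(1/\epsilon)}$ in the semi-online case, $\bigO{\epsilon^{-9}\log^2(1/\epsilon)}$ for big items, $\bigO{\epsilon^{-2}}$ for narrow items, and $\bigO{\epsilon^{-10}\log^2(1/\epsilon)}$ for flat items. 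The flat-item bound dominates, giving the claimed $\mu = \bigO{\epsilon^{-10}\log^2(1/\epsilon)}$. The conceptual obstacle — already absorbed into Lemma~\ref{lemma:migrationFactorFlatItem} — is that flat items have no lower bound on their individual size, so the cost of an F-buffer overflow cannot be charged to the triggering item alone. Amortisation against the whole filled buffer is precisely what introduces the extra factor $1/\epsilon$ in the flat case and hence determines the final exponent $10$, whereas for big items the minimum size $\epsilon^2$ already absorbs $\varGamma$ without amortisation, and narrow items benefit from the shelf structure.
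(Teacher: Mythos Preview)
Your proposal proves the wrong statement. The statement in question is Corollary~\ref{cor:improve}, which is a cited result from \cite{jansen2013robust} about the \textsc{Improve} procedure: given fractional and integral LP solutions $x,y$ satisfying the listed hypotheses, \textsc{Improve}($\alpha$,$x$,$y$) returns $x'$ with $\norm{x'}\le\norm{x}-\alpha$ and $y'$ with either $\norm{y'}=\norm{y}-\alpha$ or $\norm{y'}=\norm{x}-\alpha+D$, together with the nnz and distance bounds. Your write-up instead argues Theorem~\ref{theo:overallMigrationFactor} (the overall amortized migration factor of Algorithm~\ref{alg:insertGeneralSetting}), citing the per-type migration lemmas and plugging in the Table~\ref{tab:boundsForValues} asymptotics. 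None of that has anything to do with the content of Corollary~\ref{cor:improve}.

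Note also that the paper does not prove Corollary~\ref{cor:improve} at all: it is quoted verbatim as ``Corollary~9 from \cite{jansen2013robust}'' and used as a black box inside the proof of Theorem~\ref{theo:improveApplication}. So there is no paper-internal proof to compare against; the corollary is an imported result whose proof lives in the cited reference.
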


\subsection*{Proof of Theorem~\ref{theo:improveApplication}}
\begin{proof}
	
	By \MakeUppercase Theorem~\ref {theo:approxGuarantee}, we have
	$\OPT(\CR{\con}{R}) \leq (1+\epsilon') \OPT(I_L) + z$
	for $\epsilon' = 4 \epsilon$ and $z = \bigO{1 / \epsilon^4}$.
	Let $\Delta = (1 + \epsilon') (1+\delta) - 1$
	and $m \leq \frac{16 \omega}{\epsilon} + 2 \omega$ be the number of groups
	(Lemma\nobreakspace \ref {lemma:NumberOfGroups}).
	Finally, let $D = \Delta \OPT(I_L) + m + (1+\delta)z$.
	With the above definitions, it holds that
	\begin{align}
	\label{eq:improveOptCrOptIlA}
	(1+\epsilon) \OPT(\CR{\con}{R}) \leq (1+\Delta) \OPT(I_L) + (1+\epsilon)z
	\,.
	\end{align}
	Depending on the case $\delta' \leq \delta$ or $\delta' > \delta$,
	we apply Theorem\nobreakspace \ref {theo:improve2} or \MakeUppercase Corollary\nobreakspace \ref {cor:improve}.
	\paragraph*{Case: $\delta' \leq \delta$}
	At first, we check the prerequisites of Theorem\nobreakspace \ref {theo:improve2}.
	Properties \textup {(\ref {eq:improve:theo2B})} and\nobreakspace   \textup {(\ref {eq:improve:theo2D})} to\nobreakspace  \textup {(\ref {eq:improve:theo2F})} 
	hold by condition.
	Property \textup {(\ref {eq:improve:theo2A})} holds since
	$$ \norm{x} = (1+\delta') \LIN(\CR{\con}{R}) \leq (1+\delta) \LIN(\CR{\con}{R}) 
	\,. $$
	For eq.\nobreakspace \textup {(\ref {eq:improve:theo2C})}, we first show that $D$ can be bounded
	from below:
	\begin{align*}
	D &=     \Delta \OPT(I_L) + m + (1+\delta) z \\
	&=     ((1+\delta)(1+\epsilon')-1) \OPT(I_L) + m + (1+\delta) z \\
	&=     ((1+\delta)(1+\epsilon')-1) 
	\left( \frac{\OPT(\CR{\con}{R})}{1+\epsilon'} - \frac{z}{1+\epsilon'} \right) 
	+ m + (1+\delta) z \\
	&=     (1+\delta) \OPT(\CR{\con}{R}) - \frac{\OPT(\CR{\con}{R})}{1+\epsilon'} + \frac{z}{1+\epsilon'} + m \\
	&>     \delta \OPT(\CR{\con}{R}) \\
	&\geq  \delta \LIN(\CR{\con}{R})
	\end{align*}
	Now we can show the second part of eq.\nobreakspace \textup {(\ref {eq:improve:theo2C})}:
	\begin{align*}
	\norm{y} &\leq (1+2\Delta) \OPT(I_L) + 2 (1+\delta)z + m \\
	&\leq 2 \Delta \OPT(I_L) + \OPT(\CR{\con}{R}) + 2 (1+\delta) z + m \\
	&\leq 2 \Delta \OPT(I_L) + \LIN(\CR{\con}{R}) + m + 2 (1+\delta) z + m \\
	&=    2(\Delta \OPT(I_L) + m) + \LIN(\CR{\con}{R}) + 2 (1+\delta) z \\
	&=    2D - 2 (1+\delta)z + \LIN(\CR{\con}{R}) + 2(1+\delta) z \\
	&=    \LIN(\CR{\con}{R}) + 2D
	\end{align*}
	Hence, Theorem\nobreakspace \ref {theo:improve2} is applicable.
	Therefore, \textsc{Improve}($\alpha$,$x$,$y$) returns solutions $x'$, $y'$ such that
	\begin{align*}
	\norm{x'} &\leq (1+\delta) \LIN(\CR{\con}{R}) - \alpha \\
	&\leq (1+\delta) \OPT(\CR{\con}{R}) - \alpha \\
	&\leq (1+\delta) ((1+\epsilon') \OPT(I_L) + z) - \alpha \\
	&=    (1+\delta)(1+\epsilon') \OPT(I_L) + (1+\delta)z - \alpha \\
	&=    (1+\Delta) \OPT(I_L) + (1+\delta)z - \alpha
	\,.
	\end{align*}
	Moreover, in the case $\norm{y'} = \norm{x'} + D$ it holds that
	\begin{align*}
	\norm{y'} &=    \norm{x'} + D \\
	&\leq (1+\Delta) \OPT(I_L) + (1+\delta)z - \alpha + D \\
	&=    (1+\Delta) \OPT(I_L) + (1+\delta)z - \alpha
	+ \Delta \OPT(I_L) + m + (1+\delta)z \\
	&=    (1+2\Delta) \OPT(I_L) + 2(1+\delta)z + m - \alpha
	\,.
	\end{align*}
	In the case $\norm{y'} = \norm{y} - \alpha$, the claim follows immediately
	by condition \textup {(\ref {eq:improve:theo3e})}.
	The remainder of the claim follows by the corresponding
	implications of Theorem\nobreakspace \ref {theo:improve2}.

	\paragraph*{Case: $\delta' > \delta$}
	Here, we apply \MakeUppercase Corollary\nobreakspace \ref {cor:improve} and again show the
	prerequisites first. By condition, Properties 
	\textup {(\ref {eq:improve:cor2G})},  \textup {(\ref {eq:improve:cor2H})}, and\nobreakspace   \textup {(\ref {eq:improve:cor2J})} to\nobreakspace  \textup {(\ref {eq:improve:cor2L})}  hold and it remains to show 
	eq.\nobreakspace \textup {(\ref {eq:improve:cor2I})}. Note that $\norm{y} \leq \LIN(\CR{\con}{R}) + 2D$ 
	holds for the same reason as in the case $\delta' \leq \delta$.
	Therefore, we only have to show $D \geq \delta' \LIN(\CR{\con}{R})$:
	\begin{align*}
	\norm{x} &=    (1+ \delta') \LIN(\CR{\con}{R}) \\
	&\leq (1+ \Delta) \OPT(I_L) + (1+\delta) z \\
	&=    \OPT(I_L) + \Delta \OPT(I_L) + (1+\delta)z \\
	&\leq \LIN(I_L) + m + \Delta \OPT(I_L) + (1+\delta)z \\
	&\leq \LIN(\CR{\con}{R}) + m + \Delta \OPT(I_L) + (1+\delta)z \\
	&=     \LIN(\CR{\con}{R}) + D
	\end{align*}
	Hence, $D \geq  (1+ \delta') \LIN(\CR{\con}{R}) - \LIN(\CR{\con}{R})
	= \delta' \LIN(\CR{\con}{R}) $.
	Thus, we can apply \MakeUppercase Corollary\nobreakspace \ref {cor:improve}.
	\textsc{Improve}($\alpha$,$x$,$y$) returns solutions $x'$, $y'$ such that
	$$
	\norm{x'} \leq \norm{x} - \alpha \leq (1+\Delta) \OPT(I_L) + (1+\delta)z - \alpha
	\,. $$
	For the integral solution $y'$, in the case $\norm{y'} = \norm{x} - \alpha + D$
	it holds that
	\begin{align*}
	\norm{y'} &=    \norm{x} - \alpha + D \\
	&\leq (1+\Delta) \OPT(I_L) + (1+\delta) z + D - \alpha \\
	&=	(1+\Delta) \OPT(I_L) + (1+\delta) z + \Delta \OPT(I_L) + m
	+ (1+\delta) z - \alpha \\
	&=	(1+2\Delta) \OPT(I_L) + 2(1+\delta)z + m - \alpha
	\,.
	\end{align*}
	And in the case $\norm{y'} = \norm{y} - \alpha$, condition 
	\textup {(\ref {eq:improve:theo3e})} gives the claim directly.
	Again, the remaining properties of the claim follow by the corresponding
	implications of \MakeUppercase Corollary\nobreakspace \ref {cor:improve}.
\end{proof}

\end{document}